\NewDocumentCommand{\evalat}{sO{\big}mm}{%
  \IfBooleanTF{#1}
   {\mleft. #3 \mright|_{#4}}
   {#3#2|_{#4}}%
}
\def\xb{\mathbf{x}}
\def\yb{\mathbf{y}}
\def\vb{\mathbf{v}}
\def\fb{\mathbf{f}}
\def\ub{\mathbf{u}}
\def\thetab{\boldsymbol{\theta}}
\def\simind{\stackrel{\mbox{\scriptsize{ind}}}{\sim}}
\def\simiid{\stackrel{\mbox{\scriptsize{iid}}}{\sim}}
\def\thetab{{\boldsymbol \theta}}
\def\eb{\mathbf{e}}
\def\vb{\mathbf{v}}
\def\xb{\mathbf{x}}
\def\kb{\mathbf{k}}
\def\Yb{\mathbf{Y}}
\def\Wb{\mathbf{W}}
\def\yb{\mathbf{y}}
\def\gb{\mathbf{g}}
\def\hb{\mathbf{h}}
\def\rb{\mathbf{r}}
\def\Mb{\mathbf{M}}
\def\Hb{\mathbf{H}}
\def\epsilonb{{\boldsymbol\epsilon}}
\def\vb{\mathbf{v}}
\def\ub{\mathbf{u}}
\def\Db{\mathbf{D}}
\def\thetab{\boldsymbol{\theta}}
\newcommand{\PP}{\mathds{P}}
\newcommand{\ddr}{\mathrm{d}}
\newcommand{\EE}{\mathds{E}}
\newtheorem{thm}{Theorem}[section]
\newtheorem{definition}[thm]{Definition}
\newtheorem{lem}[thm]{Lemma}
\newtheorem{prp}[thm]{Proposition}
\newtheorem{remark}[thm]{Remark}
\providecommand{\keywords}[1]
{
  \small	
  \textbf{\textit{Keywords:}} #1
}
\begin{document}

\title{Quasi-Bayes empirical Bayes: a sequential approach to the Poisson compound decision problem}

% \author{
% Emanuele Dolera\\
% \texttt{emanuele.dolera@unipv.it}\\
% Department of Mathematics\\ 
% University of Pavia, Italy\\
% \and
% Stefano Favaro\\
% \texttt{stefano.favaro@unito.it}\\
% Department of Economics and Statistics\\ 
% University of Torino and Collegio Carlo Alberto, Italy\\
% \and
% Stefano Peluchetti\\
% \texttt{speluchetti@cogent.co.jp}\\
% Cogent Labs, Tokyo, Japan
% }

\author[1]{Stefano Favaro\thanks{stefano.favaro@unito.it}}
\author[2]{Sandra Fortini\thanks{sandra.fortini@unibocconi.it}}
\affil[1]{\small{Department of Economics and Statistics, University of Torino and Collegio Carlo Alberto, Italy}}
\affil[2]{\small{Department of Decision Sciences, Bocconi University, Italy}}

\maketitle

\begin{abstract}
The Poisson compound decision problem is a long-standing problem is statistics, for which empirical Bayes methods are commonly used to estimate Poisson means in static or batch settings. We consider this problem in a streaming, or online, framework. Building on a quasi-Bayesian approach based on Newton’s algorithm, we develop a sequential estimate that is easy to evaluate, computationally efficient, and has constant per-observation cost as the data accrue. We establish frequentist guarantees for the proposed estimate, including consistency and asymptotic optimality, with optimality understood as vanishing excess Bayes risk, or regret. Empirical performance is assessed through simulation studies and comparisons with benchmark procedures.
\end{abstract}

\keywords{Empirical Bayes; Newton's algorithm; Poisson compound decision; quasi-Bayes; streaming data}

%%%%%%%%%%%%%%%%%%%%%%%%%%%%%%%%
%%%%%%%%%%%%%%%%%%%%%%%%%%%%%%%%
%%%%%%%%%%%%%%%%%%%%%%%%%%%%%%%%
%%%%%%%%%%%%%%%%%%%%%%%%%%%%%%%%

\section{Introduction}\label{sec1}
\subsection{Background and motivation}

Given $n\geq1$ observations modeled as independent Poisson random variables $Y_{1},\ldots,Y_{n}$ with means $\theta_{1},\ldots,\theta_{n}$, respectively, the Poisson compound decision problem concerns the estimation of $(\theta_{1},\ldots,\theta_{n})$ under squared-error loss. Empirical Bayes provides a general approach for compound decision problems \citep{Rob(56),Zha(03)}. If the $\theta_{i}$'s are i.i.d. as a prior $G$ on $\Theta\subseteq\mathbb{R}^{+}$, then the best estimate of $\theta_{i}$ is the posterior mean or Bayes estimate, namely
\begin{equation}\label{post_mean}
\hat{\theta}_{G}(y)=E_{G}[\theta_{i}\mid Y_{i}=y]=\frac{\int_{\Theta}\theta\frac{\text{e}^{-\theta}\theta^{y}}{y!}G(\ddr\theta)}{\int_{\Theta}\frac{\text{e}^{-\theta}\theta^{y}}{y!}G(\ddr\theta)}=(y+1)\frac{p_{G}(y+1)}{p_{G}(y)}\qquad y\in\mathbb{N}_{0},
\end{equation}
where $p_{G}$ denotes the probability mass function of the $Y_{i}$'s. Since $G$ is generally unknown, empirical Bayes proceeds by estimating either $p_G$, in the so-called \(f\)-modeling strategy, or $G$, in the so-called $g$-modeling strategy \citep{Efr(14),Efr(19)}.

A prominent $f$-modeling strategy is Robbins' method \citep{Rob(56)}, which replaces $p_{G}$ in \eqref{post_mean} with the empirical distribution of the $Y_{i}$'s. Despite its conceptual appeal and computational simplicity, $f$-modeling can be numerically unstable and lacks robustness. In particular, it is sensitive to outlying observations, or more precisely to counts that occur rarely, which may yield exceptionally small or large estimates \citep{Bro(13),She(24)}.

Examples of $g$-modeling include estimates of $G$ in \eqref{post_mean} based on maximum likelihood and minimum distance methods \citep{Jan(24)}. A fully Bayesian alternative places a suitable hyperprior on $G$, yielding what is termed Bayes empirical Bayes \citep{Dee(81),Efr(19)}. Compared with $f$-modeling, $g$-modeling typically produces more accurate estimates and allows prior information to be incorporated more naturally. Its main drawback is computational cost, especially in nonparametric and high-dimensional settings \citep{Jan(23),Teh(25)}.

The Poisson compound decision problem provides a flexible framework for modeling count data and has been successfully applied in settings where static or batch processing is appropriate, including  microarrays \citep{Efr(03)}, insurance analytics \citep[Chapter 6]{Efr(21)}, deconvolution \citep{Efr(16)}, data confidentiality \citep{Zha(05)} and sports analytics \citep{Jan(24)}. The growing prevalence of streaming data in some of these areas, as well as in fields such as social media, biosurveillance, healthcare and e-commerce, motivates extension of this framework to an online setting, where timely inference is required. Although existing empirical Bayes methods could in principle be adapted to streaming data, they typically rely on batch optimization or posterior computation and are therefore ill suited to sequential implementation. To our knowledge, no sequential methodology for the Poisson compound decision problem currently offers either theoretical guarantees or convincing empirical support.

\subsection{Preview of our contributions}

We propose a sequential $g$-modeling approach to the Poisson compound decision problem. Starting from an initial guess $G_0$ for $G$, for each $n\geq1$ the method updates $G_{n-1}$ after observing $Y_{n}$ according to the recursive procedure of \citet{Smi(78)}, known as Newton's algorithm \citep{New(98),Mar(08)}. This scheme can be interpreted as a quasi-Bayesian learning model and induces a predictive construction that is asymptotically equivalent, as $n\rightarrow+\infty$, to a Bayesian model \citep{For(20)}. A quasi-Bayes empirical Bayes estimate of $\hat{\theta}_{G}(y)$ is then obtained from \eqref{post_mean} by replacing $G$ with $G_{n}$. The resulting estimate, $\hat{\theta}_{G_{n}}(y)$, is straightforward to evaluate, computationally efficient, and has constant per-observation computational cost as data accrue. Moreover, viewing Newton’s algorithm as a learning model further enables the construction of asymptotic credible intervals through a Gaussian central limit theorem.

Frequentist guarantees are established for $\hat{\theta}_{G_{n}}(y)$ under the assumption that the observations $Y_n$'s are i.i.d. from a Poisson mixture model with ``true'' mixing distribution $G^{\ast}$, referred to as the oracle prior. We give sufficient conditions on $G^{\ast}$ under which $\hat{\theta}_{G_{n}}(y)$ is asymptotically optimal as $n\rightarrow+\infty$, where optimality is understood as vanishing excess Bayes risk, or regret.

We assess the proposed methodology through empirical studies on synthetic and real data. For fixed $n\geq1$, $\hat{\theta}_{G_{n}}(y)$ outperforms Robbins' method and achieves performance comparable to that of the nonparametric maximum likelihood and minimum distance estimates. 

\subsection{Related work}

Our approach links the empirical Bayes literature to a recent and growing body of work on predictive or recursive inference, often referred to as the ``post-Bayes" framework. This literature includes developments on predictive distributions and martingale posteriors \citep{Fon(23),Fon(24),Hol(23),Bat(25),Yun(25)}, as well as related work on generalized Bayesian updating \citep{Bis(16),Kno(22)}. 

%%%%%%%%%%%%%%%%%%%%%%%%%%%%%%%%
%%%%%%%%%%%%%%%%%%%%%%%%%%%%%%%%
%%%%%%%%%%%%%%%%%%%%%%%%%%%%%%%%
%%%%%%%%%%%%%%%%%%%%%%%%%%%%%%%%

\section{Sequential $g$-modeling for the Poisson compound decision problem}\label{sec:quasi}
\subsection{Modeling assumptions}
For a stream of observations $(Y_{n})_{n\geq1}$, we propose a sequential $g$-modeling methodology which, on  a probability space $(\Omega,\mathcal F,\PP)$, specifies the distribution of $(Y_{n})_{n\geq1}$ through the conditional distribution of $Y_{n+1}$ given  $(Y_1,\dots,Y_n)$, for each $n\geq0$. As in the Bayesian framework, the model is interpreted as a learning model and not as a data-generating process \citep{For(25)}.

Denote by $(\mathcal G_n)_{n\geq1}$ the natural filtration of $(Y_n)_{n\geq1}$. For each $n\geq0$, the conditional distribution of $Y_{n+1}$ given $\mathcal G_n$ is modeled as a mixture of Poisson distributions with the mean $\theta$ ranging in compact $\Theta\subset\mathbb{R}^{+}$ , namely
\begin{equation}\label{poiss_model}
p_{G_n}(y)=\PP(Y_{n+1}=y\mid \mathcal G_n)=\int_\Theta \text{Poisson}(y\mid\theta)G_n(\ddr\theta)\qquad  y\in\mathbb{N}_{0},
\end{equation}
where the mixing $G_{n}$ in \eqref{poiss_model} is defined recursively through Newton's algorithm \citep{New(98)}:
\begin{equation}\label{eq:newton}
G_{n+1}(\ddr\theta)=(1-\alpha_{n+1})G_{n}(\ddr\theta)+\alpha_{n+1}\frac{\text{Poisson}(Y_{n+1}\mid\theta)G_{n}(\ddr\theta)}{\int_\Theta \text{Poisson}(Y_{n+1}\mid\theta)G_{n}(\ddr\theta)},
\end{equation}
starting from an initial guess $G_{0}$ for $G$, where the $\alpha_{n}$'s in $(0,1)$ are such that  $\sum_{n\geq1}\alpha_{n}=+\infty$ and $\sum_{n\geq1}\alpha^{2}_{n}<+\infty$. According to \eqref{eq:newton}, after observing \(Y_{n+1}\), the model updates \(G_n\) by taking a weighted average of \(G_n\) and its posterior distribution based on \(Y_{n+1}\), with weight \(\alpha_{n+1}\). The sequence $(\alpha_{n})_{n\geq1}$ is referred to as the learning rate. A standard choice is $\alpha_{n}=(\alpha+n)^{-\gamma}$, with $\alpha>0$ and $\gamma\in (1/2,1]$; see \citet{For(20)} for a discussion on the choice of the learning rate.

\subsection{Quasi-Bayes properties}

We show that the learning model \eqref{poiss_model}-\eqref{eq:newton} is quasi-Bayesian in the sense of \citet{For(20)}, yielding a predictive construction that is asymptotically equivalent, as $n\rightarrow+\infty$, to a Bayesian model. 

\begin{thm}\label{th:cid}
Let $(Y_{n})_{n\geq1}$ and $(G_n)_{n\geq1}$ be as in \eqref{poiss_model}-\eqref{eq:newton}. As $n\rightarrow+\infty$, $G_n$ converges weakly $\PP-$a.s. to a random probability distribution $\tilde G$ on $\Theta$, such that, by defining $p_{\tilde G}$ as in \eqref{poiss_model} with $\tilde G$ in place of $G_n$,  it holds $\PP-$a.s. that $p_{\tilde G}(y)=\lim_{n\rightarrow\infty}n^{-1} \sum_{1\leq k\leq n}I(Y_k=y)$ and $\PP(Y_{n+k}=y\mid\mathcal G_n)=E( p_{\tilde G}(y)\mid\mathcal G_n)$, for every $n\geq 0$, $k\geq 1$ and $y\in\mathbb N_0$. In particular, $\PP(Y_n=y)=\int p_{\tilde G}(y)d\PP$. 
\end{thm}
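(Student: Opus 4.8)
The plan is to uncover a martingale hidden inside Newton's algorithm and then run the standard arguments for conditionally identically distributed sequences. The key observation is that, for each fixed $j\in\{1,\dots,d\}$, the process $(g_n(\vartheta_j))_{n\geq0}$ is a bounded $(\mathcal G_n)$-martingale: $g_n$ is $\mathcal G_n$-measurable by construction, and since conditionally on $\mathcal G_n$ the observation $Y_{n+1}$ has law $p_{g_n}$ (and $p_{g_n}(y)>0$ for every $y$ by positivity of the Poisson kernel),
\begin{equation*}
E\!\left[\frac{k(Y_{n+1}\mid\vartheta_j)\,g_n(\vartheta_j)}{p_{g_n}(Y_{n+1})}\,\Big|\,\mathcal G_n\right]=\sum_{y\in\mathbb N_0}\frac{k(y\mid\vartheta_j)\,g_n(\vartheta_j)}{p_{g_n}(y)}\,p_{g_n}(y)=g_n(\vartheta_j),
\end{equation*}
so \eqref{eq:newton} gives $E[g_{n+1}(\vartheta_j)\mid\mathcal G_n]=(1-\alpha_{n+1})g_n(\vartheta_j)+\alpha_{n+1}g_n(\vartheta_j)=g_n(\vartheta_j)$. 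Since $0\leq g_n(\vartheta_j)\leq1$, the martingale convergence theorem produces a $\PP$-a.s.\ and $L^1$ limit $\tilde g(\vartheta_j)=\lim_n g_n(\vartheta_j)$ with $g_n(\vartheta_j)=E[\tilde g(\vartheta_j)\mid\mathcal G_n]$; as $\Theta_d$ is finite this yields $g_n\to\tilde g$ $\PP$-a.s., with $\tilde g$ a probability mass function on $\Theta_d$ $\PP$-a.s.

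I would then transfer the convergence to the predictive distributions. Linearity of conditional expectation and finiteness of $\Theta_d$ give, for every $y\in\mathbb N_0$,
\begin{equation*}
\PP(Y_{n+1}=y\mid\mathcal G_n)=p_{g_n}(y)=\sum_{j=1}^d k(y\mid\vartheta_j)\,g_n(\vartheta_j)=E\!\left[\sum_{j=1}^d k(y\mid\vartheta_j)\,\tilde g(\vartheta_j)\,\Big|\,\mathcal G_n\right]=E[p_{\tilde g}(y)\mid\mathcal G_n].
\end{equation*}
Iterating with the tower property, for every $n\geq0$ and $k\geq1$,
\begin{equation*}
\PP(Y_{n+k}=y\mid\mathcal G_n)=E\!\left[\PP(Y_{n+k}=y\mid\mathcal G_{n+k-1})\,\Big|\,\mathcal G_n\right]=E\!\left[p_{g_{n+k-1}}(y)\,\Big|\,\mathcal G_n\right]=E[p_{\tilde g}(y)\mid\mathcal G_n],
\end{equation*}
which is the asserted identity; specialising to $n=0$, where $\mathcal G_0$ is trivial, yields $\PP(Y_n=y)=\int p_{\tilde g}(y)\,\ud\PP$.

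Finally, to identify $p_{\tilde g}(y)$ with the limiting empirical frequency of $y$, fix $y$ and decompose $I(Y_k=y)=\big(I(Y_k=y)-p_{g_{k-1}}(y)\big)+p_{g_{k-1}}(y)$, using $p_{g_{k-1}}(y)=\PP(Y_k=y\mid\mathcal G_{k-1})$. The partial sums of the centred first term form a $(\mathcal G_n)$-martingale with increments bounded by $1$, so $\sum_{k\geq1}k^{-1}\big(I(Y_k=y)-p_{g_{k-1}}(y)\big)$ is an $L^2$-bounded martingale and hence converges $\PP$-a.s.; by Kronecker's lemma, $n^{-1}\sum_{k=1}^{n}\big(I(Y_k=y)-p_{g_{k-1}}(y)\big)\to0$ $\PP$-a.s. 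Since $p_{g_{k-1}}(y)\to p_{\tilde g}(y)$ $\PP$-a.s.\ by the first step, the Cesàro averages of $p_{g_{k-1}}(y)$ also tend to $p_{\tilde g}(y)$ $\PP$-a.s., and adding the two limits gives $n^{-1}\sum_{1\leq k\leq n}I(Y_k=y)\to p_{\tilde g}(y)$ $\PP$-a.s.

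The argument is essentially routine once the martingale identity $E[g_{n+1}(\vartheta_j)\mid\mathcal G_n]=g_n(\vartheta_j)$ is in hand; the only step needing genuine care is the almost-sure convergence of the empirical frequencies, where one must invoke the martingale strong law (Kronecker's lemma applied to an $L^2$-bounded martingale) rather than a mere $L^1$/$L^2$ statement. It is worth noting that the learning-rate conditions $\sum_n\alpha_n=+\infty$ and $\sum_n\alpha_n^2<+\infty$ play no role in this theorem — the martingale identity holds for any $\alpha_{n+1}\in(0,1)$ — and will instead be needed in the subsequent frequentist consistency analysis.
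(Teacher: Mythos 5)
Your proof is correct, and its core coincides with the paper's: the same computation showing $E[g_{n+1}(\vartheta_j)\mid\mathcal G_n]=g_n(\vartheta_j)$, bounded-martingale convergence to $\tilde g$, the identity $p_{g_n}(y)=E[p_{\tilde g}(y)\mid\mathcal G_n]$, and the tower-property iteration giving $\PP(Y_{n+k}=y\mid\mathcal G_n)=E[p_{\tilde g}(y)\mid\mathcal G_n]$ and $\PP(Y_n=y)=\int p_{\tilde g}(y)\,\ud\PP$. Where you diverge is the last claim, the identification of $p_{\tilde g}(y)$ with the limiting empirical frequency: the paper observes that the prediction identity means $(Y_n)$ is conditionally identically distributed and then invokes Lemma 2.1 and Theorem 2.2 of Berti, Pratelli and Rigo (2004) with $f(Y_n)=I(Y_n=y)$, whereas you give a self-contained argument — center the indicators by $p_{g_{k-1}}(y)=\PP(Y_k=y\mid\mathcal G_{k-1})$, note that $\sum_k k^{-1}\bigl(I(Y_k=y)-p_{g_{k-1}}(y)\bigr)$ is an $L^2$-bounded martingale, apply Kronecker's lemma, and combine with Ces\`aro convergence of $p_{g_{k-1}}(y)\to p_{\tilde g}(y)$. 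Your route is more elementary and avoids the external citation, essentially reproving the needed special case of the c.i.d. strong law for bounded functionals; the paper's route is shorter and, by explicitly placing the sequence in the conditionally-identically-distributed framework, sets up the quasi-Bayes interpretation (asymptotic exchangeability, $\tilde g$ as the parameter) exploited in the discussion following the theorem. Your closing remark is also accurate: the proof uses only $\alpha_n\in(0,1)$, and the summability conditions on the learning rate play no role in this theorem.
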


See \ref{app3_0} for the proof of Theorem \ref{th:cid}. Under the learning model \eqref{poiss_model}-\eqref{eq:newton}, the future observations in $(Y_{n})_{n\geq1}$ are conditionally identically distributed given $\mathcal G_n$, for each $n \geq 0$. By \citet[Theorem 2.2]{Berti(04)},  it follows that the $Y_n$'s are asymptotically i.i.d., conditional on the random limit $\tilde G$, with distribution $p_{\tilde G}$. This places the model within the quasi-Bayesian framework developed in \citet{For(20)}. From this viewpoint, \(p_{\tilde G}\) is the random distribution that would be assigned with access to an infinite sample, so that \(\tilde G\), unavailable in practice, is the parameter of interest, and \eqref{post_mean} becomes
\begin{equation}\label{eq:param}
\hat{\theta}_{\tilde{G}}(y)=(y+1)\frac{p_{\tilde{G}}(y+1)}{p_{\tilde{G}}(y)}\qquad y\in\mathbb{N}_{0}.
\end{equation}
In contrast to the Bayesian framework, which places a prior on $G$ and assumes the $Y_n$'s to be i.i.d. conditional on $G$, the quasi-Bayesian approach avoids repeated recomputation of the  posterior distribution as new observations arrive, thereby yielding substantial computational savings in sequential applications. 

\subsection{Estimation and Gaussian approximation}\label{sec:estimation}

Under the learning model \eqref{poiss_model}-\eqref{eq:newton}, an estimate of $\hat{\theta}_{\tilde{G}}(y)$ is obtained by replacing the unknown $\tilde{G}$ in \eqref{eq:param} with $G_{n}$, namely
\begin{equation}\label{seq_estim}
\hat{\theta}_{G_{n}}(y)=(y+1)\frac{p_{G_{n}}(y+1)}{p_{G_{n}}(y)}\qquad y\in\mathbb{N}_{0}.
\end{equation}
We refer to $\hat{\theta}_{G_n}(y)$ as the quasi-Bayes empirical Bayes estimate. Because the map $G\mapsto \hat{\theta}_G(y)$ is nonlinear, the plug-in approximation $\hat{\theta}_{G_n}(y)$ to $\hat{\theta}_{\tilde G}(y)$ need not coincide with $E_{\tilde{G}}[\hat{\theta}_{\tilde G}(y)\,|\,\mathcal{G}_{n}]$. As a corollary of Theorem \ref{th:cid}, for every $y\in\mathbb{N}_{0}$, $\hat{\theta}_{G_n}(y)\to\hat{\theta}_{\tilde G}(y)$ $\PP-$a.s. as $n\to\infty$; the next theorem gives a related conditional Gaussian central limit theorem for $\hat{\theta}_{G_{n}}(y)$.

\begin{thm}\label{th:clt2}
Let \eqref{poiss_model}-\eqref{eq:newton} hold with $(\alpha_n)_{n\geq1}$  non increasing and $\sum_{n\geq 1}(\alpha_n^2/(\sum_{k\geq n}\alpha_k^2))^2<+\infty$, and let $b_n=(\sum_{k\geq n}\alpha_k^2)^{-1}$. For every $y\in\mathbb{N}_{0}$, the conditional distribution of $b_n^{1/2}(  \hat{\theta}_{\tilde{G}}(y)-\hat{\theta}_{G_{n}}(y))$, given $\mathcal G_n$, converges weakly $\PP-$a.s.  to a centered Gaussian kernel with the random variance 
\begin{equation}\label{eq:W}
W_{\tilde G}(y)=\hat\theta^{2}_{\tilde G}(y)
\sum_{z\in\mathbb N_0}p_{\tilde G}(z)\left(\frac{p_{\tilde G(\cdot\,|\,z)}(y)}{p_{\tilde G}(y)}-\frac{p_{\tilde G(\cdot\,|\,z)}(y+1)}{p_{\tilde G}(y+1)}\right)^2
\end{equation}
where $p_{\tilde G(\cdot\,|\,z)}(y)=\int_\Theta \text{Poisson}(y\mid\theta)\tilde G(\ddr\theta\mid z)$ with $\tilde G(\ddr\theta\mid z)\propto \text{Poisson}(z\mid\theta)\tilde G(\ddr\theta)$.
\end{thm}

See \ref{app3} for the proof of Theorem \ref{th:clt2}. If $W_{G_n}$ is defined as in \eqref{eq:W}, with $G_n$ in place of the unknown $\tilde G$, then, for every $y\in\mathbb{N}_{0}$, $W_{G_n}(y)\rightarrow W_{\tilde G}(y)$ $\PP-$a.s. as $n\rightarrow+\infty$. It follows that, conditionally on \(\mathcal G_n\), \(\hat{\theta}_{\tilde G}(y)\) is approximately Gaussian with mean \(\hat{\theta}_{G_n}(y)\) and variance \(b_n^{-1}W_{G_n}(y)\), for every $y\in\mathbb{N}_{0}$.

\subsection{Credible intervals}

For a fixed $\alpha\in(0,1)$, and with access to an infinite sample, the endpoints of a $(1-\alpha)$-level credible interval for $\theta$, given $Y=y$, are the $\alpha/2$ and $1-\alpha/2$ quantiles of the conditional distribution function
\begin{equation}\label{eq:FG}
    F_{\tilde G}(\theta\mid y)=\frac{\int_0^\theta\text{Poisson}(y\mid \vartheta)\tilde G(\ddr\vartheta)}{\int_\Theta \text{Poisson}(y\mid\vartheta)\tilde G(\ddr\vartheta)}.
\end{equation}
With $n$ observations, we replace $\tilde G$ by $G_n$ and incorporate into the credible interval the additional variability induced by this substitution. The next theorem is the basis for this construction.

\begin{thm}\label{th:clt3}
Let \eqref{poiss_model}-\eqref{eq:newton} hold with $(\alpha_n)_{n\geq1}$  non increasing and $\sum_{n\geq 1}(\alpha_n^2/(\sum_{k\geq n}\alpha_k^2))^2<+\infty$, and let $b_n=(\sum_{k\geq n}\alpha_k^2)^{-1}$. If $F_{G_n}(\cdot\mid y)$ denotes the distribution function in \eqref{eq:FG} with $G_n$ replacing $\tilde G$ then, for every $\theta\in\Theta$, the conditional distribution of $b_n^{1/2}(F_{G_{n}}(\theta\mid y)-  F_{\tilde{G}}(\theta\mid y))$, given $\mathcal G_n$, converges weakly $\PP-$a.s.  to a centered Gaussian kernel with the random variance 
\begin{equation}\label{eq:W'}
W'_{\tilde G,y}(\theta)=\sum_{z\in\mathbb{N}_{0}}p_{\tilde G}(z)\frac{p_{\tilde G(\cdot\mid z)}(y)^2}{p_{\tilde G}(y)^2}\left(F_{\tilde G(\cdot\mid z)}(\theta\mid y)-F_{\tilde G}(\theta\mid y)\right)^2,
\end{equation}
where $p_{\tilde G(\cdot\,|\,z)}(y)=\int_\Theta \text{Poisson}(y\mid\theta)\tilde G(\ddr\theta\mid z)$ with $\tilde G(\ddr\theta\mid z)\propto \text{Poisson}(z\mid\theta)\tilde G(\ddr\theta)$, and $F_{\tilde G(\cdot\mid z)}(\theta\mid y)$ as in \eqref{eq:FG} with $\tilde G(\cdot\mid z)$ in the place of $\tilde G$.
\end{thm}

See  \ref{app3new} for the proof of Theorem \ref{th:clt3}. Let $\tilde Y\mid\tilde\theta\sim \text{Poisson}(\cdot\mid\tilde\theta)$ and $\tilde\theta\mid \tilde G\sim\tilde G$. To construct a $(1-\alpha)$-level credible interval for $\tilde\theta$, let $W'_{G_n,y}$  be defined as in \eqref{eq:W'} with $G_n$ in place of the unknown $\tilde G$. For fixed $\beta\in(0,1)$, define
\begin{equation}\label{conf_ext1}
\underline{F}^{\beta}_{G_n}(\theta\mid y)=\max\{0,F_{G_n}(\theta\mid y)-z_{1-\beta/2}(b_{n}^{-1}W'_{G_n,y}(\theta))^{1/2}\}
\end{equation}
and
\begin{equation}\label{conf_ext2}
\overline{F}^\beta_{G_n}(\theta\mid y)=\min\{1,F_{G_n}(\theta\mid y)+z_{1-\beta/2}(b_{n}^{-1}W'_{G_n,y}(\theta))^{1/2}\},
\end{equation}
where $z_{1-\beta/2}$ is the quantile of the standard Gaussian distribution. For every $\theta\in\Theta$, $W'_{G_n,y}(\theta)\rightarrow W'_{\tilde G,y}(\theta)$ $\PP$-a.s. as $n\rightarrow+\infty$, and hence, for $n$ sufficiently large, $\PP( F_{\tilde G}(\theta\mid y)\leq \underline F_{G_n}^\beta(\theta\mid y))\leq\beta/2$ and $\PP(F_{\tilde G}(\theta\mid y)> \overline F_{G_n}^\beta(\theta\mid y))\leq \beta/2$. 
Therefore, for $\beta_1,\beta_2\in(0,1)$ such that $\beta_1+\beta_2=\alpha$, it follows from \eqref{conf_ext1}-\eqref{conf_ext2} that
\begin{equation}\label{qbeb_interval}
I_{G_{n}}(y)=[L_n(y),U_n(y)]\cap\Theta,
\end{equation}
with $L_n(y)=\sup\{\theta: \overline{F}^{\beta_1}_{G_n}(\theta\mid y)\leq \beta_2/2\}$ and $U_n(y)=\inf\{\theta: \underline{F}^{\beta_1}_{G_n}(\theta\mid y)\geq 1-\beta_2/2\}$, is a $(1-\alpha)$-level credible interval for $\tilde\theta$. The tuning parameters $\beta_1$ and $\beta_2$, which account for the uncertainty in $\tilde G$ and in $\theta$ conditionally on $\tilde G$, respectively, are chosen by solving an optimization problem aimed at minimizing (numerically) the length of the resulting credible interval $I_{G_n}$. See \ref{app5_num11}.

\subsection{Frequentist guarantees} \label{sec:regret}
Under the ``true"  Poisson mixture model, the $Y_n$'s are assumed to be i.i.d. from a mixture of Poisson distributions with the ``true'' mixing distribution $G^{\ast}$, or oracle prior, namely
\begin{displaymath}
p_{G^*}(y)=\int_\Theta \text{Poisson}(y\mid\theta)G^*(\ddr\theta)\qquad  y\in\mathbb{N}_{0}.
\end{displaymath}
Let $\PP_{G^*}$ be the probability measure under which the ``true" model holds. If $G^{\ast}$ were known, the Bayes estimate in \eqref{post_mean} would be $\hat\theta_{G^{*}}(y)$. Since $G^{\ast}$ is unknown, we instead use the quasi-Bayes empirical Bayes estimate $\hat{\theta}_{G_{n}}(y)$ in \eqref{seq_estim}. The regret incurring by using $\hat{\theta}_{G_{n}}(y)$ in place of $\hat\theta_{G^*}(y)$ is
\begin{equation}\label{reg}
\textsc{Regret}(G_n,G^*)=\sum_{y\in\mathbb{N}_{0}}(\hat\theta_{G_n}(y)-\hat\theta_{G^*}(y))^2p_{G^*}(y);
\end{equation}
see \citet[Section 3]{Efr(19)} for background on the regret \eqref{reg}. As a frequentist optimality guarantee for $\hat{\theta}_{G_{n}}(y)$, the next theorem shows that the regret \eqref{reg} converges to zero $\PP_{G^*}-$a.s. as $n\rightarrow+\infty$.

\begin{thm}\label{th:regret}
Let  $G^{\ast}$ be the oracle prior on $\mathbb{R}^{+}$ and, denoting by $\textsc{KL}$ the Kullback--Leibler divergence, let  $G^\dag=\text{argmin}_{G\in\overline{\mathcal S}}\textsc{KL}(p_{G^*} \,\|\, p_{G})$, where $\overline{\mathcal S}$ is the weak closure of the class $\mathcal S$  of probability measures on $\mathbb R^+$ that are absolutely continuous with respect to $G_0$. If $G_{n}$ is defined as in \eqref{eq:newton} with $\Theta$ compact, then for every $y\in\mathbb{N}_{0}$ as $n\rightarrow+\infty$
\begin{displaymath}
\hat\theta_{G_n}(y)\rightarrow \hat\theta_{G^{\dag}}(y)\qquad \PP_{G^*}-a.s.
\end{displaymath}
and 
\begin{displaymath}
\textsc{Regret}(G_n,G^\dag)\!=\!\sum_{y\in\mathbb N_0}(\hat\theta_{G_n}(y)-\hat\theta_{G^\dag}(y))^2p_{G^*}(y)\rightarrow 0\qquad \PP_{G^*}-a.s.
\end{displaymath}
If $G^*$ is absolutely continuous with respect to $G_0$, then $\lim_{n\rightarrow+\infty}\textsc{Regret}(G_n,G^*)\rightarrow 0$, $\PP_{G^*}-$a.s.
\end{thm}

See \ref{app:proofregret} for the proof of Theorem \ref{th:regret}. Under additional assumptions on Newton's algorithm, the rate of convergence of the regret to zero can also be derived; see \ref{app:freqfinite}.

%%%%%%%%%%%%%%%%%%%%%%%%%%%%%%%%
%%%%%%%%%%%%%%%%%%%%%%%%%%%%%%%%
%%%%%%%%%%%%%%%%%%%%%%%%%%%%%%%%
%%%%%%%%%%%%%%%%%%%%%%%%%%%%%%%%

\section{Numerical illustrations}

\subsection{Synthetic-data analysis}

We empirically assess the quasi-Bayes empirical Bayes methodology, hereafter QB, by comparing it, for fixed $n\geq 1$, with three benchmark nonparametric procedures: (i) Robbins' method (Robbins) \citep{Rob(56)}; (ii) maximum likelihood (ML) \citep{Jan(24)}; and (iii) minimum squared Hellinger distance (MHD) \citep{Jan(24)}. Details on these procedures are provided in \ref{app5_num11}.

For $n\in\{50,\,100,\,200,\,400\}$, generate i.i.d. data $\mathbf{Y}_{n}=(Y_{1},\ldots,Y_{n})$ from a Poisson mixture model with Weibull prior $G$ of scale $5$ and shape $3$. Newton's algorithm \eqref{eq:newton} requires numerical evaluation of an integral, i.e., the marginal likelihood, which we approximate by the trapezoidal rule. To this end, the density function of $G_{n}$ is represented through its values on a fixed grid of $d\geq 1$ quadratures points over $\Theta$ where $d$ control the numerical resolution. This representation is used only for numerical evaluation and imposes no modeling restriction on $\Theta$. Letting $U_\Theta=\max\{\max\{\mathbf{Y}_n\},\lceil Q_{n,0.99}+4(\max\{Q_{n,0.99},1\})^{1/2}\rceil\}$, with $Q_{n,0.99}=\textsc{Quantile}(\mathbf{Y}_n;0.99)$, we use a uniform grid with $d=1{,}000$ over $\Theta=(0,U_\Theta)$, set $G_0$ to be Uniform on $\Theta$, and take the learning rate to be $\alpha_n=(1+n)^{-0.99}$.

The performance of the QB estimate relative to the Robbins, ML, and MHD estimates is assessed through the empirical mean squared error (\textsc{e-MSE}). We also consider the oracle estimate (Oracle), obtained by replacing $G$ in \eqref{post_mean} with the Weibull prior distribution generating the data, which provides the empirical minimum mean squared error (\textsc{e-MMSE}). For the QB, Robbins, ML, and MHD estimates we further report the empirical regret, defined as $\textsc{e-REGRET}=\textsc{e-MSE}-\textsc{e-MMSE}$, that is, the excess empirical squared error relative to the oracle benchmark. See \ref{app5_num11}. A sensitivity analysis with respect to the the number of quadrature points $d\in\{5000,1000,500,100,50,10\}$, reported in \ref{app5_num13}, indicates that the performance of QB is robust to the choice of $d$.

\begin{table}[ht]
\centering
\caption{Values of \textsc{e-MSE} and \textsc{e-REGRET} for Oracle, Robbins, ML, MHD and QB estimates.}
{
\setlength{\tabcolsep}{0pt}
\begin{tabular}{@{}l@{\hspace{1cm}}*{5}{>{\centering\arraybackslash}p{1.75cm}}@{}}
\hline
\hline
 & Oracle & Robbins & ML & MHD & QB \\[0.1cm]
\hline

\multicolumn{6}{@{}l}{\underline{$n=100$}} \\[0.05cm]
\textsc{e-MSE}    & 1.796 & 11.319 & 1.848 & 1.977 & 1.989 \\
\textsc{e-REGRET} & 0 & 9.523 & 0.052 & 0.181 & 0.193 \\[0.3cm]

\multicolumn{6}{@{}l}{\underline{$n=400$}} \\[0.05cm]
\textsc{e-MSE}    & 1.585 & 6.072 & 2.910 & 1.678 & 1.947 \\
\textsc{e-REGRET} & 0 & 4.487 & 1.324 & 0.093 & 0.362 \\[0.1cm]
\hline
\hline
\end{tabular}
}
\label{weib_tab}
\end{table}%

\begin{figure}[t]
\centering
\includegraphics[width=.85\textwidth]{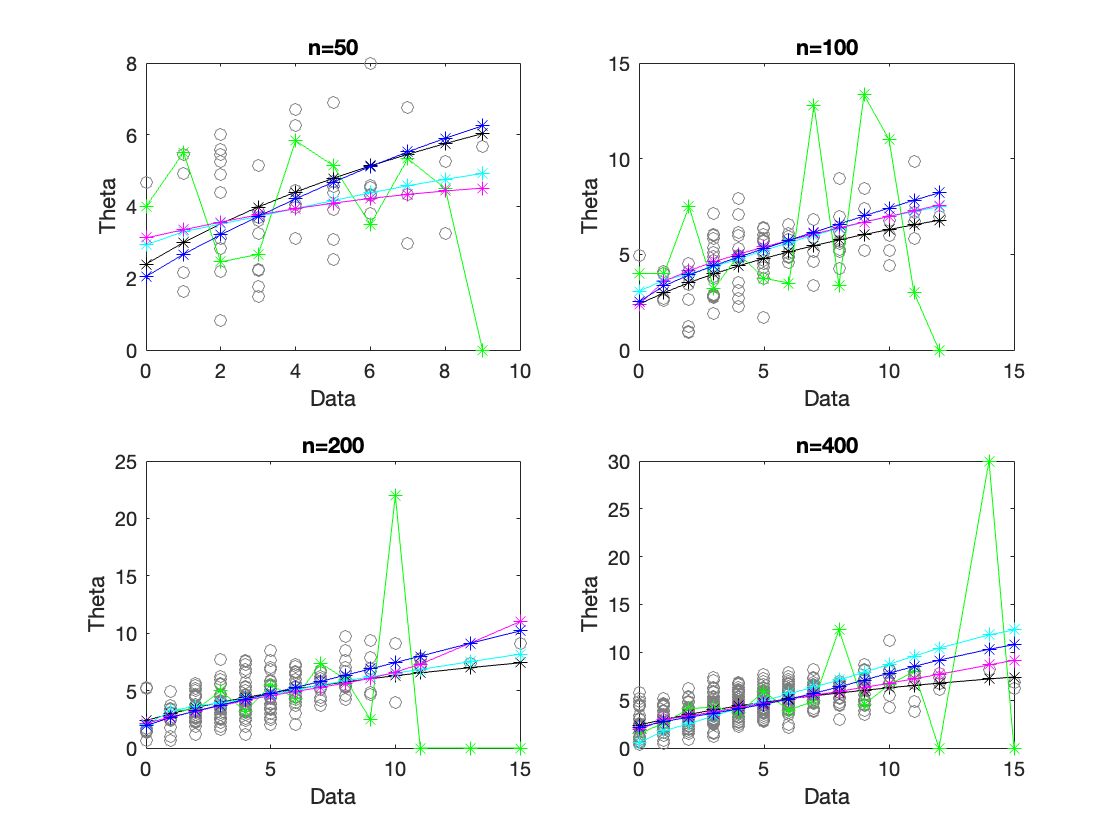}
\caption{Data points $Y$ versus true parameters $\theta$ (grey), and estimates: Oracle (black), Robbins (green), ML (cyan), MHD (magenta) and QB (blue).}
\label{weib_fig}
\end{figure}

QB performs competitively with the ML and MHD, whereas all three methods outperform Robbins' method, whose instability is well documented \citep{She(24)}. A main advantage of QB is computational efficiency. Unlike the other methods, which recompute the estimate from scratch whenever a new observation $Y_{n+1}$ is incorporated, QB updates the estimate using only $Y_{n+1}$. As a result, the computational cost per update is constant in the sample size $n$ and depends only on the number of quadrature points $d$. For $d=1,000$, QB requires $0.0019$ seconds per update, with CPU time increasing linearly in $d$; see \ref{app5_num1}. By comparison, at $n=400$, ML and MHD initialized on a grid of $d=1,000$ points require $1.84$ and $2.14$ seconds, respectively. All computations are performed on a MacBook Pro (M1 processor).

A more detailed and exhaustive synthetic-data analysis is reported in \ref{app5_num12}--\ref{app5_num15}. This includes a study of the credible interval $I_{G_n}$ in \eqref{qbeb_interval} and a comparison with the corresponding oracle interval, together with further experiments under other choices of the prior $G$, such as the Uniform, half-Gaussian, and square-root of half-Cauchy distributions.

\subsection{Real-data analysis}

We analyze a dataset of tweets that received at least $50$ retweets on Twitter between October 7 and November 6, 2011, available at \url{https://snap.stanford.edu/seismic/}. It contains $166{,}783$ tweets generating $34{,}617{,}704$ retweets, together with metadata such as tweet ID, posting time, retweet time, and follower counts. For each tweet, we consider the number of retweets received within the first $30$ seconds after posting, as rapid retweeting may be more indicative of automated than human activity. The retweet intensity is a latent measure of instantaneous virality, or propensity to attract immediate attention. Detecting tweets with unusually high intensity in real time may help identify automated activity or early virality, thus motivating a sequential method such as QB. Table \ref{tweet_tab} summarizes the fast-retweet counts, e.g., $40{,}259$ tweets received none and $28{,}339$ received one, and reports Robbins, ML, MHD, and QB estimates of tweet intensity.

\begin{table}[ht]
\centering
\caption{Counts $n_{y}$ of the number of tweets with $y$ fast retweets, and corresponding estimates.}
\begin{tabular}[t]{lccccccccccc}
\hline
\hline
$y$ &&& 0 & 1 & 2 & 3 & 4 & 5 & 6 & 7   \\[0.1cm]
$n_{y}$ &&& 40,259 & 28,339 & 21,581 & 16,479 & 12,130 & 9,238 & 7,193 & 5,464\\[0.1cm]
\hline 
Robbins&&& 0.704 & 1.523 & 2.291 & 2.944 & 3.808 & 4.672&5.317 & 6.324  \\
ML&&& 0.391 & 2.134 & 2.931 & 3.697 & 4.459 & 5.329 &6.334 &  7.424 \\
MHD&&& 0.750 & 2.198 & 3.142 & 4.149 & 4.986 &5.694 & 6.390& 7.163  \\[0.2cm]
QB&&& 0.607 & 0.818 & 1.632 & 3.076 & 4.007 &  4.519&  5.023& 5.707 \\[0.1cm]
\hline
\hline
\end{tabular}
\label{tweet_tab}
\end{table}

Further analyses are reported in \ref{app5_num2}-\ref{app5_num3}, including an additional study of the Twitter dataset focused on the followers generated by retweets, and an application to the benchmark accident dataset \citep[Table 6.1]{Efr(21)}.

%%%%%%%%%%%%%%%%%%%%%%%%%%%%%%%%
%%%%%%%%%%%%%%%%%%%%%%%%%%%%%%%%
%%%%%%%%%%%%%%%%%%%%%%%%%%%%%%%%
%%%%%%%%%%%%%%%%%%%%%%%%%%%%%%%%

\section*{Supplementary Material}

The Supplementary Material contains auxiliary results in \ref{App_as}; proofs of Theorems \ref{th:cid}--\ref{th:regret} and additional frequentist guarantees in \ref{App_proof}-\ref{app:freqfinite}; further numerical studies for synthetic and real data in \ref{app5_num1}-\ref{app5_num3}; a multidimensional extension of the quasi-Bayes empirical Bayes methodology in  \ref{sec:multiple}.

%%%%%%%%%%%%%%%%%%%%%%%%%%%%%%%%
%%%%%%%%%%%%%%%%%%%%%%%%%%%%%%%%
%%%%%%%%%%%%%%%%%%%%%%%%%%%%%%%%
%%%%%%%%%%%%%%%%%%%%%%%%%%%%%%%%

%%%%%%%%%%%%%%%%%%%%%%%%%%%%%%%%
%%%%%%%%%%%%%%%%%%%%%%%%%%%%%%%%
%%%%%%%%%%%%%%%%%%%%%%%%%%%%%%%%
%%%%%%%%%%%%%%%%%%%%%%%%%%%%%%%%

\section*{Appendix}

\renewcommand{\thesection}{\Alph{section}}
\renewcommand{\theequation}{\thesection.\arabic{equation}}

\setcounter{section}{0}
\setcounter{equation}{0}
\setcounter{thm}{0}
%%%%%%%%%%%%%%%%%%%%%%%%%%%%%%%%%%%%%%%%%%%%%%%%%%%%%%%%%%%%%%%%%%%%%%%%%%%%%%%%%%%%%%%%%%%%%%%%%%

\section{Preliminaries}\label{App_as}

\subsection{Almost sure conditional convergence.} 

Given a probability space $(\Omega,\mathcal F,\PP)$, a  kernel on $\mathbb R^d$ is a function $K:\mathcal B(\mathbb R^d)\times\Omega$ satisfying: i) for every $\omega\in\Omega$, $K(\cdot,\omega)$ is a probability measure on $\mathcal B(\mathbb R^d)$; ii) for each $B\in\mathcal B(\mathbb R^d)$, the function $K(B,\cdot)$ is measurable with respect to $\mathcal F$. A kernel $K$ is called a Gaussian kernel if $K(\cdot,\omega)$ is a Gaussian distribution with mean vector $\mu(\omega)$ and covariance matrix  $\Sigma(\omega)$, where $\mu$ and $\Sigma$ are random variables defined on $(\Omega,\mathcal F,\PP)$. We denote the Gaussian kernel by $\mathcal N(\mu,\Sigma)$. 

\begin{definition}[\cite{crimaldi2009}]
    Let $(X_n)$ be a sequence of random variables  adapted to a filtration $(\mathcal G_n)$ and taking values in $\mathbb R^d$. For every $n\geq 0$, let $K_n$ denote a regular version of the conditional distribution of $X_{n+1}$, given $\mathcal G_n$. If there exists a kernel $K$ such that the sequence $(K_n(\cdot,\omega))_{n}$ converges weakly to $K(\cdot,\omega)$ for almost every $\omega\in\Omega$, then we say that the sequence $(X_n)$ converges to $K$ in the sense of almost-sure conditional convergence with respect to the filtration $(\mathcal G_n)$.
\end{definition}
Whenever the underlying filtration is clear from the context, we denote the almost sure conditional convergence of $X_n$ to $K$ by $$X_n\stackrel{a.s.\; cond.}{\longrightarrow}K.$$

The following results from \cite{crimaldi2009} and \cite{crimaldi2016} provide the main tool in the proof of Theorem \ref{th:clt2}.

\begin{thm}[\cite{crimaldi2009}, Theorem A.1]
\label{th:app:almost sure conditional}
For each $n\geq 1$, let $(M_{n,j})_{j\geq 1}$ be a real valued martingale with respect to the filtration $(\mathcal F_{n,j})_{j\geq 1}$, satisfying $M_{n,0}=0$, and converging in $L^1$ to a random variable $M_{n,\infty}$. Set
$$
X_{n,j}=M_{n,j}-M_{n,j-1}\mbox{ for }j\geq 1, \quad U_n=\sum_{j\geq 1}X_{n,j}^2,\quad X_n^*=\sup_{j\geq 1}|X_{n,j}|.
$$
Assume that
\begin{itemize}
    \item[(a)] $(X_n^*)_n$ is dominated in $L^1$ and converges to zero a.s.
    \item[(b)] $(U_n)_n$ converges a.s. to a non-negative random variable $U$.
\end{itemize}
Then the sequence $(M_{n,\infty})$ 
                    converges to the Gaussian kernel $\mathcal N(0,U)$ in the sense of almost-sure conditional convergence with respect to the filtration $(\mathcal F_{n,1})$.
\end{thm}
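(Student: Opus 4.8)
The plan is to prove almost-sure weak convergence of the conditional laws of $M_{n,\infty}$ through their conditional characteristic functions, via a conditional refinement of McLeish's martingale device. Fix $t\in\mathbb R$ and write $\phi_n(t)=E[e^{itM_{n,\infty}}\mid\mathcal F_{n,1}]$, the conditional characteristic function of $M_{n,\infty}$ given $\mathcal F_{n,1}$. By a Lévy-type continuity argument applied $\omega$ by $\omega$, it suffices to show that for each fixed $t$ one has $\phi_n(t)\to e^{-t^2U/2}$ $\PP$-a.s.: choosing a countable dense set of values of $t$ and intersecting the corresponding almost-sure events, for a.e. $\omega$ the functions $\phi_n(\cdot,\omega)$ converge on a dense set to $e^{-t^2U(\omega)/2}$, the characteristic function of $\mathcal N(0,U(\omega))$; since each $\phi_n(\cdot,\omega)$ is a genuine characteristic function and the limit is continuous at $0$, pointwise convergence on a dense set forces weak convergence of the kernels $K_n(\cdot,\omega)\to\mathcal N(0,U(\omega))$, which is exactly almost-sure conditional convergence in the sense of the Definition.

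The next step is McLeish's factorization. Using the elementary identity $e^{iy}=(1+iy)\exp(-y^2/2+r(y))$ with $|r(y)|\le|y|^3$ for $|y|\le 1$, I factor $e^{itM_{n,\infty}}=\mathcal T_n\exp(-\tfrac{t^2}{2}U_n+S_n)$, where $\mathcal T_n=\prod_{j\ge 1}(1+itX_{n,j})$ and $S_n=\sum_{j\ge1}r(tX_{n,j})$. Because $E[X_{n,j}\mid\mathcal F_{n,j-1}]=0$, the partial products $\mathcal T_{n,m}=\prod_{j\le m}(1+itX_{n,j})$ form a complex $(\mathcal F_{n,m})_m$-martingale with $\mathcal T_{n,0}=1$, so $E[\mathcal T_n\mid\mathcal F_{n,1}]=\mathcal T_{n,1}=1+itX_{n,1}$. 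Assumption (a) gives $X_n^*\le|t|^{-1}$ eventually, whence $|S_n|\le|t|^3\sum_j|X_{n,j}|^3\le|t|^3X_n^*U_n\to 0$ a.s. by (a)--(b); combined with (b) this yields $g_n:=\exp(-\tfrac{t^2}{2}U_n+S_n)\to g:=e^{-t^2U/2}$ a.s. The crucial bound $|\mathcal T_n|=\prod_j(1+t^2X_{n,j}^2)^{1/2}\le\exp(\tfrac{t^2}{2}U_n)$ then provides the domination $|\mathcal T_n|e^{-t^2U_n/2}\le 1$.

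I then split $\phi_n(t)-E[g\mid\mathcal F_{n,1}]=E[\mathcal T_n(g_n-g)\mid\mathcal F_{n,1}]+E[(\mathcal T_n-1)g\mid\mathcal F_{n,1}]$. Since $(\mathcal F_{n,1})_n$ is a filtration and $U$ is measurable with respect to $\mathcal F_\infty=\sigma(\bigcup_n\mathcal F_{n,1})$, Lévy's martingale convergence theorem gives $E[g\mid\mathcal F_{n,1}]\to g$ a.s. For the first term, the bounds $|\mathcal T_n g_n|\le e^{\mathrm{Re}\,S_n}$ and $|\mathcal T_n g|\le e^{t^2(U_n-U)/2}$ are a.s. bounded while $g_n-g\to 0$ a.s., so conditional dominated convergence yields $E[\mathcal T_n(g_n-g)\mid\mathcal F_{n,1}]\to 0$ a.s.

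I expect the decorrelation term $E[(\mathcal T_n-1)g\mid\mathcal F_{n,1}]$ to be the main obstacle: although $\mathcal T_n-1$ is a conditionally centered null-array of martingale increments, it is coupled to the \emph{random} limiting variance through $g=e^{-t^2U/2}$, so $g$ cannot simply be pulled out of the conditional expectation as in the classical (deterministic-variance) McLeish lemma. The plan here is to write $\mathcal T_n-1=\sum_{m\ge1}\mathcal T_{n,m-1}\,itX_{n,m}$ and exploit the conditional orthogonality of these increments: I approximate $g$ by $E[g\mid\mathcal F_{n,m}]$, bound the resulting tail contribution by the quadratic-variation remainder $U_n-\sum_{j\le m}X_{n,j}^2$ (which vanishes by (b)), and control the leading block using the uniform integrability supplied by the $L^1$-domination in (a) together with $|\mathcal T_n|e^{-t^2U_n/2}\le 1$. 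This shows $E[(\mathcal T_n-1)g\mid\mathcal F_{n,1}]\to 0$ a.s., completing $\phi_n(t)\to e^{-t^2U/2}$ and, by the reduction of the first paragraph, the theorem. It is precisely this last step—where the randomness of the limiting variance interacts with the martingale fluctuations—that carries the technical weight of the proof.
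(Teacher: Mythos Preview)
The paper does not prove this theorem: it is quoted verbatim from \cite{crimaldi2009} as a tool and used as a black box in the proofs of Theorems~\ref{th:clt} and~\ref{th:clt4mult}. So there is no ``paper's own proof'' to compare against; I can only comment on whether your sketch stands on its own.

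There are two genuine gaps. First, you assert that $(\mathcal F_{n,1})_n$ is a filtration and that $U$ is $\sigma(\bigcup_n\mathcal F_{n,1})$-measurable, in order to invoke L\'evy's upward theorem for $E[g\mid\mathcal F_{n,1}]\to g$. Neither is part of the hypotheses: the theorem allows a \emph{different} filtration $(\mathcal F_{n,j})_j$ for each $n$, with no nesting of the $\mathcal F_{n,1}$'s and no measurability assumption on $U$. (In the paper's applications one has $\mathcal F_{n,1}=\mathcal G_n$, which \emph{is} a filtration, but that is a special case.) Without this step your decomposition $\phi_n(t)-E[g\mid\mathcal F_{n,1}]$ does not reduce to $\phi_n(t)-g$.

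Second, and more seriously, the ``conditional dominated convergence'' you invoke for $E[\mathcal T_n(g_n-g)\mid\mathcal F_{n,1}]\to 0$ a.s.\ is not a theorem. You have $\mathcal T_n(g_n-g)\to 0$ a.s.\ and a bound that is finite $\omega$-by-$\omega$ (indeed random, since it involves $U_n$ and $U$), but almost-sure convergence of integrands plus a random dominating function does \emph{not} imply almost-sure convergence of conditional expectations with respect to a varying $\sigma$-field $\mathcal F_{n,1}$. Even with a deterministic uniform bound, you would only get $L^1$-convergence of $E[\,\cdot\mid\mathcal F_{n,1}]$, not a.s.\ convergence, because the conditioning $\sigma$-field changes with $n$. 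The same objection applies to your treatment of $E[(\mathcal T_n-1)g\mid\mathcal F_{n,1}]$: the sketch (``approximate $g$ by $E[g\mid\mathcal F_{n,m}]$, control tails by quadratic-variation remainders'') never produces an a.s.\ statement, only moment/$L^1$ bounds. Crimaldi's actual proof handles precisely this difficulty by working through a stable-convergence criterion and a specific almost-sure characterization (see \cite{crimaldi2009}); the McLeish device alone, as you deploy it, yields stable convergence but not the stronger almost-sure conditional convergence claimed.
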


\begin{remark}
   Requesting that the almost sure limit of $(U_n)$ is non-negative guarantees the well-definedness of the kernel $\mathcal N(0,U)$ for every $\omega \in \Omega$. Theorem A.1 in {\rm \cite{crimaldi2009}}  specifies that $U$ should be a \lq\lq positive random variable''. Although not explicitly stated, this requirement should be understood as $U \geq 0$, as becomes evident upon a careful examination of the proof.
\end{remark}

\begin{thm}[\cite{crimaldi2016}, Lemma 4.1]
\label{th:app:almost sure2}
    Let $(Z_n)$ be a sequence of real valued random variables adapted to the filtration $(\mathcal G_n)$ and such that $E(Z_{n+1}\mid \mathcal G_n)\rightarrow Z$ a.s. for some random variable $Z$. Moreover, let $(a_n)$ and $(b_n)$ be sequences of real numbers such that
    $$
    b_n\uparrow +\infty,\quad \sum_{k=1}^\infty \frac{E(Z_k^2)}{a_k^2b_k^2}<+\infty.
    $$
Then we have:
\begin{itemize}
    \item[(a)] If $\frac{1}{b_n}\sum_{k=1}^n \frac{1}{a_k}\rightarrow c$ for some constant $c$, then $\frac{1}{b_n}\sum_{k=1}^n \frac{Z_k}{a_k}\rightarrow cZ$ a.s.
    \item[(b)] If $b_n\sum_{k\geq n}\frac{1}{a_kb_k^2}\rightarrow c$ for some constant $c$, then ${b_n}\sum_{k\geq n} \frac{Z_k}{a_kb_k^2}\rightarrow c Z$ a.s.
\end{itemize}
\end{thm}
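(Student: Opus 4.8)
The plan is to split each $Z_k$ into a martingale-difference part and a predictable part, namely $Z_k = D_k + E(Z_k\mid\mathcal G_{k-1})$ with $D_k := Z_k - E(Z_k\mid\mathcal G_{k-1})$, and to treat the two resulting contributions to each weighted sum separately. The predictable part is the source of the limit: since $E(Z_k\mid\mathcal G_{k-1})\to Z$ a.s. by hypothesis (taking $k=n+1$ in the assumption), and since the weights in either (a) or (b) form a nonnegative array whose total mass tends to the prescribed constant $c$, a Toeplitz--Ces\`aro argument shows that the predictable contribution converges to $cZ$ a.s. For part (a) the relevant array is $w_{n,k}=1/(b_n a_k)$, $k\le n$; for part (b) it is $w_{n,k}=b_n/(a_k b_k^2)$, $k\ge n$. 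In the tail case (b) the mass is asymptotically concentrated on indices $k\ge n\to\infty$, so one only needs that $E(Z_k\mid\mathcal G_{k-1})$ is eventually within any $\varepsilon$ of $Z$; in the partial-sum case (a) one additionally uses $w_{n,k}\to 0$ for each fixed $k$, which holds because $b_n\uparrow+\infty$. The remaining task is to prove that the martingale-difference contribution vanishes a.s.

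For the martingale part of (a), note that $(D_k)$ is a martingale-difference sequence for $(\mathcal G_k)$ and that $E(D_k^2)\le E(Z_k^2)$ because conditioning contracts the $L^2$ norm. Hence the martingale $M_n=\sum_{k=1}^n D_k/(a_k b_k)$ satisfies $\sum_{k}E[(D_k/(a_k b_k))^2]\le\sum_k E(Z_k^2)/(a_k^2 b_k^2)<+\infty$ by hypothesis, so it is $L^2$-bounded and converges a.s. Applying Kronecker's lemma to the a.s.-convergent series $\sum_k (D_k/a_k)/b_k$, with $b_n\uparrow+\infty$, yields $b_n^{-1}\sum_{k=1}^n D_k/a_k\to 0$ a.s. Combined with the Toeplitz step this gives $b_n^{-1}\sum_{k=1}^n Z_k/a_k\to cZ$ a.s.

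The delicate point is the tail-sum martingale part of (b), where Kronecker's lemma does not apply directly. Here I would set $\Delta M_k=D_k/(a_k b_k)$, so that the $L^2$-bounded martingale $M_k=\sum_{j\le k}\Delta M_j$ converges a.s. to some $M_\infty$, and introduce the convergent tail remainders $\rho_k=M_\infty-M_{k-1}=\sum_{j\ge k}\Delta M_j$, which satisfy $\rho_k\to 0$ a.s. Summation by parts then rewrites $b_n\sum_{k\ge n}D_k/(a_k b_k^2)=b_n\sum_{k\ge n}\Delta M_k/b_k$ as $\rho_n-\sum_{k> n}w_{n,k}\rho_k$, where $w_{n,k}=b_n(1/b_{k-1}-1/b_k)\ge 0$ (using $b_k\uparrow$) and $\sum_{k>n}w_{n,k}=1$ by telescoping together with $1/b_k\to 0$. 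Since $\rho_k\to 0$ a.s. and the $w_{n,k}$ are probability weights supported on $k>n\to\infty$, the same tail-Toeplitz argument forces $\sum_{k>n}w_{n,k}\rho_k\to 0$ a.s., whence $b_n\sum_{k\ge n}D_k/(a_k b_k^2)\to 0$ a.s.; adding the predictable contribution gives the claim.

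The main obstacle is precisely this tail version of Kronecker's lemma in part (b): unlike ordinary Kronecker, convergence of $\sum_k \Delta M_k$ alone does not make $b_n\rho_n\to 0$, and the summation-by-parts identity above is what converts the problematic factor $b_n$ into a genuine averaging operator against the null sequence $(\rho_k)$. A secondary technical caveat, already present in the Toeplitz steps, is that the uniform $\ell^1$ bound on the weight arrays requires the $1/a_k$ (equivalently the $w_{n,k}$) to be eventually of one sign; in the intended application the $a_k$ are positive, so this holds, but in full generality one should either impose eventual sign-constancy of $a_k$ or strengthen the hypotheses so that $b_n^{-1}\sum_{k\le n}1/|a_k|$ (resp.\ $b_n\sum_{k\ge n}1/(|a_k|b_k^2)$) stays bounded.
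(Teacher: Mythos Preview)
The paper does not give its own proof of this statement: it is quoted verbatim as Lemma~4.1 of \cite{crimaldi2016} and used as a black box in the proof of Theorem~\ref{th:clt}. So there is no in-paper argument to compare your proposal against.

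On its own merits your proof is correct and is essentially the standard route to such results: the predictable/martingale-difference decomposition, the $L^2$-martingale bound $E(D_k^2)\le E(Z_k^2)$ combined with the summability hypothesis, Kronecker's lemma for~(a), and the Abel-summation ``tail Kronecker'' identity for~(b). Your summation-by-parts computation for~(b) is right, including the telescoping $\sum_{k>n}w_{n,k}=1$, and this is exactly what makes the tail case go through. The sign caveat you flag at the end is real in the abstract statement; in every use the paper makes of part~(b) one has $a_k=(b_k\alpha_k)^{-2}>0$, so the Toeplitz bound is immediate there.
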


\subsection{Delta method for almost-sure conditional convergence}

In this section, we extend the delta method to almost sure conditional convergence.
Let $\hb$ be a function defined on a convex set $C\in\mathcal B(\mathbb R^d)$ with values in $\mathbb R^p$. The function $\hb$ is said to be uniformly differentiable on $ C $ if there exists, for every $\vb^*\in C $ a linear function $\Db\hb_{\vb^*}: C \rightarrow \mathbb R^p$ such that the following condition holds: for every $\epsilon >0$ there exists $\delta$ such that
\begin{equation}\label{eq:unifdiff}
\frac{||\hb(\vb)-\hb(\vb^*)-\Db\hb_{\vb^*}(\vb-\vb^*)||}{||\vb-\vb^*||}<\epsilon \quad \mbox{ for every }\quad \vb\neq \vb^*,||\vb-\vb^*||<\delta.
\end{equation}

\begin{thm}\label{th:delta}
           Let $(\vb_n)$ be a sequence of random vectors taking values in a convex set $ C \in\mathcal B(\mathbb R^d)$ with values in $\mathbb R^p$, adapted to the filtration $(\mathcal G_n)$, and let $\tilde \vb$ be a random vector taking values in $ C $. Assume that there exist an increasing and diverging sequence $(r_n)$ of positive numbers and a kernel $K(\cdot,\omega)$ tight with respect to $\omega$, $\PP$-a.s., such that $r_n(\vb_n-\tilde \vb)$ converges in the sense of almost sure conditional convergence to $K(\cdot,\omega)$. If $\hb$ is a uniformly differentiable function on $ C $, then $r_n(\hb(\vb_n)-\hb(\tilde \vb))$ converges in the sense of almost sure conditional convergence to the kernel $K(\Db\hb_{\tilde \vb}^{-1}(\cdot),\omega)$.
\end{thm}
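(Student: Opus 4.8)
The plan is to verify the definition of almost-sure conditional convergence directly. Write $K_n^{\hb}(\cdot,\omega)$ for a regular version of the conditional law of $r_n(\hb(\gb_n)-\hb(\tilde\gb))$ given $\mathcal G_n$. Fix a countable convergence-determining family of bounded Lipschitz functions on $\mathbb R^p$; it suffices to show that for each member $f$ of that family one has, for $\PP$-a.e.\ $\omega$, that $\int f\,dK_n^{\hb}(\cdot,\omega)\to\int f(\Db\hb_{\tilde\gb(\omega)}(x))\,K(dx,\omega)$, since the right-hand side equals $\int f\,d\big[K(\Db\hb_{\tilde\gb(\omega)}^{-1}(\cdot),\omega)\big]$ by the change-of-variables formula for push-forwards; intersecting the countably many full-measure sets then gives the statement, the target being a genuine kernel because $\gb^*\mapsto\Db\hb_{\gb^*}$ is continuous (see below) and $\tilde\gb$ is measurable. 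Two preliminary facts are used throughout. (a) $\gb_n\to\tilde\gb$ $\PP$-a.s.: since $K$ is a probability kernel, $(K_n(\cdot,\omega))_n$ is tight for a.e.\ $\omega$, which with $r_n\uparrow+\infty$ forces $\PP(\|\gb_n-\tilde\gb\|>\epsilon\mid\mathcal G_n)\to0$ a.s.\ for every $\epsilon>0$, so $\tilde\gb$ is $\mathcal G_\infty$-measurable; then for bounded Lipschitz $h$ one has $E[h(\tilde\gb)\mid\mathcal G_n]\to h(\tilde\gb)$ a.s.\ by martingale convergence, while $E[h(\tilde\gb)\mid\mathcal G_n]=\int h(\gb_n-r_n^{-1}v)\,K_n(dv,\omega)$ differs from $h(\gb_n)$ by a vanishing amount (as $r_n^{-1}\to0$ and $(K_n(\cdot,\omega))_n$ is tight), and running this over countably many $h$ gives $\gb_n\to\tilde\gb$ a.s. (b) uniform differentiability on the convex set $\Delta$ implies $\gb^*\mapsto\Db\hb_{\gb^*}$ is uniformly continuous in operator norm, with some modulus $\varrho$ ($\varrho(0^+)=0$); this follows by writing the first-order expansion of $\hb$ around two nearby base points and subtracting.

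Now fix $\omega$ in the full-measure set on which (a), (b) and $K_n(\cdot,\omega)\Rightarrow K(\cdot,\omega)$ all hold, put $V_n=r_n(\gb_n-\tilde\gb)$, and decompose
\[
r_n(\hb(\gb_n)-\hb(\tilde\gb))=\Db\hb_{\tilde\gb}(V_n)+r_nR_n,\qquad R_n=\hb(\gb_n)-\hb(\tilde\gb)-\Db\hb_{\tilde\gb}(\gb_n-\tilde\gb).
\]
For $f$ in the family with Lipschitz constant $L$, $|f(\Db\hb_{\tilde\gb}(V_n)+r_nR_n)-f(\Db\hb_{\tilde\gb}(V_n))|\le\min\{2\|f\|_\infty,\,L\,r_n\|R_n\|\}$, and by uniform differentiability $r_n\|R_n\|\le\epsilon\|V_n\|$ on $\{\|\gb_n-\tilde\gb\|<\delta(\epsilon)\}$. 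Taking $E[\cdot\mid\mathcal G_n]$, bounding the right-hand side separately on that event and on its complement (where $\PP(\cdot\mid\mathcal G_n)\to0$), using $K_n(\cdot,\omega)\Rightarrow K(\cdot,\omega)$, and then letting $\epsilon\downarrow0$, one gets $\int f\,dK_n^{\hb}(\cdot,\omega)-E[f(\Db\hb_{\tilde\gb}(V_n))\mid\mathcal G_n]\to0$. It therefore remains to establish $E[f(\Db\hb_{\tilde\gb}(V_n))\mid\mathcal G_n]\to\int f(\Db\hb_{\tilde\gb(\omega)}(x))\,K(dx,\omega)$.

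This last step is the main obstacle: $\Db\hb_{\tilde\gb}$ is $\mathcal G_\infty$-measurable but not $\mathcal G_n$-measurable, so it cannot be treated as a constant under conditioning on $\mathcal G_n$, and no off-the-shelf continuous-mapping theorem applies directly. The remedy is to replace $\Db\hb_{\tilde\gb}$ by the $\mathcal G_n$-measurable differential $\Db\hb_{\gb_n}$: since $\tilde\gb=\gb_n-r_n^{-1}V_n$, fact (b) gives $\|\Db\hb_{\tilde\gb}(V_n)-\Db\hb_{\gb_n}(V_n)\|\le\varrho(r_n^{-1}\|V_n\|)\,\|V_n\|$, whose right-hand side is a $\mathcal G_n$-measurable bounded continuous function of $V_n$ tending to $0$ uniformly on compacts, so tightness of $(K_n(\cdot,\omega))_n$ yields $E[|f(\Db\hb_{\tilde\gb}(V_n))-f(\Db\hb_{\gb_n}(V_n))|\mid\mathcal G_n]\to0$. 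Finally $E[f(\Db\hb_{\gb_n}(V_n))\mid\mathcal G_n]=\int f(\Db\hb_{\gb_n(\omega)}(v))\,K_n(dv,\omega)$, and, since $\Db\hb_{\gb_n(\omega)}\to\Db\hb_{\tilde\gb(\omega)}$ in operator norm by (a)--(b) while $K_n(\cdot,\omega)\Rightarrow K(\cdot,\omega)$, a standard argument on converging linear maps applied to weakly converging tight measures---split $\int f(\Db\hb_{\gb_n}v)\,K_n(dv)-\int f(\Db\hb_{\tilde\gb}x)\,K(dx)$ into a piece controlled via $\|\Db\hb_{\gb_n}-\Db\hb_{\tilde\gb}\|_{\mathrm{op}}$ on a compact carrying most of the $K_n$-mass, plus a piece vanishing because $f\circ\Db\hb_{\tilde\gb}$ is bounded continuous---gives convergence to $\int f(\Db\hb_{\tilde\gb(\omega)}(x))\,K(dx,\omega)$. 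Assembling the three limits and intersecting over the countably many $f$ completes the proof; apart from the measurability obstruction, the only care-points are the two preliminary facts, both elementary.
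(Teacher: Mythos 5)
Your proof is correct and shares the paper's skeleton: linearize via uniform differentiability and kill the remainder $r_n\bigl(\hb(\gb_n)-\hb(\tilde\gb)\bigr)-\Db\hb_{\tilde\gb}\bigl(r_n(\gb_n-\tilde\gb)\bigr)$ using tightness of $K$ and $r_n\uparrow\infty$, which is exactly the content of the paper's Lemmas \ref{lem:delta1}--\ref{lem:delta2}. Where you genuinely diverge is in transferring the weak limit through the derivative. The paper fixes $\omega$, takes $A$ with $K(\Db\hb_{\tilde\gb}^{-1}(\partial A),\omega)=0$, and sandwiches $P_n\bigl(r_n(\hb(\gb_n)-\hb(\tilde\gb))\in A\bigr)$ between $K(\Db\hb_{\tilde\gb}^{-1}(A_\epsilon),\omega)$ and $K(\Db\hb_{\tilde\gb}^{-1}(A^\epsilon),\omega)$; in the key step it passes from $P_n\bigl(\Db\hb_{\tilde\gb}(r_n(\gb_n-\tilde\gb))\in A^\epsilon\bigr)(\omega)$ to $K(\Db\hb_{\tilde\gb}^{-1}(A^\epsilon),\omega)$, i.e.\ it treats the random map $\Db\hb_{\tilde\gb}$ as if it were frozen under conditioning on $\mathcal G_n$, which is precisely the non-$\mathcal G_n$-measurability issue you flag as ``the main obstacle'' and which the published argument leaves implicit. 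You instead test against a countable convergence-determining family of bounded Lipschitz functions and substitute the $\mathcal G_n$-measurable surrogate $\Db\hb_{\gb_n}$, at the cost of two extra ingredients: (a) $\gb_n\rightarrow\tilde\gb$ $\PP$-a.s.\ (your tightness-plus-L\'evy upward argument is sound) and (b) that uniform differentiability yields a uniform modulus of continuity for $\gb\mapsto\Db\hb_{\gb}$ (standard by subtracting two first-order expansions; on a general convex $\Delta$ this pins the differential down only on directions in the affine hull/feasible cone, but those are the only directions appearing in your bounds and the only ones on which $\Db\hb$ is determined at all, and in the paper's application $\Delta$ is a full-dimensional simplex). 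In short, the paper's portmanteau sandwich is shorter and needs neither (a) nor (b); your route is longer but makes the conditioning rigorous at the one place where the paper's own proof is loose, so the extra work buys genuine added care rather than mere redundancy.
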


The proof Theorem \ref{th:delta} is split into several lemma.
 
\begin{lem}\label{lem:delta1}
Under the assumptions of Theorem \ref{th:delta},
for every $\epsilon >0$ 
$$
\PP\left(\frac{||\hb(\vb_n)-\hb(\tilde \vb)-\Db\hb_{\tilde \vb}(\vb_n-\tilde \vb)||}{||\vb_n-\tilde \vb||}>\epsilon \mid\mathcal G_n\right)\rightarrow 0 \quad \PP\mbox{-a.s.}
$$
as $n\rightarrow\infty$.
\end{lem}
\begin{proof}
Fix $\epsilon >0$ and let $\delta$ be such that \eqref{eq:unifdiff} holds. Then
    \begin{align*}
        \PP\left(\frac{||\hb(\vb_n)-\hb(\tilde \vb)-D\hb_{\tilde \vb}(\vb_n-\tilde \vb)||}{||\vb_n-\tilde \vb||}>\epsilon \mid\mathcal G_n\right)&\leq 
        \PP(||\vb_n-\tilde \vb||\geq \delta\mid\mathcal G_n)\\&\leq 
        \PP(r_n||\vb_n-\tilde \vb||\geq r_n\delta\mid\mathcal G_n).
           \end{align*}
Fix $\eta$, and let $M$ be such that $K([-M,M]^d,\omega)>1-\eta$ for almost every $\omega$. Then, there exists $n_0=n_0(\omega)$ such that, for $\omega$ in a set of probability one, and $n>n_0(\omega)$,
$$
\PP(r_n||\vb_n-\tilde\vb||\geq M\mid\mathcal G_n)(\omega)<2\eta.
$$
It follows that
$$
\limsup_n  \PP\left(\frac{||\hb(\vb_n)-\hb(\tilde \vb)-\Db\hb_{\tilde \vb}(\vb_n-\tilde \vb)||}{||\vb_n-\tilde \vb||}>\epsilon \mid\mathcal G_n\right)<2\eta\quad\quad \PP\mbox{-a.s.}
$$
The claim follows from the arbitrariness of $\eta$ in a countable dense subset of $\mathbb R^+$.
\end{proof}

\begin{lem}\label{lem:delta2}
    Under the assumptions of Theorem \ref{th:delta}, for every $\epsilon>0$,
    $$
    \PP\left(||r_n(\hb(\vb_n)-\hb(\tilde \vb))-r_n\Db\hb_{\tilde \vb}(\vb_n-\tilde \vb)||>\epsilon\mid\mathcal G_n\right)\rightarrow 0\quad \PP\mbox{-a.s.}
    $$
    as $n\rightarrow\infty$.
\end{lem}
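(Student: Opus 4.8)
The plan is to derive Lemma~\ref{lem:delta2} from Lemma~\ref{lem:delta1} together with the conditional tightness of $r_n(\gb_n-\tilde\gb)$ built into the hypotheses of Theorem~\ref{th:delta}. The starting observation is the factorization, valid on the event $\{\gb_n\neq\tilde\gb\}$,
$$
r_n\big\|\hb(\gb_n)-\hb(\tilde\gb)-\Db\hb_{\tilde\gb}(\gb_n-\tilde\gb)\big\|
=\big(r_n\|\gb_n-\tilde\gb\|\big)\cdot\frac{\big\|\hb(\gb_n)-\hb(\tilde\gb)-\Db\hb_{\tilde\gb}(\gb_n-\tilde\gb)\big\|}{\|\gb_n-\tilde\gb\|},
$$
while on $\{\gb_n=\tilde\gb\}$ the left-hand side is $0$. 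So the quantity to be controlled is a product of a factor that is conditionally tight and a factor that is conditionally negligible by Lemma~\ref{lem:delta1}.

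Fix $\epsilon>0$ and $\eta>0$. By the assumed almost-sure conditional convergence of $r_n(\gb_n-\tilde\gb)$ to the tight kernel $K(\cdot,\omega)$, choose $M=M(\eta)$ with $K(\{\|x\|<M\},\omega)>1-\eta$ for $P$-almost every $\omega$; then, exactly as in the proof of Lemma~\ref{lem:delta1}, $\limsup_n P_n\big(r_n\|\gb_n-\tilde\gb\|\ge M\big)(\omega)\le 2\eta$ on a set of probability one. On the complementary event $\{r_n\|\gb_n-\tilde\gb\|<M\}$ the factorization shows that the error can exceed $\epsilon$ only if the ratio exceeds $\epsilon/M$, whence
$$
\begin{aligned}
P_n\Big(\big\|r_n(\hb(\gb_n)-\hb(\tilde\gb))-r_n\Db\hb_{\tilde\gb}(\gb_n-\tilde\gb)\big\|>\epsilon\Big)
&\le P_n\big(r_n\|\gb_n-\tilde\gb\|\ge M\big)\\
&\quad+P_n\!\left(\frac{\big\|\hb(\gb_n)-\hb(\tilde\gb)-\Db\hb_{\tilde\gb}(\gb_n-\tilde\gb)\big\|}{\|\gb_n-\tilde\gb\|}>\frac{\epsilon}{M}\right).
\end{aligned}
$$
The second term tends to $0$ $P$-a.s.\ by Lemma~\ref{lem:delta1} applied with $\epsilon/M$ in place of $\epsilon$ (legitimate since $M$ is fixed once $\eta$ is fixed). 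Taking $\limsup_n$ of both sides gives $\limsup_n P_n(\,\cdot\,>\epsilon)\le 2\eta$ $P$-a.s., and letting $\eta\downarrow0$ along $\eta=1/m$ yields the claim.

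The argument is essentially bookkeeping rather than analysis: one must intersect the $P$-null exceptional sets coming from Lemma~\ref{lem:delta1}, from conditional tightness, and from the countable sequence $\eta=1/m$, and one must pass freely between Euclidean balls and coordinate cubes when invoking tightness, which only changes the constant $M$. There is no genuine obstacle; all the substance is already contained in Lemma~\ref{lem:delta1} and in the conditional tightness hypothesis.
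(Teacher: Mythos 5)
Your proposal is correct and follows essentially the same route as the paper: decompose the quantity as the product of $r_n\|\gb_n-\tilde\gb\|$ and the differentiability ratio, bound the conditional probability by the sum of a tightness term (controlled as in Lemma \ref{lem:delta1}) and a term handled by Lemma \ref{lem:delta1} applied with threshold $\epsilon/M$, then let $\eta\downarrow 0$. The only differences are cosmetic (explicit treatment of the event $\{\gb_n=\tilde\gb\}$ and the $\eta=1/m$ bookkeeping for the null sets), which the paper leaves implicit.
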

\begin{proof}
It can be proved as in Lemma \ref{lem:delta1} that, for every $\eta>0$, there exists $n_0=n_0(\omega)$ such that, for $\omega$ in a set of probability one and $n\geq n_0(\omega)$,
$$
\PP(r_n||\vb_n-\tilde \vb||\geq M\mid\mathcal G_n)(\omega)<\eta.
$$
Thus, for $n>n_0(\omega)$, it holds, $\PP$-a.s. that
    \begin{align*}
    & \PP\left(||r_n(\hb(\vb_n)-\hb(\tilde \vb))-r_n\Db\hb_{\tilde \vb}(\vb_n-\tilde \vb)||>\epsilon\mid\mathcal G_n\right)(\omega)\\
       &\quad\leq \PP\left(\frac{||\hb(\vb_n)-\hb(\tilde \vb)-\Db\hb_{\tilde \vb}(\vb_n-\tilde \vb)||}{||\vb_n-\tilde \vb||}\;r_n||\vb_n-\tilde \vb||>\epsilon\mid\mathcal G_n \right)(\omega)\\
       &\quad \leq 
        \PP(r_n||\vb_n-\tilde \vb||\geq M\mid\mathcal G_n)(\omega)+ \PP\left(M\frac{||\hb(\vb_n)-\hb(\tilde \vb)-\Db\hb_{\tilde \vb}(\vb_n-\tilde \vb)||}{||\vb_n-\tilde \vb||}>\epsilon \mid\mathcal G_n\right)(\omega)\\
        &\quad\leq 
        \eta+\PP\left(\frac{||\hb(\vb_n)-\hb(\tilde \vb)-\Db\hb_{\tilde \vb}(\vb_n-\tilde\vb)||}{||\vb_n-\tilde\vb||}>\frac{\epsilon}{M} \mid\mathcal G_n\right)(\omega).
           \end{align*}
The claim follows from Lemma \ref{lem:delta1} and the arbitrariness of $\eta$.

\end{proof}

\begin{proof}[of Theorem \ref{th:delta}]
    Let $N^c$ be an event such that $P(N^c)=1$, $\{K(\cdot,\omega):\omega\in N^c\}$ is tight, $\PP(r_n(\vb_n-\tilde \vb)\in\cdot\mid\mathcal G_n)(\omega)\rightarrow K(\cdot,\omega) $ and 
    $\PP(||r_n(\hb(\vb_n)-\hb(\tilde\vb))-r_n\Db\hb_{\tilde \vb}(\vb_n-\tilde\vb)||>\epsilon\mid\mathcal G_n)(\omega)\rightarrow 0$ 
    for every $\epsilon >0$. Fix $\omega\in N^c$ and let $A\in\mathcal B(\mathbb R^p)$ be such that $K(\Db\hb_{\tilde \vb}^{-1}(\partial A),\omega)=0$. For every $\epsilon>0$, let $A^\epsilon=\{\xb\in \mathbb R^p:{\rm d}(\xb,A)\leq \epsilon\}$. Then
    \begin{align*}
        &\limsup_n \PP(r_n(\hb(\vb_n)-\hb(\tilde \vb))\in A\mid\mathcal G_n)(\omega)\\&\quad\leq 
        \limsup_n \PP(||r_n(\hb(\vb_n)-\hb(\tilde \vb))-r_n\Db\hb_{\tilde\vb}(\vb_n-\tilde\vb)||>\epsilon\mid\mathcal G_n)(\omega)\\
        &\quad\quad\quad\quad+\limsup_n\PP(\Db\hb_{\tilde\vb}(r_n(\vb_n-\tilde\vb))\in A^\epsilon\mid\mathcal G_n)(\omega)\\
         &\quad\leq K(\Db\hb_{\tilde\vb}^{-1}(A^\epsilon),\omega).
            \end{align*}
            Analogously,
             \begin{align*}
&\limsup_n \PP(r_n(\hb(\vb_n)-\hb(\tilde\vb))\in A^c\mid\mathcal G_n)(\omega)\leq K(\Db\hb_{\tilde\vb}^{-1}((A^c)^\epsilon),\omega).
             \end{align*}
             Thus,
\begin{align*}
&\liminf_n \PP(r_n(\hb(\vb_n)-\hb(\tilde\vb))\in A\mid\mathcal G_n)(\omega)\geq K(\Db\hb_{\tilde\vb}^{-1}(A_\epsilon),\omega),
             \end{align*}
             where $A_\epsilon$ is the complement of $(A^c)^\epsilon$.
             Thus, for every $\epsilon>0$,
\begin{align*}
    K(\Db\hb_{\tilde\vb}^{-1}(A_\epsilon),\omega)&\leq 
    \liminf_n \PP(r_n(\hb(\vb_n)-\hb(\tilde\vb))\in A\mid\mathcal G_n)(\omega)\\&\leq 
    \limsup_n \PP(r_n(\hb(\vb_n)-\hb(\tilde\vb))\in A\mid\mathcal G_n)(\omega)\\&\leq K(\Db\hb_{\tilde\vb}^{-1}(A^\epsilon),\omega).
\end{align*}           
Since $K(\Db\hb_{\tilde\vb}^{-1}(\partial A,\omega)=0$, then $ K(\Db\hb_{\tilde\vb}^{-1}(A^\epsilon),\omega)-K(\Db\hb_{\tilde\vb}^{-1}(A_\epsilon),\omega)\rightarrow 0$, as $\epsilon\rightarrow 0$. Thus
$\lim_n \PP(r_n(\hb(\vb_n)-\hb(\tilde\vb))\in A\mid\mathcal G_n)(\omega)= K(\Db\hb_{\tilde\vb}^{-1}(A),\omega).$
    \end{proof}

%%%%%%%%%%%%%%%%%%%%%%%%%%%%%%%%
%%%%%%%%%%%%%%%%%%%%%%%%%%%%%%%%
%%%%%%%%%%%%%%%%%%%%%%%%%%%%%%%%
%%%%%%%%%%%%%%%%%%%%%%%%%%%%%%%%

\section{Proofs}\label{App_proof}

Throughout, we denote by $k(\cdot\,|\,\theta)$ the Poisson kernel with mean $\theta>0$, i.e. $k(y\,|\,\theta)=\theta^{y}\text{e}^{-\theta}/y!$ for $y\in\mathbb{N}_{0}$.

\subsection{Proof of Theorem \ref{th:cid}}\label{app3_0}
By Theorem 1  in \cite{For(20)}, the sequence $(G_n)_{n\geq 0}$ is a martingale measure with respect to the filtration $(\mathcal G_n)_{n\geq 0}$, converging weakly $\PP$-a.s. to a random probability measure $\tilde G$. Furthermore, by Proposition 1 in \cite{For(20)}, for every $y\in\mathbb N_0$, $(p_{G_n}(y))_{n\geq 0}$ is a martingale with respect to to $(\mathcal G_n)_{n\geq 0}$ converging $\PP$-a.s.
to $p_{\tilde G}(y)=\int_\Theta k(y\mid\theta)\tilde G(\ddr\theta)$.
 It follows that, for every $y\in\mathbb N_0$, every $n\geq 0$ and every $k\geq 1$,
$$
\PP(Y_{n+k}=y\mid \mathcal G_n)=
%E(\PP(Y_{n+k}=y\mid\mathcal G_{n+k-1})\mid \mathcal G_n)=
E(p_{G_{n+k-1}}(y)\mid\mathcal G_n)=p_{G_n}(y)=E(p_{\tilde G}(y)\mid\mathcal G_n),
$$
which implies that $\PP(Y_n=y)=\int p_{\tilde G}(y)d\PP$, and that the sequence $(Y_n)$ is conditionally identically distributed with respect to the filtration $(\mathcal G_n)$ \citep{Berti(04)}. By Lemma 2.1 and Theorem 2.2 in \cite{Berti(04)}, with $f(Y_n)=I(Y_n=y)$, $p_{\tilde G}(y)$ is the $\PP$-a.s. limit of $\frac 1 n \sum_{k=1}^n I(Y_k=y)$.

\subsection{Proof of Theorem \ref{th:clt2}}\label{app3}

The proof is based on [\cite{crimaldi2009}, Theorem A.1], [\cite{crimaldi2016}, Lemma 4.1]
(see theorems \ref{th:app:almost sure conditional} and \ref{th:app:almost sure2}) and on the Delta method for almost sure conditional convergence (see Theorem \ref{th:delta}).
Define, for every $y\in\mathbb N_0$ and $n\geq 0$,
$$
\Delta_n(y)=p_{G_n}(y)-p_{G_{n-1}}(y)=\alpha_n\left[\int_\Theta k(y\mid\theta)G_{n-1}(\ddr\theta\mid Y_n)-\int_\Theta k(y\mid\theta)G_{n-1}(\ddr\theta)\right],
$$
and fix a vector $\ub=[u_1,u_2]$ such that $||\ub||=1$. For every $n\geq 0$, let 
$$M_{n,0}=0, \quad M_{n,j}=b_n^{1/2}\sum_{i=1}^{j-1} (u_1\Delta_{n+i}(y)+u_2 \Delta_{n+i}(y+1))\quad (j\geq 1),$$
 $$
       \mathcal F_{n,0}=\mathcal G_n,\quad \mathcal F_{n,j}=\mathcal G_{n+j-1}\mbox{ for }j\geq 1,
       $$
For every $n\geq 0$, the sequence $(M_{n,j})_{j\geq 0}$ is a martingale with respect to $(\mathcal F_{n,j})_{j\geq 0}$, converging in $L^1$ to 
$$M_{n,\infty}=b_n^{1/2}\left[ u_1(p_{\tilde G}(y)-p_{G_n}(y))+u_2(p_{\tilde G}(y+1)-p_{G_n}(y+1))\right].$$
Set
$$
X_{n,j}=M_{n,j}-M_{n,j-1}\mbox{ for }j\geq 1, \quad U_n=\sum_{j\geq 1}X_{n,j}^2,\quad X_n^*=\sup_{j\geq 1}|X_{n,j}|.
$$
Then $X_{n,1}=0$ and, for $j\geq 2$,
\begin{align*}
    X_{n,j}=b_n^{1/2}\alpha_{n+j-1}\left[u_1(p_{G_{n+j-2}}(y\mid Y_{n+j-1})-p_{G_{n+j-2}}(y))
    +u_2(p_{G_{n+j-2}}(y+1\mid Y_{n+j-1})-p_{G_{n+j-2}}(y+1))\right],
  \end{align*}
  where, for every $G$, $y$ and $z$, $$p_G(y\mid z)=\int_\Theta k(y\mid\theta)G(\ddr\theta\mid z)\quad\mbox{ with }\quad G(\ddr\theta\mid z)=\frac{k(z\mid\theta)G(\ddr\theta)}{\int_\Theta k(z\mid\theta)G(\ddr\theta)}.$$
Since
$
|X_n^*|\leq b_n^{1/2}\sup_{j\geq 1}\alpha_{n+j}
$, then $X_n^*$ is dominated in $L^1$ and converges to zero $\PP$-a.s.
Furthermore
$$
U_n=b_n\sum_{k>  n}\alpha_{k}^2 Z_k,
$$
with
$$
Z_k=\left[u_1(p_{G_{k-1}}(y\mid Y_{k})-p_{G_{k-1}}(y))
    +u_2(p_{G_{k-1}}(y+1\mid Y_{k})-p_{G_{k-1}}(y+1))\right]^2
$$
For every $k\geq 1$, let $a_k=(b_k\alpha_k)^{-2}$. The sequence $(Z_k)_{k\geq 1}$ is adapted to the filtration $(\mathcal G_k)_{k\geq 1}$ and $\EE(Z_{k+1}\mid \mathcal G_k)$ converges $\PP$-a.s. to $Z=\ub^T\Sigma \ub$, with 
    \begin{align*}
        &\Sigma_{1,1}=\sum_{z\in\mathbb N_0}p_{\tilde G}(y\mid z)^2p_{\tilde G}(z)-p_{\tilde G}(y)^2,\\ &\Sigma_{1,2}=\Sigma_{2,1}=\sum_{z\in\mathbb N_0}p_{\tilde G}(y\mid z)p_{\tilde G}(y+1\mid z)p_{\tilde G}(z)-p_{\tilde G}(y)p_{\tilde G}(y+1),\\
        &\Sigma_{2,2}=\sum_{z\in\mathbb N_0}p_{\tilde G}(y+1\mid z)^2p_{\tilde G}(z)-p_{\tilde G}(y+1)^2.
      \end{align*}
Furthermore,
$$
\sum_{k=1}^\infty \frac{E(Z_k^2)}{a_k^2b_k^2}\leq \sum_{k=1}^\infty b_k^2\alpha_k^4<\infty.
$$
Since  $b_n\sum_{k\geq  n}\frac{1}{a_kb_k^2}\rightarrow 1$, then by Theorem \ref{th:app:almost sure2}, $$b_n\sum_{k\geq n}\alpha_k^2Z_k=b_n\sum_{k\geq   n} \frac{Z_k}{a_kb_k^2}\rightarrow  Z\quad \PP\mbox{-a.s.}$$
It follows that
\begin{align*}
    U_n=b_n\sum_{k>   n} \alpha_k^2Z_k\rightarrow  Z\quad \PP\mbox{-a.s.}
\end{align*}
since $b_n\alpha_n^2Z_n\rightarrow  0$  $\PP$-a.s. (as $\sum_{n\geq 1} \alpha_n^4b_n^2<\infty$ by assumption and $\sup_n|Z_n|\leq 1$). 
 By Theorem \ref{th:app:almost sure conditional} and the arbitrariness of $\ub$ in a countable dense subset of the unit sphere,
$$
b_n^{1/2}\left(
\left[ 
\begin{array}{c}
p_{G_n}(y)\\
p_{G_n}(y+1)
\end{array}
\right]-
\left[ 
\begin{array}{c}
p_{\tilde G}(y)\\
p_{\tilde G}(y+1)
\end{array}
\right]\right)
\stackrel{a.s.\;cond}{\rightarrow}\mathcal N(0,\Sigma).
$$
To conclude, we apply the Delta method for almost sure conditional convergence (see Theorem \ref{th:delta}). We can write that 
$$
\hat\theta_G(y)=h(p_G(y),p_G(y+1)),
$$
with $h(v_1,v_2)=(y+1)v_2/v_1$. The function $h$ is continuously differentiable in the convex set $C=(0,1]\times (0,1]$ and the kernel $\mathcal N(0,\Sigma)$ is $\PP$-a.s. tight since $\Sigma_{1,1}$ and $\Sigma_{2,2}$ are bounded by 1. By Theorem \ref{th:delta} and standard calculus we obtain
$$
b_n^{1/2}(\hat\theta_{G_n}-\hat\theta_{\tilde G}(y))\stackrel{a.s.\;cond.}{\longrightarrow}\mathcal N(0, W_{\tilde G}(y)),
$$
with 
$$
W_{\tilde G}(y)=\hat\theta_{\tilde G}(y)^2\sum_zp_{\tilde G}(z)\left(\frac{p_{\tilde G}(y\mid z)}{p_{\tilde G}(y)}-\frac{p_{\tilde G}(y+1\mid z)}{p_{\tilde G}(y+1)}\right)^2.
$$

 \subsection{Proof of Theorem \ref{th:clt3}}\label{app3new}
The proof is based on \cite{crimaldi2009}, Theorem A.1, \cite{crimaldi2016}, Lemma 4.1
(see theorems \ref{th:app:almost sure conditional} and \ref{th:app:almost sure2}) and on the Delta method for almost sure conditional convergence (see Theorem \ref{th:delta}).
We first notice that, since $\tilde G$ and $\tilde G_n$ ($n\geq 1$) have compact support, then for every $y$ there exists $\delta(y)>0$ such that $p_{\tilde G}(y)>\delta(y) $ and $p_{G_n}(y)<\delta(y)$ for every $n\geq 1$.
We can write for every mixing distribution $G$
$$
F_G(\theta\mid y)=\frac{A_{G}(\theta\mid y)}{p_G(y)}
$$
where
$$
A_{G}(\theta\mid y)=\int_0^\theta k(y\mid\vartheta)G(\ddr\vartheta),
$$
with $k(y\mid\vartheta)=e^{-\vartheta}\vartheta^y/y!$. Define, for every $y\in\mathbb N_0$ and $n\geq 0$,
$$
\Delta p_n(y)=p_{G_n}(y)-p_{G_{n-1}}(y)=\alpha_n\left[\int_\Theta k(y\mid\theta)G_{n-1}(\ddr\theta\mid Y_n)-\int_\Theta k(y\mid\theta)G_{n-1}(\ddr\theta)\right],
$$
and
$$
\Delta A_n(\theta\mid y)=A_{G_n}(\theta\mid y)-A_{G_{n-1}}(\theta\mid y)=\alpha_n\left[  
\int_0^\theta k(y\mid\vartheta)G_{n-1}(\ddr\vartheta\mid Y_n)-\int_0^\theta k(y\mid\vartheta)G_{n-1}(\ddr\vartheta).
\right]
$$
Fix a vector $\ub=[u_1,u_2]$ such that $||\ub||=1$. For every $n\geq 0$, let 
$$M_{n,0}=0, \quad M_{n,j}=b_n^{1/2}\sum_{i=1}^{j-1} (u_1\Delta A_{{n+i}}(\theta\mid y)+u_2 \Delta p_{n+i}(y))\quad (j\geq 1),$$
 $$
       \mathcal F_{n,0}=\mathcal G_n,\quad \mathcal F_{n,j}=\mathcal G_{n+j-1}\mbox{ for }j\geq 1,
       $$
For every $n\geq 0$, the sequence $(M_{n,j})_{j\geq 0}$ is a martingale with respect to $(\mathcal F_{n,j})_{j\geq 0}$, converging in $L^1$ to 
$$M_{n,\infty}=b_n^{1/2}\left[ u_1(A_{\tilde G}(\theta\mid y)-A_{G_n}(\theta\mid y))+u_2(p_{\tilde G}(y)-p_{G_n}(y))\right].$$
Set
$$
X_{n,j}=M_{n,j}-M_{n,j-1}\mbox{ for }j\geq 1, \quad U_n=\sum_{j\geq 1}X_{n,j}^2,\quad X_n^*=\sup_{j\geq 1}|X_{n,j}|.
$$
Then $X_{n,1}=0$ and, for $j\geq 2$,
\begin{align*}
    X_{n,j}=b_n^{1/2}\alpha_{n+j-1}\left[u_1(A_{G_{n+j-2}(\cdot\mid Y_{n+j-1})}(\theta\mid y)-A_{G_{n+j-2}}(\theta\mid y))
    +u_2(p_{G_{n+j-2}(\cdot\mid  Y_{n+j-1}) }(y)-p_{G_{n+j-2}}(y))\right],
  \end{align*}
  where, for every $G$, $y$ and $z$, 
  $$A_{G(\cdot\mid z)}(\theta\mid y)=\int_0^\theta k(y\mid\vartheta)G(\ddr\vartheta\mid z),\quad 
  p_{G(\cdot\mid z)}(y)=\int_\Theta k(y\mid\theta)G(\ddr\theta\mid z)\quad\mbox{ with }\quad G(\ddr\theta\mid z)=\frac{k(z\mid\theta)G(\ddr\theta)}{\int_\Theta k(z\mid\theta)G(\ddr\theta)}.$$
Since
$
|X_n^*|\leq b_n^{1/2}\sup_{j\geq 1}\alpha_{n+j}
$, then $X_n^*$ is dominated in $L^1$ and converges to zero $\PP$-a.s.
Furthermore
$$
U_n=b_n\sum_{k>  n}\alpha_{k}^2 Z_k,
$$
with
$$
Z_k=\left[u_1(A_{G_{k-1}(\cdot\mid Y_k)}(\theta\mid y)-A_{G_{k-1}}(\theta\mid y))
    +u_2(p_{G_{k-1}(\cdot\mid Y_k)}(y)-p_{G_{k-1}}(y))\right]^2
$$
For every $k\geq 1$, let $a_k=(b_k\alpha_k)^{-2}$. The sequence $(Z_k)_{k\geq 1}$ is adapted to the filtration $(\mathcal G_k)_{k\geq 1}$ and $\EE(Z_{k+1}\mid \mathcal G_k)$ converges $\PP$-a.s. to $Z=\ub^T\Sigma \ub$, with 
    \begin{align*}
        &\Sigma_{1,1}=\sum_{z\in\mathbb N_0}A_{\tilde G(\cdot\mid z)}(\theta\mid y)^2p_{\tilde G}(z)-A_{\tilde G}(\theta\mid y)^2,\\ 
        &\Sigma_{1,2}=\Sigma_{2,1}=\sum_{z\in\mathbb N_0}A_{\tilde G(\cdot\mid z)}(\theta\mid y)p_{\tilde G(\cdot\mid z)}(y)p_{\tilde G}(z)-A_{\tilde G}(\theta\mid y)p_{\tilde G}(y),\\
        &\Sigma_{2,2}=\sum_{z\in\mathbb N_0}p_{\tilde G(\cdot\mid z)}(y)^2p_{\tilde G}(z)-p_{\tilde G}(y)^2.
      \end{align*}
Furthermore,
$$
\sum_{k=1}^\infty \frac{E(Z_k^2)}{a_k^2b_k^2}\leq 2\sum_{k=1}^\infty b_k^2\alpha_k^4<\infty.
$$
Since  $b_n\sum_{k\geq  n}\frac{1}{a_kb_k^2}\rightarrow 1$, then by Theorem \ref{th:app:almost sure2}, $$b_n\sum_{k\geq n}\alpha_k^2Z_k=b_n\sum_{k\geq   n} \frac{Z_k}{a_kb_k^2}\rightarrow  Z\quad \PP\mbox{-a.s.}$$
It follows that
\begin{align*}
    U_n=b_n\sum_{k>   n} \alpha_k^2Z_k\rightarrow  Z\quad \PP\mbox{-a.s.}
\end{align*}
since $b_n\alpha_n^2Z_n\rightarrow  0$  $\PP$-a.s. (as $\sum_{n\geq 1} \alpha_n^4b_n^2<\infty$ by assumption and $\sup_n|Z_n|\leq 1$). 
 By Theorem \ref{th:app:almost sure conditional} and the arbitrariness of $\ub$ in a countable dense subset of the unit sphere,
$$
b_n^{1/2}\left(
\left[ 
\begin{array}{c}
A_{G_n}(\theta\mid y)\\
p_{G_n}(y)
\end{array}
\right]-
\left[ 
\begin{array}{c}
A_{\tilde G}(\theta\mid y)\\
p_{\tilde G}(y)
\end{array}
\right]\right)
\stackrel{a.s.\;cond}{\rightarrow}\mathcal N(0,\Sigma).
$$
To conclude, we apply the Delta method for almost sure conditional convergence (see Theorem \ref{th:delta}). We can write that 
$$
F_G(\theta\mid y)=h(A_G(\theta\mid y),p_G(y)),
$$
with $h(v_1,v_2)=v_1/v_2$. The function $h$ is uniformly differentiable in the convex set $C=[0,1]\times [\delta(y),1]$ and the kernel $\mathcal N(0,\Sigma)$ is $\PP$-a.s. tight since $\Sigma_{1,1}$ and $\Sigma_{2,2}$ are bounded by 1. By Theorem \ref{th:delta} and standard calculus we obtain
$$
b_n^{1/2}(F_{G_n}(\theta\mid y)-F_{\tilde G}(\theta\mid y))\stackrel{a.s.\;cond.}{\longrightarrow}\mathcal N(0, W'_{\tilde G}(y)),
$$
with 
\begin{align*}
    W'_{\tilde G}(y)&=
\sum_zp_{\tilde G}(z)\frac{p_{\tilde G(\cdot\mid z)}(y)^2}{p_{\tilde G}(y)^2}\left(F_{\tilde G(\cdot\mid z)}(\theta\mid y)-F_{\tilde G}(\theta\mid y)\right)^2.
\end{align*}

\subsection{Proof of Theorem \ref{th:regret}}\label{app:proofregret}
The proof is based on Corollary~4.6 in \cite{MarTok(09)}. Since $\Theta$ is compact and $k(\cdot\mid\theta)$ is the Poisson kernel, assumptions A1--A4 in \cite{MarTok(09)} hold, except possibly for $\sum_{n\ge1} a_n\alpha_n^2<\infty$ (with $a_n=\sum_{i=1}^n\alpha_i$), which is not used in the proof of Corollary~4.6. Hence,
$
p_{G_n}=\int_\Theta k(\cdot\mid\theta)\,G_n(\mathrm d\theta)$ converges 
$\PP_{G^*}$-a.s. to $p_{G^\dag}=\int_\Theta k(\cdot\mid\theta)\,G^\dag(\mathrm d\theta)$,
both pointwise and in $L^1$.
Since $\Theta$ is compact, for every $y\in\mathbb N_0$ $p_{G_n}(y)$ are bounded away from zero uniformly with respect to $n$. Therefore $p_{G^\dag}(y)\neq 0$ for every $y\in\mathbb N_0$. By the continuous mapping Theorem, for every $y\in\mathbb N_0$,
$\hat\theta_{G_n}(y)=(y+1)p_{G_n}(y+1)/p_{G_n}(y)$ converges $\PP_{G^*}$-a.s. to $\hat\theta_{G^\dag}(y)=(y+1)p_{G^\dag}(y+1)/p_{G^\dag}(y)$. Denoting by $\theta_M=\max_{\theta\in\Theta} \theta$, we can write that 
$(\hat\theta_{G_n}(y)-\hat\theta_{G^\dag}(y))^2 \leq (y+1)^2 \theta_M^2$, which is summable in $\mathbb N_0$ with  respect $p_{G^*}$. Hence,
$$
\textsc{Regret}(G_n,G^\dag)=\sum_{y\in\mathbb N_0}(\hat\theta_{G_n}(y)-\hat\theta_{G^\dag}(y))^2p_{G^*}(y)\rightarrow 0,
$$
$\PP_{G^*}$-a.s., as $n\rightarrow\infty$.

%%%%%%%%%%%%%%%%%%%%%%%%%%%%%%%%
%%%%%%%%%%%%%%%%%%%%%%%%%%%%%%%%
%%%%%%%%%%%%%%%%%%%%%%%%%%%%%%%%
%%%%%%%%%%%%%%%%%%%%%%%%%%%%%%%%

\section{Rate of convergence}\label{app:freqfinite}

Throughout, we denote by $k(\cdot\,|\,\theta)$ the Poisson kernel with mean $\theta>0$, i.e. $k(y\,|\,\theta)=\theta^{y}\text{e}^{-\theta}/y!$ for $y\in\mathbb{N}_{0}$.

\subsection{Grid construction and rate of convergence}\label{app:results}

The proposition below introduces a procedure for constructing a grid $\Theta_\eta$ such that finite mixtures provide a good approximation for mixing distributions subject to moment constraints.
This construction plays a central role in establishing asymptotic frequentist guarantees. We denote by $\textsc{KL}(p||q)$ the Kullback-Leibler divergence between the probability distributions $p$ and $q$.

\begin{prp}\label{prp_grid}
Let the mixing distribution  $G^*$ on $\Theta\subseteq \mathbb R^+$ be such that $\sum_{y\in\mathbb{N}_{0}}y^{k}p_{G^*}(y)\leq m_{k}$ for some $k\in\mathbb{N}$ with $k\geq 2$ and $m_{k}\in\mathbb{R}$. For a fixed $\eta>0$ define the equally spaced grid $\Theta_{\eta}=\{\vartheta_{i}=i\eta\text{ : }i=1,\ldots,d_{\eta}\}$ with 
\begin{displaymath}
d_{\eta}=\inf\left\{n\in\mathbb{N}\text{ : }n>\eta^{-1}\text{ and }n^{1-k}\log(n\eta)m_{k}\leq\eta^{k}\right\}.
\end{displaymath}
If $\mathcal{S}_{\eta}$ is the set of all the probability distributions supported on $\Theta_{\eta}$, then $G^*_{\eta}=\text{argmin}_{G\in\mathcal{S}_{\eta}}\textsc{KL}(p_{G^*}||p_{G})$ is unique and $\textsc{KL}(p_{G^*}||p_{G^*_{\eta}})<2\eta$. 
\end{prp}

See Supplementary Material \ref{app3_00} for the proof of Proposition \ref{prp_grid}. The probability distribution 
$G^*_{\eta}$ in Proposition \ref{prp_grid} is referred to as the $d_{\eta}$-\textsc{KL}-discretization of $G^*$ on $\Theta\subseteq\mathbb{R}^{+}$. This form of discretization is of independent interest in the context of $g$-modeling approaches to the Poisson compound decision problem \citep{Jan(24),She(24)}.

\begin{thm}\label{th:regretdiscr}
Let  $G^{\ast}$ be an oracle prior on $\Theta\subseteq\mathbb{R}^{+}$ such that $\sum_{y\in\mathbb{N}_{0}}y^{k}p_{G^{\ast}}(y)\leq m_{k}$ for some $k\in\mathbb{N}$ with $k\geq 2$ and $m_{k}\in\mathbb{R}$, and for a fixed $\eta>0$ let $G^{\ast}_{\eta}$ be the $d_{\eta}$-\textsc{KL}-discretization of $G^{\ast}$ on $\Theta_{\eta}$. For each $n\geq1$, let $G_{n}$  be the as in \eqref{eq:newton}, with $G_0(\{\vartheta_j\})>0$ for every $j=1,\dots,d_\eta$, and with $\alpha_n=(\alpha+n)^{-\gamma}$ for $\alpha>0$ and $\gamma\in (1/2,1)$. If $\sum_{y\in\mathbb{N}_{0}}d_{\eta}^{2y}p_{G^{\ast}}(y)<+\infty$ then, for every $\delta<2-1/\gamma$
\begin{equation}\label{optim}
    \textsc{Regret}(G_{n},G_{\eta}^{\ast})=o(n^{-\delta})\quad \PP_{G^*}\,\text{-}\,a.s.
\end{equation}
\end{thm}

The proof is in Section \ref{app4}.  Under the ``true" model for the observations $Y_n$'s, with the oracle prior $G^{\ast}$ belonging to the class of light-tail distributions, Theorem \ref{th:regretdiscr} establishes that $\hat{\theta}_{g_{n}}(y)$ is an asymptotically optimal  estimate of $\hat\theta_{G^{\ast}_{\eta}}(y)$. Although Theorem \ref{th:regret} is stated with respect to the standard choice $\alpha_n=(\alpha+n)^{-\gamma}$, alternative learning rates may also be considered \citep{For(20)}.

\subsection{Notation}
We denote vectors and their components by bold letters as
$\vb=[\vb_1,\dots, \vb_k]^T$, scalar product by $\langle\cdot,\cdot\rangle$, component-wise products by $\circ$, and the Euclidean norm by $||\cdot||$. 

A probability mass functions $g$ defined on $\Theta=\{\vartheta_1,\dots,\vartheta_d\}$ can also be interpreted as a vector $\gb$ in the  simplex $\Delta=\{\vb=[\vb_1,\dots,\vb_{d-1}]^T \vb_i\geq 0,(i=1,\dots,d-1),\sum_{i=1}^{d-1}\vb_i\leq 1\}$, or in $\mathbb R^d$.
In both cases $\gb_i=g(\vartheta_i)$ for all $i$. Similarly, 
$k(y\mid \cdot)$ is sometimes regarded as a vector $\kb_y$ in $\mathbb R^d$. We use the notations $\kb_{y,i}$, $k_{y}(\vartheta_i)$ and $k(y\mid \vartheta_i)$ interchangeably to refer to its $i$th coordinate, for $i=1,\dots,d$. 
When the probability mass function $p_G$ is seen as a function of the vector $\gb$, it is denoted by $p_{\gb}$.
The  closure and boundary of a set $A$ are denoted by $\overline A$ and $\partial A$, respectively.

\subsection{Proof of Proposition \ref{prp_grid}}\label{app3_00}

We first prove that there exists $G\in \mathcal S_\eta$ such that 
$
    \textsc{KL}(p_{G^*}||p_G)< 2\eta.
$
Since the $k$-moment of the Poisson distribution with parameter $\lambda$ is greater than $\lambda^k$, then the assumption $\sum_{y\in \mathbb N_0}y^kp_{G^*}(y)\leq m_k$  implies that 
$$
\int_0^\infty \theta^k G^*(\ddr\theta)\leq m_k.
$$
    For every $i=1,\dots,d_\eta$, let $w_i=G^*((\vartheta_{i-1},\vartheta_i])$ with $\vartheta_0=0$. Let $F$ be the  probability measure
    $$
    G=\sum_{i=1}^{d_\eta}w_i \delta_{\vartheta_i}+G^*(\vartheta_{d_\eta},+\infty)\delta_{\vartheta_{d_\eta}}.
    $$
        Then
    $$
    G=
    \sum_{i=1}^{d_\eta}\delta_{\vartheta_i}\int_{I_i}
    G^*(\ddr\theta)+\delta_{\vartheta_{d_\eta}}\int_{I_{d_\eta+1}}G^*(\ddr\theta),
    $$
    where $I_i=(\vartheta_{i-1},\vartheta_i]$ for $i=1,\dots,d_\eta$ and $I_{d_\eta+1}=(\vartheta_{d_\eta},+\infty)$.
    Since the Kullback-Leibler divergence is convex, then, denoting $M=d_\eta \eta$, we can write that
    \begin{align*}
        \textsc{KL}(p_{G^*}||p_{G})&
        \leq \sum_{y\in\mathbb N_0}\sum_{i=1}^{d_\eta}\int_{I_i}\log\left(\frac{e^{-\theta}\theta^y}{e^{-\vartheta_i}\vartheta_i^y}\right)\frac{e^{-\theta}\theta^y}{y!}G^*(\ddr\theta)
        +\sum_{y\in\mathbb N_0}\int_M^\infty \log\left(\frac{e^{-\theta}\theta^y}{e^{-M}M^y}\right)\frac{e^{-\theta}\theta^y}{y!}G^*(\ddr\theta)\\
       &< \int_0^M\eta G^*(\ddr\theta)+\int_M^\infty \sum_{y\in\mathbb N_0}y\log(\theta)\frac{e^{-\theta}\theta^y}{y!}G^*(\ddr\theta)\\
       &\leq \eta +\int_M^\infty \theta\log(\theta)G^*(\ddr\theta)\\
       &\leq \eta+\frac{\log M}{M^{k-1}}\int_M^\infty \theta^kG^*(\ddr\theta)\\
       &\leq 2\eta.
    \end{align*}
   We now prove that the minimization problem
$
\mbox{\sl argmin}_{G\in\mathcal S_\eta}\textsc{KL}(p_G^*||p_G)
$
has a unique solution. 
Representing each probability mass function $g$ with support $\Theta_\eta$ as a vector $\gb$ taking values in the simplex
$
\Delta_\eta=\{\vb=(\vb_1,\dots,\vb_{d_\eta-1}):\vb_i\geq 0,\sum_{i=1}^{d_\eta-1}\vb_i\leq 1\},
$
we can see $\ell (\fb)=\textsc{KL}(p_{G^*}||p_G)$ as a strictly convex function  defined on a convex and compact set. Thus the global minimum $\gb_\eta$ exists and is unique.
    $\square$

\subsection{Proof of Theorem \ref{th:regretdiscr}}\label{app4}

The proof of Theorem \ref{th:regretdiscr} builds on Lemmas \ref{lem:lagrange} and \ref{lem:stochapp1}, together with the convergence rate of $g_n$ toward $g_\eta^*$ under i.i.d.\ observations. This latter problem has been investigated in the literature, although existing results either yield slower convergence  (\cite{MarTok(09)}) or rely on the assumption that $G^*$ and $G_n$ have the same  finite support (\cite{Mar(12)}).

In this section, we denote the probability mass function associated with a probability distribution by the same symbol, but in lowercase. Furthermore, we occasionally represent probability mass functions $g$ on $\Theta_\eta$ as vectors $\gb$, taking values either in the simplex $\Delta_\eta=\{\vb=(\vb_1,\dots,\vb_{d_\eta-1}):\vb_i\geq 0 \;(i=1,\dots,d_\eta-1),\;\sum_{i=1}^{d_\eta-1}\vb_i\leq 1\}$ or in $\mathbb R^{d_\eta}$, depending on the context, which will be specified when needed. Similarly, the Poisson kernel is represented as a vector in $\Delta_\eta$ or $\mathbb R^{d_\eta}$, by defining $\kb_y(\vartheta_j)=k(y\mid\vartheta_j)$. We denote component-wise multiplication of vectors by $\circ$.

\begin{lem}\label{lem:lagrange}
If 
\begin{equation}\label{eq:momentgf}
\sum_{y\in \mathbb N_0}d_\eta^{2y}p_{G^*}(y)<+\infty,
\end{equation}
then the $d_\eta$-\textsc{KL}-discretization $g_\eta^*$ of $G^*$  satisfies  
for every $i=1,\dots,d_\eta$ the equation in $g$:
\begin{equation}\label{eq:minKL}
    g(\vartheta_i)\left(\sum_{y\in\mathbb N_0}\frac{k(y\mid\vartheta_i)}{p_{G}(y)}p_{G^*}(y)-1\right)=0.
\end{equation}
\end{lem}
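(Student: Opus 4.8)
The plan is to recognize \eqref{eq:minKL} as the first-order (Karush--Kuhn--Tucker) stationarity condition for the convex programme $g_\eta^{*}=\text{argmin}_{f\in\mathcal S_\eta}\ell(f)$, where $\ell(f):=\mathrm{KL}(p_{G^{*}}\|p_f)$ and the minimizer exists and is unique by Proposition~\ref{prp_grid}. Since $\ell(f)=\sum_{y\in\mathbb{N}_0}p_{G^{*}}(y)\log p_{G^{*}}(y)-\sum_{y\in\mathbb{N}_0}p_{G^{*}}(y)\log p_f(y)$ and $p_f(y)=\sum_j k(y\mid\vartheta_j)f(\vartheta_j)>0$ for every $y$, the coordinate derivative of $\ell$ in the direction of $\vartheta_i$ is, formally, $\partial_i\ell(f)=-\sum_{y\in\mathbb{N}_0}p_{G^{*}}(y)k(y\mid\vartheta_i)/p_f(y)$, so that \eqref{eq:minKL} asserts precisely that $\partial_i\ell(g_\eta^{*})=-1$ on those coordinates $i$ with $g_\eta^{*}(\vartheta_i)>0$ (on the remaining coordinates the first factor of \eqref{eq:minKL} is $0$). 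Two points must be settled: (i) that $\partial_i\ell(g_\eta^{*})$ is finite and genuinely computes a one-sided directional derivative of $\ell$ at $g_\eta^{*}$ --- the only place where hypothesis \eqref{eq:momentgf} is used; and (ii) that the Lagrange multiplier shared by the active coordinates equals $-1$.

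For (i), let $j^{*}$ be the largest index with $g_\eta^{*}(\vartheta_{j^{*}})>0$, which exists because $g_\eta^{*}$ is a probability mass function. Then $p_{g_\eta^{*}}(y)\ge g_\eta^{*}(\vartheta_{j^{*}})\,k(y\mid\vartheta_{j^{*}})$, hence for every $i$ and $y$
\[
\frac{k(y\mid\vartheta_i)}{p_{g_\eta^{*}}(y)}\le\frac{1}{g_\eta^{*}(\vartheta_{j^{*}})}\,\frac{k(y\mid\vartheta_i)}{k(y\mid\vartheta_{j^{*}})}=\frac{e^{\vartheta_{j^{*}}-\vartheta_i}}{g_\eta^{*}(\vartheta_{j^{*}})}\Big(\frac{\vartheta_i}{\vartheta_{j^{*}}}\Big)^{y}\le C_i\,d_\eta^{\,y},
\]
since $\vartheta_i/\vartheta_{j^{*}}=i/j^{*}\le d_\eta$. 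As $d_\eta\ge1$, this gives $\sum_{y\in\mathbb{N}_0}p_{G^{*}}(y)k(y\mid\vartheta_i)/p_{g_\eta^{*}}(y)\le C_i\sum_{y\in\mathbb{N}_0}p_{G^{*}}(y)d_\eta^{2y}<+\infty$ by \eqref{eq:momentgf}, so $\partial_i\ell(g_\eta^{*})$ is finite; moreover $p_f(y)\le\max_j k(y\mid\vartheta_j)$ for any $f\in\mathcal S_\eta$, so the same estimate yields $p_f(y)/p_{g_\eta^{*}}(y)\le C\,d_\eta^{\,y}$ and $\sum_y p_{G^{*}}(y)p_f(y)/p_{g_\eta^{*}}(y)<\infty$. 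Now fix any direction $v$ with $\sum_j v_j=0$ such that $g_\eta^{*}+tv\in\mathcal S_\eta$ for all small $t>0$ (both $v=\delta_{\vartheta_i}-g_\eta^{*}$ and, when $g_\eta^{*}(\vartheta_i)>0$, $v=g_\eta^{*}-\delta_{\vartheta_i}$ are of this kind) and differentiate $t\mapsto-\sum_{y\in\mathbb{N}_0}p_{G^{*}}(y)\log p_{g_\eta^{*}+tv}(y)$ term by term: each summand $t\mapsto\log p_{g_\eta^{*}+tv}(y)$ is concave, so its difference quotient at $0$ increases to $\big(\sum_j v_j k(y\mid\vartheta_j)\big)/p_{g_\eta^{*}}(y)$ as $t\downarrow 0$, and the bounds just displayed provide an integrable envelope, so dominated/monotone convergence passes the limit inside $\sum_y$. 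This gives $\frac{d}{dt}\big|_{0^{+}}\ell(g_\eta^{*}+tv)=\sum_k v_k\,\partial_k\ell(g_\eta^{*})$, a finite quantity.

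Because $g_\eta^{*}$ minimizes $\ell$ over the convex set $\mathcal S_\eta$, every such one-sided derivative is $\ge 0$. By Tonelli's theorem,
\[
\sum_j g_\eta^{*}(\vartheta_j)\,\partial_j\ell(g_\eta^{*})=-\sum_{y\in\mathbb{N}_0}p_{G^{*}}(y)\frac{\sum_j g_\eta^{*}(\vartheta_j)k(y\mid\vartheta_j)}{p_{g_\eta^{*}}(y)}=-\sum_{y\in\mathbb{N}_0}p_{G^{*}}(y)=-1.
\]
Hence, taking $v=\delta_{\vartheta_i}-g_\eta^{*}$ gives $0\le\sum_k v_k\partial_k\ell(g_\eta^{*})=\partial_i\ell(g_\eta^{*})+1$, i.e. $\sum_{y\in\mathbb{N}_0}p_{G^{*}}(y)k(y\mid\vartheta_i)/p_{g_\eta^{*}}(y)\le 1$ for every $i$; since this sum is finite, \eqref{eq:minKL} follows immediately whenever $g_\eta^{*}(\vartheta_i)=0$. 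For $i$ with $g_\eta^{*}(\vartheta_i)>0$, taking $v=g_\eta^{*}-\delta_{\vartheta_i}$ gives $0\le\sum_k v_k\partial_k\ell(g_\eta^{*})=-1-\partial_i\ell(g_\eta^{*})$, i.e. $\sum_{y\in\mathbb{N}_0}p_{G^{*}}(y)k(y\mid\vartheta_i)/p_{g_\eta^{*}}(y)\ge 1$. The two inequalities together force $\sum_{y\in\mathbb{N}_0}p_{G^{*}}(y)k(y\mid\vartheta_i)/p_{g_\eta^{*}}(y)=1$, which is \eqref{eq:minKL}.

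I expect the crux to be part (i): justifying the interchange of differentiation and the infinite sum over $y$ at the relative boundary of the simplex, where some coordinates of $g_\eta^{*}$ may vanish so that $p_{g_\eta^{*}}(y)$ decays as fast as $d_\eta^{-y}$. The light-tail hypothesis \eqref{eq:momentgf} is exactly what is needed to control $k(y\mid\vartheta_i)/p_{g_\eta^{*}}(y)\lesssim d_\eta^{\,y}$ against $p_{G^{*}}(y)$, and without it the candidate gradient of $\ell$ need not even be finite.
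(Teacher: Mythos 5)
Your proof is correct, and it reaches the same first-order optimality conditions that the paper uses, but by a somewhat different route. The paper invokes the Lagrange/KKT multiplier formalism directly: after checking (exactly as you do, via \eqref{eq:momentgf} and the bound $k(y\mid\vartheta_i)/p_{g}(y)\lesssim d_\eta^{\,y}$) that the term-by-term first and second derivative series of $\ell(g)=\mathrm{KL}(p_{G^*}\|p_g)$ converge, it writes the stationarity conditions with multipliers $\lambda_i\geq 0$ and complementary slackness, multiplies by $g_\eta^*(\vartheta_i)$, sums, and eliminates the multiplier attached to the normalization constraint using the identity $\sum_i g_\eta^*(\vartheta_i)\sum_y k(y\mid\vartheta_i)p_{G^*}(y)/p_{g_\eta^*}(y)=1$ — the same identity you obtain by Tonelli. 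You avoid the multiplier bookkeeping altogether, deriving the two inequalities $\sum_y k(y\mid\vartheta_i)p_{G^*}(y)/p_{g_\eta^*}(y)\leq 1$ (all $i$) and $\geq 1$ (active $i$) from one-sided directional derivatives along the explicit feasible directions $\delta_{\vartheta_i}-g_\eta^*$ and $g_\eta^*-\delta_{\vartheta_i}$; this also treats the coordinate $\vartheta_{d_\eta}$ symmetrically rather than through the reduced $(d_\eta-1)$-dimensional parametrization. The other difference is that you are more explicit about the analytic step the paper passes over quickly: justifying differentiation under the infinite sum at a possibly boundary point of the simplex, which you handle by concavity of $t\mapsto\log p_{g_\eta^*+tv}(y)$, monotone convergence, and the integrable envelope supplied by \eqref{eq:momentgf} (your bound through the largest positive coordinate $j^*$ of $g_\eta^*$, versus the paper's uniform bound through $\vartheta_1$ and $\vartheta_{d_\eta}$). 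What the paper's route buys is brevity, by appealing to standard constrained-optimization theory for twice continuously differentiable functions; what yours buys is a self-contained variational argument that makes the role of the light-tail condition and the boundary behaviour explicit.
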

\begin{proof}
Representing probability mass functions on $\Theta_\eta$ as vectors in
$
\Delta_\eta,
$
we can see $\ell (\fb)=\textsc{KL}(p_{G^*}||p_F)$ as a function on $\Delta_\eta$. 
The assumption  \eqref{eq:momentgf} implies that
\begin{align}\label{eq:condfin}
    \int_0^\infty e^{\theta(d_\eta^2-1)}G^*(d\theta)&=\int_0^\infty \sum_{y\in\mathbb N_0}e^{\log(d_\eta^2)y}e^{-\theta}\frac{\theta^y}{y!}G^*(d\theta)\\
    &=\sum_{y\in\mathbb N_0}e^{2y\log d_\eta}p_{G^*}(y)<+\infty.\nonumber
\end{align}
Thus, for every $i=1,\dots,d_\eta$,
$$
\frac{\partial}{\partial \gb_i}\ell (\gb)=\sum_{y\in\mathbb N_0}\frac{k(y\mid\vartheta_i)}{p_{G}(y)}p_{G^*}(y)
%\leq 
%e^{\vartheta_1-\vartheta_{d_\eta}}\sum_{y\in\mathbb N_0}\left(\frac{\vartheta_{d_\eta}}{\vartheta_1}\right)^y\int_\Theta \frac{e^{-\theta}\theta^y}{y!}{G^*}(\ddr \theta)
\leq e^{\vartheta_1-\vartheta_{d_\eta}}\int e^{\theta(d_\eta-1)}G^*(\ddr\theta)<+\infty
$$
by \eqref{eq:condfin}
    Analogously, for every $i=1,\dots,d_\eta$, 
    \begin{equation}\label{eq:finite}
            \sum_{y\in\mathbb N_0}\frac{k(y\mid\vartheta_i)^2}{p_{G}(y)^2}p_{G^*}(y)
            \leq e^{\vartheta_1-\vartheta_{d_\eta}}\int e^{\theta(d_\eta^2-1)}G^*(\ddr\theta)<+\infty.
     \end{equation}
         Thus, $\ell(\gb)$ is twice continuously differentiable on $\Delta_\eta$.
Denoting by $k_y(\vartheta_i)=k(y\mid\vartheta_i)$ for $i=1,\dots,\vartheta_{d_\eta}$, we can write
$$
\nabla_{\gb} \textsc{KL}(p_{G^*}||p_{G})=-\sum_{y\in\mathbb N_0}\frac{k_y-k(y\mid\vartheta_{d_\eta})}{p_{G}(y)}p_{G^*}(y).
$$
The $(i,j)$ element of the Hessian matrix $\Mb$ of $\ell(\gb)$ is
\begin{align*}
    \Mb_{i,j}&=\sum_{y\in\mathbb N_0}\frac{(k(y\mid\vartheta_i)-k(y\mid\vartheta_{d_\eta}))(k(y\mid\vartheta_j)-k(y\mid\vartheta_{d_\eta}))}{p_{G}(y)^2}p_{G^*}(y).
  %  &=\sum_{y\in\mathbb N_0}\left(
  %  \sum_{i}^{d_\eta-1}\ab_i\frac{k(y\mid\vartheta_i)-k(y\mid\vartheta_{d_\eta})}{p_{\gb}(y)}
  %  \right)^2p_{G^*}(y),,
\end{align*}
for $i,j=1,\dots,d-1$.
%which is strictly positive for every $\ab\neq \mathbf{0}$. Thus, the function $\gb\rightarrow KL(p_{G^*}||p_{\gb})$ is strictly convex in $\Delta_\eta$, which implies that the global minimum is unique.\\

To prove \eqref{eq:minKL}, we apply the method of Lagrange multipliers for the constrained minimum of twice continuously differentiable functions.
The solution $\gb_\eta^*$ satisfies, for same $\lambda_i\geq 0 $ ($i=1,\dots,d_\eta$) with $\lambda_i>0$ for at least one $i$,
\begin{equation}\label{eq:deriv}
    \sum_{y\in\mathbb N_0}\frac{k(y\mid\vartheta_i)-k(y\mid \vartheta_{d_\eta})}{p_{G_\eta^*}(y)}p_{G^*}(y)-\lambda_i+\lambda_{d_\eta}=0\quad (i=1,\dots,d_\eta-1),
\end{equation}
and
\begin{equation}\label{eq:constlambda}
\lambda_1 g_\eta^*(\vartheta_1)=0,\;\dots, \;\lambda_{d_\eta-1}g_\eta^*(\vartheta_{d_\eta-1})=0,\;\lambda_{d_\eta}(\sum_{i=1}^{d_\eta-1}g_\eta^*(\vartheta_i)-1)=0.
\end{equation}
Multiplying \eqref{eq:deriv} by $g_\eta^*(\vartheta_i)$ and taking the sum with respect to $i=1,\dots,d_\eta-1$, we obtain by \eqref{eq:constlambda}
$$
\lambda_{d_\eta}=-\sum_{i=1}^{d_\eta-1}\sum_{y\in\mathbb N_0}g_\eta^*(\vartheta_i)\frac{k(y\mid\vartheta_i)}{p_{G_\eta^*}(y)}p_{G^*}(y)+(1-g_\eta^*(\vartheta_{d_\eta}))\sum_{y\in\mathbb N_0}\frac{k(y\mid\vartheta_{d_\eta})}{p_{G_\eta^*}(y)}p_{G^*}(y).
$$
Thus
$$
\lambda_{d_\eta}-\sum_{y\in\mathbb N_0}\frac{k(y\mid\vartheta_{d_\eta})}{p_{G_\eta^*}(y)}p_{G^*}(y)=-\sum_{i=1}^{d_\eta}\sum_{y\in\mathbb N_0}g_\eta^*(\vartheta_i)\frac{k(y\mid\vartheta_i)}{p_{G_\eta^*}(y)}p_{G^*}(y)=-1.
$$
Substituting in \eqref{eq:deriv}, we obtain
$$
\sum_{y\in\mathbb N_0}\frac{k(y\mid\vartheta_i)}{p_{G_\eta^*}(y)}p_{G^*}(y)-\lambda_i-1=0\quad (i=1,\dots,d_\eta-1),
$$
Thus, by \eqref{eq:constlambda}, 
\begin{equation}\label{eq:ts}
    g_\eta^*(\vartheta_i)\left(\sum_{y\in\mathbb N_0}\frac{k(y\mid\vartheta_i)}{p_{G_\eta^*}(y)}p_{G^*}(y)-1\right)=0
\end{equation}
for every $i=1,\dots,d_\eta-1$. Since
$$
    \sum_{i=1}^{d_\eta}\sum_{y\in\mathbb N_0}g_\eta^*(\vartheta_i)\sum_{y\in\mathbb N_0}\frac{k(y\mid\vartheta_i)}{p_{G_\eta^*}(y)}p_{G^*}(y)=1,
$$
then \eqref{eq:ts} holds also for $i=d_\eta$.
\\
$\square$
\end{proof}

\begin{lem}\label{lem:stochapp3}
 Under the assumptions of Theorem \ref{th:regret} with $\gamma<1$, for every $\delta<2-1/\gamma$
    $$
    ||\gb_n-\gb_\eta^*||^2=o(n^{-\delta})\quad \PP_{G^*}\mbox{-a.s.}
    $$
\end{lem}
\begin{proof}
The random vectors $\gb_n$ and $\gb_\eta^*$ take values in the compact set $\{\vb\in\mathbb R^{d_\eta}:||\vb||\leq 1\}$. By Corollary 4.7 in \cite{MarTok(09)} $\gb_n$ converges to $\gb_\eta^*$, $\PP_{G^*}$-a.s.
Moreover, $(\gb_n)$ satisfies the 
 stochastic approximation in $\mathbb R^{d_\eta}$:
\begin{align*}
\label{eq:newton_proof}
\gb_{n+1}&=\gb_{n}+\alpha_{n+1} \gb_{n}\circ \left(\frac{\kb_{Y_{n+1}}}{p_{\gb_{n}}(Y_{n+1})}-\bf 1\right)\\
&=\gb_{n}+\alpha_{n+1} \hb(\gb_{n})+\alpha_{n+1}\epsilonb_{n+1},
\end{align*}
with initial value $\gb_0$, where
$$
\hb(\gb)=\gb\circ \left(
\sum_{y\in \mathbb N_0}\frac{\kb_y}{p_{\gb}(y)}
p_{G^*}(y)-\bf 1
\right),
$$
and
$$
\epsilonb_{n+1}=\gb\circ \left(
\frac{\kb_{Y_{n+1}}}{p_{\gb}(Y_{n+1})}
-
\sum_{y\in \mathbb N_0}\frac{\kb_y}{p_{\gb}(y)}
p_{G^*}(y)
\right).
$$
   The thesis follows from Theorem 3.1.1 in \cite{Chen(05)} if we can prove that the following conditions hold:\begin{itemize}
        \item[T1] $\alpha_n\rightarrow 0$, $\sum_{n=1}^\infty \alpha_n=\infty$, $\alpha_{n+1}^{-1}-\alpha_n^{-1}\rightarrow 0$;
        \item[T2] $\sum_{n=1}^\infty \alpha_n^{1-\delta/2}\epsilon_n$ converges, $\PP_{G^*}$-a.s.;
        \item[T3] $\hb$ is measurable and locally bounded, and is differentiable at $\gb_\eta^*$. Let $\Hb$ be a matrix such that, as $\gb\rightarrow \gb_\eta^*$,
        $$
        \hb(\gb)=\Hb(\gb-\gb_\eta^*)+\rb(g),\quad \rb(g_\eta^*)=\mathbf 0,\quad ||\rb(\gb)||=o(||\gb-\gb_\eta^*||).
        $$
        All the eigenvalues of the matrix $\Hb$ have negative real parts.
    \end{itemize}
\begin{remark}
        Theorem 3.1.1 in \cite{Chen(05)} also requests  the following condition: T4  
  {\em    There exists a continuously differentiable function $\ell:\Delta_\eta\rightarrow \mathbb R$ such that $\sup_{\delta_1\leq {\rm d}(\gb,\gb_\eta^*)\leq \delta_2}\hb(\gb)^T\nabla_{\gb}\ell<0$
    for every $\delta_2>\delta_1>0$.}   
   However, a close inspection of the proof of Theorem 3.1.1 in Chen  shows that T4 is employed only to ensure convergence of $\gb_n$ towards $\gb_\eta^*$. In our case, this property is known to be true.
   \end{remark}
    T1 is obvious. To prove T2, notice that $(\sum_{k=1}^n\alpha_k^{1-\delta/2}\epsilon_k)_{n\geq 1}$ is a martingale. Moreover, there exists a constant $C$ such that, for every $i=1,\dots,{d_\eta}$,
    \begin{align*}
        \sup_nE_{G^*}((\sum_{k=1}^n\alpha_k^{1-\delta/2}\epsilonb_{k,i})^2)
        &= \sup_n\sum_{k=1}^n\alpha_k^{2-\delta}E_{G^*}(\epsilonb_{k,i})^2)\\
        &\leq C \sum_{k=1}^\infty\left(\frac{1}{\alpha+k}\right)^{\gamma(2-\delta)}<\infty,
        \end{align*}
        since $\gamma(2-\delta)>1$ and $(\epsilonb_{n,i})$ are uniformly bounded.
        We now prove T3. The function $\hb$ is continuously differentiable at $\gb_\eta^*$ and
        $$
        \Hb_{i,j}=-\gb_{\eta,i}^*\sum_{y\in\mathbb N_0}\frac{k_y(\vartheta_i)k_y(\vartheta_j)}{p_{g_\eta^*}(y)^2}p_{G^*}(y).
        $$
        The matrix $\Hb$ can be seen as the product of the diagonal matrix ${\rm diag}(\gb_\eta^*)$, which is positive definite, times $-\Mb$, with $\Mb_{i,j}=\sum_{y\in\mathbb N_0}\frac{k_y(\vartheta_i)k_y(\vartheta_j)}{p_{G_\eta^*}(y)^2}p_{G^*}(y)$. We now prove that $\Mb$ is positive definite. Given a vector $\ub\in \mathbb R^{d_\eta}$ with $||\ub||=1$, we can write that
        \begin{align*}
        \ub^T \Mb\ub=&\sum_{i,j=1}^{d_\eta}\ub_i\ub_j\sum_{y\in\mathbb N_0}\frac{k_y(\vartheta_i)k_y(\vartheta_j)}{p_{G_\eta^*}(y)^2}p_{G^*}(y)=
        \sum_{y\in\mathbb N_0}\frac{p_{G^*{(y)}}}{p_{G_\eta^*}(y)^2}\left(\sum_{i=1}^{d_\eta} \ub_i k_y(\vartheta_i)\right)^2\geq 0,
        \end{align*}
        with equality if and only if $\sum_{i=1}^{d_\eta} \ub_i k_y(\vartheta_i)=0$ for every $y\in\mathbb N_0$. The only solution to the last equation in $\ub$ is the zero vector, which is inconsistent with $||\ub||=1$. Since the function $\ub^T\Mb\ub$ is continuous on the compact set $||\ub||=1$, then the minimum is attained, and by the above reasoning it is strictly positive. It follows that $\Mb$ is positive definite.
    \end{proof}

    \begin{lem}\label{lem:stochapp1}
  Under the assumptions of Theorem \ref{th:regret}, 
       there exists a constant $C=C(d, \vartheta_1,\vartheta_d)$ such that for every $n\geq 1$
  $$
  \textsc{Regret}(G_n,G_\eta^*)\leq C ||\gb_n-\gb_\eta^*||^2.
  $$
 \end{lem}
\begin{proof}
    We can write that
    \begin{align*}
       &\sum_{y\in \mathbb N_0} (\hat\theta_{g_n}(y)-\hat\theta_{g_\eta^*}(y))^2p_{g_\eta^*}(y)\\
       % &=\sum_{y\in \mathbb N_0}\left(
       % \frac{
       % \sum_{i=1}^{d_\eta} \vartheta_i^{y+1}e^{-\vartheta_i}g_n(\vartheta_i)
        %}{
        % \sum_{i=1}^{d_\eta} \vartheta_i^{y}e^{-\vartheta_i}g_n(\vartheta_i)
       % }
       % -\frac{
         %\sum_{i=1}^{d_\eta} \vartheta_i^{y+1}e^{-\vartheta_i} g_\eta^*(\vartheta_i)
       % }{
       %  \sum_{i=1}^{d_\eta} \vartheta_i^{y}e^{-\vartheta_i} g_\eta^*(\vartheta_i)
       % }
        %\right)^2
       %p_{G^*}(y)\\
        %\frac{1}{y!}         \sum_{i=1}^{d_\eta} \vartheta_i^{y}e^{-\vartheta_i}g_\eta^*(\vartheta_i)\\ 
         &\leq \sum_{y\in \mathbb N_0} 
         \frac{
         \left(
         \sum_{i=1}^{d_\eta}\vartheta_i^{y+1}e^{-\vartheta_i}\left(g_n(\vartheta_i)-g_\eta^*(\vartheta_i)\right)
         \right)^2
         }{
         (\sum_{i=1}^{d_\eta} \vartheta_i^{y}e^{-\vartheta_i}g_\eta^*(\vartheta_i))^2
         }p_{G^*}(y)\\
         &\quad+\sum_{y\in \mathbb N_0}  
         \frac{
         \left(
         \sum_{i=1}^{d_\eta}\vartheta_i^{y+1}e^{-\vartheta_i}g_n(\vartheta_i)
         \right)^2\left(\sum_{i=1}^{d_\eta} \vartheta_i^{y}e^{-\vartheta_i}(g_n(\vartheta_i)-g_\eta^*(\vartheta_i)\right)^2
         }{
         \left(\sum_{i=1}^{d_\eta} \vartheta_i^{y}e^{-\vartheta_i}g_n(\vartheta_i)\right)^2\left(\sum_{i=1}^{d_\eta} \vartheta_i^{y}e^{-\vartheta_i} g_\eta^*(\vartheta_i)\right)^2
         }p_{G^*}(y)
         \\
        % &\leq 2\sum_{y\in \mathbb N_0} 
         %\frac{         \sum_{i=1}^{d_\eta}\vartheta_i^{2y+2}e^{-2\vartheta_i}||\gb_n-\gb_\eta^*||^2}{\left(\sum_{i=1}^{d_\eta}\vartheta_i^ye^{-\vartheta_i}g_\eta^*(\vartheta_i)\right)^2}p_{G^*}(y)
       %  +2\sum_{y\in \mathbb N_0}  
       %  \frac{\vartheta_{d_\eta}^2
%\sum_{i=1}^{d_\eta}\vartheta_i^{2y}e^{-2\vartheta_i}||\gb_n-\gb_\eta^*||^2
    %     }{\left(
%\sum_{i=1}^{d_\eta}\vartheta_i^ye^{-\vartheta_i}g_\eta^*(\vartheta_i)\right)^2
      %   }p_{G^*}(y)\\
         &\leq 4{d_\eta}\vartheta_{d_\eta}^2e^{2(\vartheta_{d_\eta}-\vartheta_1)}||\gb_n-\gb_\eta^*||^2
         \int_0^\infty e^{-\theta} \sum_{y\in \mathbb N_0} \frac{1}{y!}\left(\frac{\theta\vartheta_{d_\eta}^2}{\vartheta_1^2} \right)^yG^*(d\theta)
         \\
       %  &\leq 4{d_\eta}\vartheta_{d_\eta}^2e^{2(\vartheta_{d_\eta}-\vartheta_1)}||\gb_n-\gb_\eta^*||^2
         %\int_0^\infty e^{\theta({d_\eta}^2-1)} G^*(d\theta)
        % \\
         &\leq C ||\gb_n-\gb_\eta^*||^2,
    \end{align*}
    with $C=4{d_\eta}\vartheta_{d_\eta}^2e^{2(\vartheta_{d_\eta}-\vartheta_1)}         \int_0^\infty e^{\theta({d_\eta}^2-1)} G^*(d\theta)<+\infty,
         $
         by \eqref{eq:condfin}.
                 Thus,
         $$
         \textsc{Regret}(G_n,G_\eta^*)\leq C ||\gb_n-\gb_\eta^*||^2,
         $$
         with $C<+\infty$.
\end{proof}
    \textit{Proof of Theorem \ref{th:regret}.} The proof is an immediate consequence of lemmas \ref{lem:stochapp3} and \ref{lem:stochapp1}.
 %      Denoting by $\gb_n$ and $\gb_\eta^*$ the representations of $g_n$ and $g_\eta^*$ in $\mathbb R^{d_\eta}$, we can write, by Lemma \ref{lem:stochapp3} that
    %  $$      ||\gb_n-\gb_\eta^*||^2=o(n^{-\delta}),     $$      for every $\delta <2-1/\gamma$.   By Lemma \ref{lem:stochapp1},         $$        {\rm Regret}(g_n,g_\eta^*)\leq C||\gb_n-\gb_\eta^*||^2=o(n^{-\delta}).       $$
    %$\square$

%%%%%%%%%%%%%%%%%%%%%%%%%%%%%%%%
%%%%%%%%%%%%%%%%%%%%%%%%%%%%%%%%
%%%%%%%%%%%%%%%%%%%%%%%%%%%%%%%%
%%%%%%%%%%%%%%%%%%%%%%%%%%%%%%%%

\section{Numerical illustrations: synthetic data}\label{app5_num1}
\subsection{Preliminaries}\label{app5_num11}
We generate synthetic data from a Poisson mixture model, for various choices of the prior (mixing) distribution $G$. In particular, let $n\geq1$ and let $(Y_{1},\theta_{1}),\ldots,(Y_{n},\theta_{n})$ be random vectors distributed as follows:
\begin{equation}\label{eq:mixture_model_poisson}
\begin{aligned}
Y_i \mid \theta_i &\quad \simind \quad \text{Poisson}(\cdot \mid \theta_i) \qquad (i=1,\ldots,n),\\[-0.2cm]
\theta_i &\quad \simiid \quad G.
\end{aligned}
\end{equation}
First, we assume $G$ to be a Uniform distribution on $[a,b]$; precisely, we set $a=0$ and $b=3$. Then, we consider two examples of $G$ belonging to the class $\mathcal{G}$ of sub-exponential distribution of order $s$, which is defined as
\begin{displaymath}
\mathcal{G}=\left\{G\text{ on }\mathbb{R}^{+}:\;G([t,\infty)) \leq 2e^{-t/s}\ \text{for all } t>0\right\},\qquad s>0.
\end{displaymath}
In particular, we assume $G$ to be: i) a Weibull distribution with scale $a$ and shape $b$, which belongs to $\mathcal{G}$ for $b\geq1$; ii) a half-Gaussian distribution, namely the distribution of the positive part of a Gaussian random variable with mean $0$ and variance $\sigma^{2}$, which belongs $\mathcal{G}$ for $\sigma>0$. Precisely we set $a=5$ and $b=3$ for the Weibull, and $\sigma=1$ for the half-Gaussian; under this parameterization the Weibull tail is lighter than the half-Gaussian tail. Finally, we consider an example of $G$ belonging to the moment class $\mathcal{M}$ defined, for any real $M_{p}$, as
\begin{displaymath}
\mathcal{M}=\left\{G\text{ on }\mathbb{R}^{+}:\;\int_{\mathbb{R}^{+}}\theta^{p} G(\ddr\theta)<M_{p}\right\},\qquad p>0;
\end{displaymath}
see \citet[Section 1]{She(24)}. In particular, we assume $G$ to be square-root of half-Cauchy distribution, namely the distribution of the square-root of the positive part of a standard Cauchy random variable. This distribution has heavier tail than both the Weibull distribution and the half-Gaussian distribution.

For each of the above choices of the prior $G$, we assess the QB estimate $\hat{\theta}_{G_{n}}$ in \eqref{seq_estim} by comparing it, for fixed $n\geq 1$, with the Oracle, Robbins, ML, and MHD estimates. Consider the posterior mean, or Bayes estimate,
\begin{equation}\label{rob_for}
\hat{\theta}_{G}(y)=E_{G}[\theta_i \mid Y_i=y]=\frac{\int_{\Theta}\theta\,\text{Poisson}(y\mid\theta)\,G(\ddr\theta)}{\int_{\Theta}\text{Poisson}(y\mid\theta)\,G(\ddr\theta)}=(y+1)\frac{p_G(y+1)}{p_G(y)},
\qquad y\in\mathbb{N}_0.
\end{equation}
The Oracle estimate is obtained by replacing $G$ in \eqref{rob_for} with the ``true'', or oracle, prior $G^\ast$, namely the Uniform, Weibull, half-Gaussian, or square-root of half-Cauchy priors that generate the $\theta_{i}$'s in \eqref{eq:mixture_model_poisson}, and evaluating the marginal likelihood $p_{G^\ast}$ numerically via the trapezoidal rule. The Robbins estimate is obtained by replacing $p_G$ in \eqref{rob_for} with the empirical distribution $\hat p_n(y)=n^{-1}\sum_{1\leq i\leq n}I(Y_i=y)$, for $y\in\mathbb{N}_{0}$. The ML and MHD estimates are obtained by replacing $G$ in \eqref{rob_for} with its nonparametric maximum likelihood estimate and its minimum Hellinger distance estimate, respectively, both computed via the vertex direction algorithm of \citet[Algorithm 1]{Jan(24)}, which is based on a discretization of $G$ over a finite grid.

As a measure of estimation accuracy, we consider the empirical mean squared error (\textsc{e-MSE}). For $n\in\mathbb{N}$, let $(\theta_1,\ldots,\theta_n)$ be the values generated from $G$ in \eqref{eq:mixture_model_poisson}, and let $(\hat\theta_{G}(y_1),\ldots,\hat\theta_{G}(y_n))$ be the corresponding estimates, namely the Oracle, Robbins, ML, MHD, of QB estimates. The \textsc{e-MSE} is defined as
\begin{displaymath}
\textsc{e-MSE}=\frac{1}{n}\sum_{i=1}^n \bigl(\hat\theta_{G}(y_i)-\theta_i\bigr)^2.
\end{displaymath}
When $\hat\theta_{G}$ is the oracle estimate $\hat\theta_{G^\ast}$, the \textsc{e-MSE} is referred to as the empirical minimum mean squared error (\textsc{e-MMSE}), i.e.
\begin{displaymath}
\textsc{e-MMSE}=\frac{1}{n}\sum_{i=1}^n \bigl(\hat\theta_{G^\ast}(y_i)-\theta_i\bigr)^2.
\end{displaymath}
Finally, for the Robbins, ML, MHD, and QB estimates, we define the empirical regret (\textsc{e-REGRET}) as $\textsc{e-Regret}=\textsc{e-MSE}-\textsc{e-MMSE}$, namely the excess empirical squared error with respect to the oracle benchmark.

For each of the above choices of the prior $G$, we also provide the QB credible interval $I_{G_{n}}$ in \eqref{qbeb_interval}, which, for a fixed level $1-\alpha$, is constructed by considering a uniform grid of $50$ pairs of tuning parameters $(\beta_{1},\beta_{2})$ satisfying the constraint $\beta_1+\beta_2=\alpha$ and then selecting, by a simple grid search, the pair of parameters that yields the shortest credible interval. Specifically, we consider an equally spaced grid $\mathcal B_\alpha=\{(\beta_{1,\ell},\beta_{2,\ell}):\ell=1,\ldots,50\}$, with $\beta_{1,\ell}\in(0,\alpha)$ and $\beta_{2,\ell}=\alpha-\beta_{1,\ell}$. For each pair $(\beta_{1,\ell},\beta_{2,\ell})\in\mathcal B_\alpha$, we compute the lower and upper envelopes $\underline F^{\beta_{1,\ell}}_{G_n}(\cdot\mid y)$ and $\overline F^{\beta_{1,\ell}}_{G_n}(\cdot\mid y)$ defined in \eqref{conf_ext1}--\eqref{conf_ext2}. We then obtain the QB credible interval
\begin{displaymath}
I^{(\ell)}_{G_n}(y)=[L^{(\ell)}_n(y),U^{(\ell)}_n(y)]\cap\Theta,
\end{displaymath}
where
\begin{displaymath}
L^{(\ell)}_n(y)=\sup\left\{\theta\in\Theta:\overline F^{\beta_{1,\ell}}_{G_n}(\theta\mid y)\leq\frac{\beta_{2,\ell}}{2}\right\}
\end{displaymath}
and
\begin{displaymath}
U^{(\ell)}_n(y)=\inf\left\{\theta\in\Theta:\underline F^{\beta_{1,\ell}}_{G_n}(\theta\mid y)\geq1-\frac{\beta_{2,\ell}}{2}\right\}.
\end{displaymath}
Now, the selected pair of tuning parameters is the one that minimizes the length of the credible interval, namely
\begin{displaymath}
\ell^\ast\in\arg\min_{1\leq \ell\leq 50}\left\{U^{(\ell)}_n(y)-L^{(\ell)}_n(y)\right\},
\end{displaymath}
and the reported QB credible interval is $I_{G_n}(y)=I^{(\ell^\ast)}_{G_n}(y)$. This interval is compared with the corresponding Oracle credible interval of level $1-\alpha$, which is the $(1-\alpha)$-level posterior quantile obtained under the ``true'', or oracle, prior $G^\ast$, namely the Uniform, Weibull, half-Gaussian, or square-root of half-Cauchy priors that generate the $\theta_{i}$'s in \eqref{eq:mixture_model_poisson}. In particular, in order to define the Oracle credible interval, for $y\in\mathbb{N}_0$ let
\begin{displaymath}
F_{G^{\ast}}(\theta\mid y)=\frac{\int_0^\theta \text{Poisson}(y\mid\vartheta)\,G^{\ast}(\ddr\vartheta)}{\int_{\Theta}\text{Poisson}(y\mid\vartheta)\,G^{\ast}(\ddr\vartheta)},\qquad \theta\in\Theta
\end{displaymath}
be the posterior distribution function of $\theta$ given $Y=y$ under the prior $G^{\ast}$. The Oracle credible interval is then defined as
\begin{displaymath}
I_{G^{\ast}}(y)=[L_{\ast}(y),U_{\ast}(y)],
\end{displaymath}
where
\begin{displaymath}
L_{\ast}(y)=\inf\{\theta\in\Theta\text{ : }F_{G^{\ast}}(\theta\mid y)\geq \alpha/2\},
\qquad
U_{\ast}(y)=\inf\{\theta\in\Theta\text{ : }F_{G^{\ast}}(\theta\mid y)\geq 1-\alpha/2\},
\end{displaymath}
with the posterior distribution function $F_{G^{\ast}}$ being evaluated numerically via the trapezoidal rule, as for the Oracle estimate.

As a measure of comparison between the QB credible interval $I_{G_n}$ and the corresponding Oracle credible interval $I_{G^\ast}$, we consider the empirical mean absolute relative length discrepancy (\textsc{e-MARLD}). For $y\in\mathbb N_0$, let
\begin{displaymath}
\ell_n(y)=U_{n}(y)-L_{n}(y),\qquad
\ell_{\ast}(y)=U_{\ast}(y)-L_{\ast}(y)
\end{displaymath}
denote the lengths of the QB and Oracle credible intervals, respectively. We define the absolute relative length discrepancy
\begin{displaymath}
D_n(y)=\left|\frac{\ell_n(y)}{\ell_{\ast}(y)}-1\right|
=\frac{|\ell_n(y)-\ell_{\ast}(y)|}{\ell_{\ast}(y)},
\end{displaymath}
and the \textsc{e-MARLD} by
\begin{displaymath}
\textsc{e-MARLD}=\frac{1}{n}\sum_{i=1}^n D_n(y_i).
\end{displaymath}
The closer $\textsc{e-MARLD}$ is to $0$, the closer, on average, the lengths of the QB credible intervals are to those of the Oracle credible intervals. In particular, a value $\textsc{e-MARLD}=x\in[0,1]$ means that, on average, the length of the QB credible interval differs from that of the corresponding Oracle credible interval by a proportion $x$.

\subsection{Uniform prior}\label{app5_num12}

For $n\in\{50,\,100,\,200,\,400\}$, we generate data $\mathbf{Y}_{n}=(Y_{1},\ldots,Y_{n})$ from the Poisson mixture model \eqref{eq:mixture_model_poisson} with a Uniform prior $G$ on the set $[0,3]$. For the QB estimate, the marginal likelihood in Newton's algorithm \eqref{eq:newton} is evaluated numerically via the trapezoidal rule. To perform this evaluation, the density function of the mixing distribution $G_{n}$ is represented through its values on a fixed uniform grid of $d\in\{5,000;\,1,000;\,500;\,100;\,50;\,10\}$ quadrature points over $\Theta=(0,U_{\Theta})$, where $U_\Theta=\max\{\max\{\mathbf{Y}_n\},\lceil Q_{n,0.99}+4(\max\{Q_{n,0.99},1\})^{1/2}\rceil\}$, with $Q_{n,0.99}=\textsc{Quantile}(\mathbf{Y}_n;0.99)$. Further, we set $G_{0}$ to be Uniform over $\Theta$, and take the learning rate to be $\alpha_{n}=(1+n)^{-0.99}$. For this setting of Newton's algorithm, Table \ref{uniform_tab_sens} provides \textsc{e-MSE}, \textsc{e-REGRET} and CPU time as the sample size $n$ and the grid resolution $d$ vary. The CPU time refers to the time (in seconds) for processing a new observation on a laptop MacBook Pro (M1 processor).

\begin{table}[ht]
\centering
\caption{Uniform prior: \textsc{e-MSE}, \textsc{e-REGRET} and CPU time (in seconds) of QB as $n$ and $d$ vary.}
{
\setlength{\tabcolsep}{0pt}
\begin{tabular}{@{}l@{\hspace{1cm}}*{6}{>{\centering\arraybackslash}p{1.75cm}}@{}}
\hline
\hline
 & $d=5{,}000$ & $d=1{,}000$ & $d=500$ & $d=100$ & $d=50$ & $d=10$ \\[0.1cm]
\hline
\multicolumn{7}{@{}l}{\underline{$n=50$}} \\[0.05cm]
\textsc{e-MSE}    & 0.515 & 0.515 & 0.515 & 0.517 & 0.522 & 0.719 \\
\textsc{e-REGRET} & 0.045 & 0.045 & 0.045 & 0.047 & 0.052 & 0.249 \\
CPU time & 0.003 & 0.002 & 0.002 & 0.002 & 0.002 & 0.001 \\[0.2cm]

\multicolumn{7}{@{}l}{\underline{$n=100$}} \\[0.05cm]
\textsc{e-MSE}    & 1.045 & 1.045 & 1.045 & 1.045 & 1.047 & 1.637 \\
\textsc{e-REGRET} & 0.539 & 0.539 & 0.539 & 0.539 & 0.541 & 1.131 \\
CPU time & 0.003 & 0.002 & 0.002 & 0.002 & 0.002 & 0.001 \\[0.2cm]

\multicolumn{7}{@{}l}{\underline{$n=200$}} \\[0.05cm]
\textsc{e-MSE}    & 0.597 & 0.596 & 0.595 & 0.591 & 0.589 & 0.841 \\
\textsc{e-REGRET} & 0.113 & 0.113 & 0.112 & 0.108 & 0.106 & 0.357 \\
CPU time & 0.003 & 0.002 & 0.002 & 0.002 & 0.002 & 0.002 \\[0.2cm]

\multicolumn{7}{@{}l}{\underline{$n=400$}} \\[0.05cm]
\textsc{e-MSE}    & 0.646 & 0.646 & 0.646 & 0.646 & 0.647 & 0.961 \\
\textsc{e-REGRET} & 0.159 & 0.159 & 0.159 & 0.159 & 0.160 & 0.474 \\
CPU time & 0.003 & 0.002 & 0.002 & 0.002 & 0.002 & 0.001 \\[0.1cm]
\hline
\hline
\end{tabular}
}
\label{uniform_tab_sens}
\end{table}

Table \ref{uniform_compare_tab} compares the \textsc{e-MSE} and \textsc{e-REGRET} for the Oracle, Robbins, ML, MHD and QB estimates. For the QB estimate we consider: i) a fixed uniform grid of $d=1,000$ quadrature points over $\Theta=(0,U_{\Theta})$; ii) an initial guess $G_{0}$ that is Uniform over $\Theta$; iii) a learning rate $\alpha_{n}=(1+n)^{-0.99}$. This setting corresponds to the second column of Table \ref{uniform_tab_sens}. Figure \ref{uniform_fig} displays the Oracle, Robbins, ML, MHD and QB estimates against the true values.

\begin{table}[ht]
\centering
\caption{Uniform prior: \textsc{e-MSE} and \textsc{e-REGRET} of Oracle, Robbins, ML, MHD and QB.}
{
\setlength{\tabcolsep}{0pt}
\begin{tabular}{@{}l@{\hspace{1cm}}*{5}{>{\centering\arraybackslash}p{1.75cm}}@{}}
\hline
\hline
 & Oracle & Robbins & ML & MHD & QB \\[0.1cm]
\hline
\multicolumn{6}{@{}l}{\underline{$n=50$}} \\[0.05cm]
\textsc{e-MSE}    & 0.470 & 1.114 & 0.499 & 0.526 & 0.515 \\
\textsc{e-REGRET} & 0.000 & 0.645 & 0.030 & 0.056 & 0.045 \\[0.3cm]

\multicolumn{6}{@{}l}{\underline{$n=100$}} \\[0.05cm]
\textsc{e-MSE}    & 0.506 & 0.991 & 1.755 & 0.941 & 1.045 \\
\textsc{e-REGRET} & 0.000 & 0.485 & 1.249 & 0.435 & 0.539 \\[0.3cm]

\multicolumn{6}{@{}l}{\underline{$n=200$}} \\[0.05cm]
\textsc{e-MSE}    & 0.483 & 1.374 & 0.661 & 0.666 & 0.596 \\
\textsc{e-REGRET} & 0.000 & 0.891 & 0.178 & 0.183 & 0.113 \\[0.3cm]

\multicolumn{6}{@{}l}{\underline{$n=400$}} \\[0.05cm]
\textsc{e-MSE}    & 0.487 & 1.236 & 0.564 & 0.650 & 0.646 \\
\textsc{e-REGRET} & 0.000 & 0.749 & 0.077 & 0.164 & 0.159 \\[0.1cm]
\hline
\hline
\end{tabular}
}
\label{uniform_compare_tab}
\end{table}

\begin{figure}[t]
\centering
\includegraphics[width=.85\textwidth]{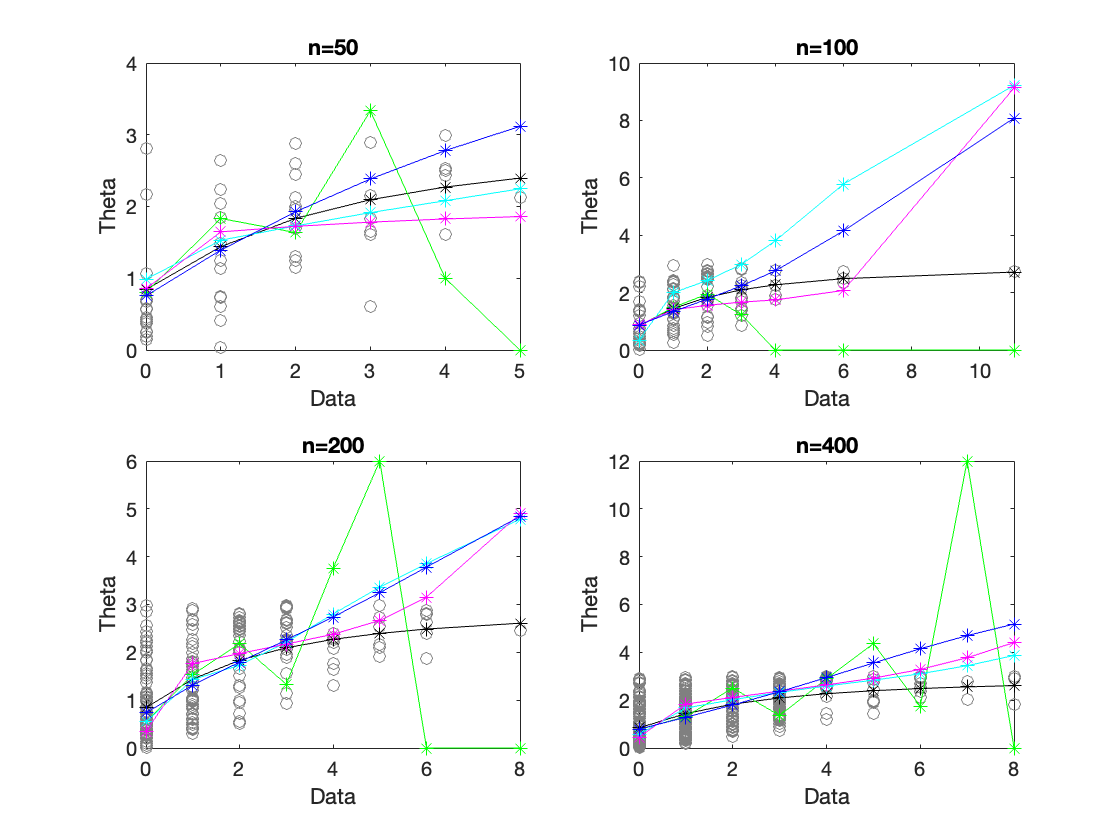}
\caption{Data points $Y$ versus true parameters $\theta$ (grey) under Uniform prior, and estimates: Oracle (black), Robbins (green), ML (cyan), MHD (magenta) and QB (blue).}
\label{uniform_fig}
\end{figure}

To conclude, we provide QB credible intervals at level $1-\alpha=0.95$. Intervals are constructed by relying on Newton's algorithm initialized as in Table \ref{uniform_tab_sens}, with the optimal choice of the tuning parameters $\beta_{1}$ and $\beta_{2}$. For such QB credible intervals, Table \ref{uniform_tab_sens_interval} provides \textsc{e-MARLD} as the sample size $n$ and the grid resolution $d$ vary. 

\begin{table}[ht]
\centering
\caption{Uniform prior: \textsc{e-MARLD} of QB as $n$ and $d$ vary.}
{
\setlength{\tabcolsep}{0pt}
\begin{tabular}{@{}l@{\hspace{1cm}}*{6}{>{\centering\arraybackslash}p{1.75cm}}@{}}
\hline
\hline
 & $d=5{,}000$ & $d=1{,}000$ & $d=500$ & $d=100$ & $d=50$ & $d=10$ \\[0.1cm]
\hline
\multicolumn{7}{@{}l}{\underline{$n=50$}} \\[0.05cm]
\textsc{e-MARLD} & 0.662 & 0.662 & 0.665 & 0.697 & 0.726 & 1.164 \\[0.2cm]

\multicolumn{7}{@{}l}{\underline{$n=100$}} \\[0.05cm]
\textsc{e-MARLD} & 0.708 & 0.708 & 0.712 & 0.784 & 0.932 & 1.469 \\[0.2cm]

\multicolumn{7}{@{}l}{\underline{$n=200$}} \\[0.05cm]
\textsc{e-MARLD} & 0.571 & 0.571 & 0.574 & 0.596 & 0.631 & 1.075 \\[0.2cm]

\multicolumn{7}{@{}l}{\underline{$n=400$}} \\[0.05cm]
\textsc{e-MARLD} & 0.643 & 0.643 & 0.647 & 0.686 & 0.775 & 1.283 \\[0.1cm]
\hline
\hline
\end{tabular}
}
\label{uniform_tab_sens_interval}
\end{table}

Under the setting of Table \ref{uniform_tab_sens_interval}, and assuming $d=1,000$ quadrature points over $\Theta=(0,U_{\Theta})$, Figure \ref{uniform_fig_interval} display the QB and Oracle credible intervals. Figure \ref{uniform_fig_interval1} display the Oracle and QB credible intervals for datasets with larger sample sizes, namely $n\in\{1,000,\,2,000,\,4,000,\,8,000\}$, generated under the same Poisson mixture model.

\begin{figure}[t]
\centering
\includegraphics[width=.85\textwidth]{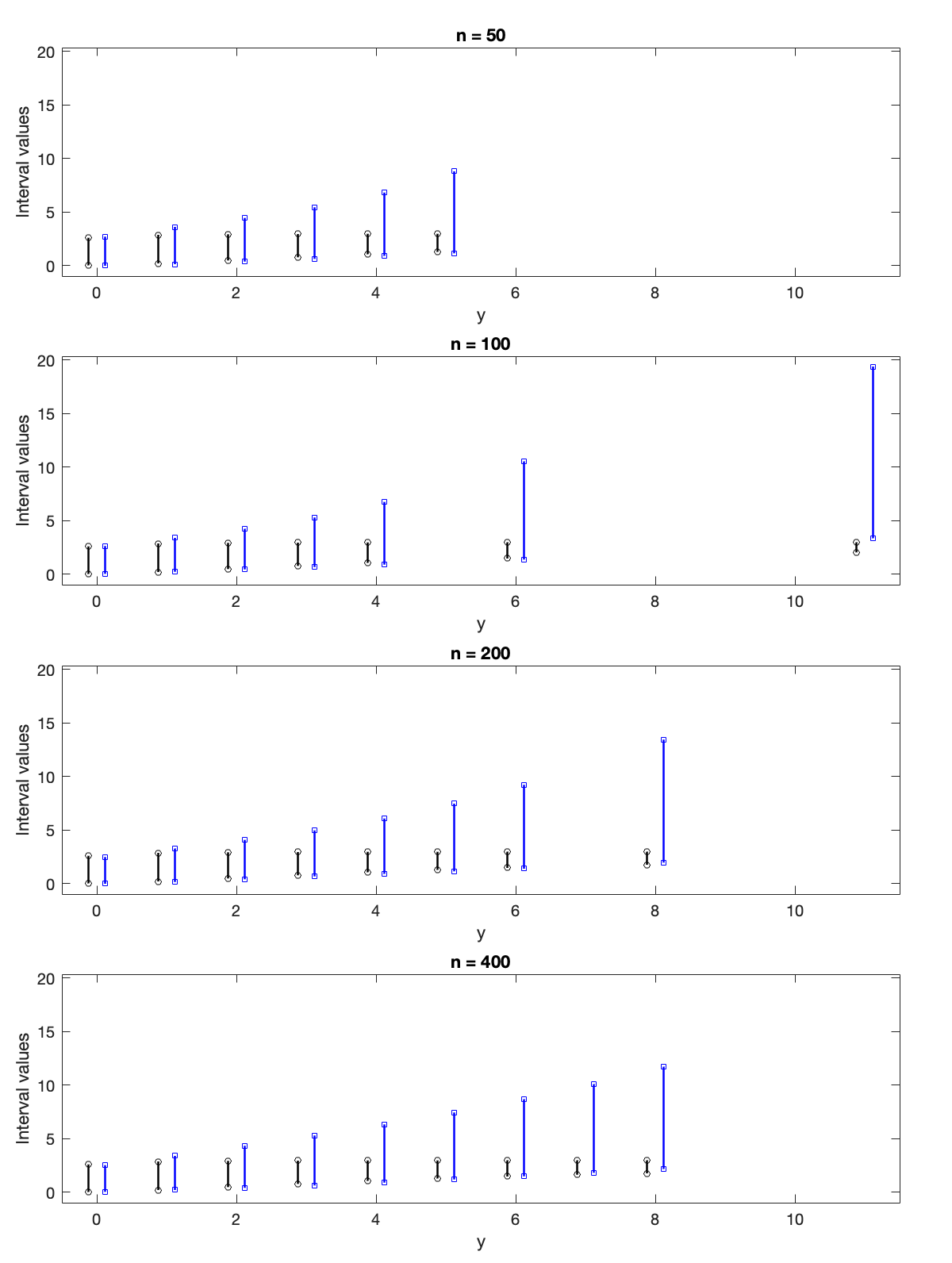}
\caption{Uniform prior: Oracle credible interval $I_{G^{\ast}}(y)$ (black), QB credible interval $I_{G_{n}}(y)$ (blue) for $Y\in\{Y_1,\dots,Y_n\}$.}
\label{uniform_fig_interval}
\end{figure}

\begin{figure}[t]
\centering
\includegraphics[width=.85\textwidth]{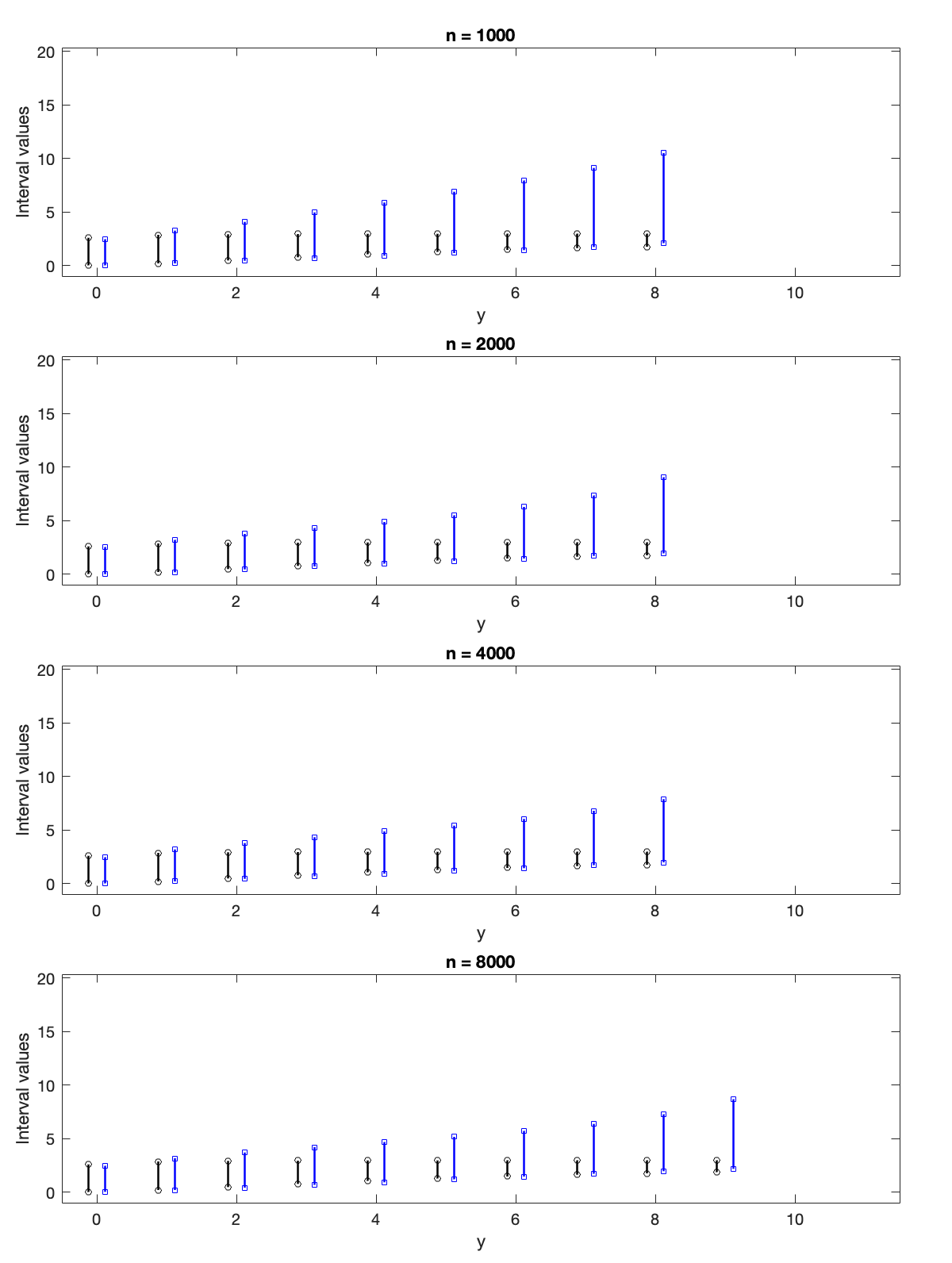}
\caption{Uniform prior: Oracle credible interval $I_{G^{\ast}}(y)$ (black), QB credible interval $I_{G_{n}}(y)$ (blue) for $Y\in\{Y_1,\dots,Y_n\}$.}
\label{uniform_fig_interval1}
\end{figure}

\subsection{Weibull prior}\label{app5_num13}

For $n\in\{50,\,100,\,200,\,400\}$, we generate data $\mathbf{Y}_{n}=(Y_{1},\ldots,Y_{n})$ from the Poisson mixture model \eqref{eq:mixture_model_poisson} with a Weibull prior $G$ with scale $5$ and shape $3$. For the QB estimate, the marginal likelihood in Newton's algorithm \eqref{eq:newton} is evaluated numerically via the trapezoidal rule. To perform this evaluation, the density function of the mixing distribution $G_{n}$ is represented through its values on a fixed uniform grid of $d\in\{5,000;\,1,000;\,500;\,100;\,50;\,10\}$ quadrature points over $\Theta=(0,U_{\Theta})$, where $U_\Theta=\max\{\max\{\mathbf{Y}_n\},\lceil Q_{n,0.99}+4(\max\{Q_{n,0.99},1\})^{1/2}\rceil\}$, with $Q_{n,0.99}=\textsc{Quantile}(\mathbf{Y}_n;0.99)$. Further, we se $G_{0}$ to be Uniform over $\Theta$, and take the learning rate to be $\alpha_{n}=(1+n)^{-0.99}$. For this setting of Newton's algorithm, Table \ref{weib_tab_sens} provides \textsc{e-MSE}, \textsc{e-REGRET} and CPU time as the sample size $n$ and the grid resolution $d$ vary. The CPU time refers to the time (in seconds) for processing a new observation on a laptop MacBook Pro (M1 processor). 

\begin{table}[ht]
\centering
\caption{Weibull prior: \textsc{e-MSE}, \textsc{e-REGRET} and CPU time (in seconds) of QB as $n$ and $d$ vary.}
{
\setlength{\tabcolsep}{0pt}
\begin{tabular}{@{}l@{\hspace{1cm}}*{6}{>{\centering\arraybackslash}p{1.75cm}}@{}}
\hline
\hline
 & $d=5{,}000$ & $d=1{,}000$ & $d=500$ & $d=100$ & $d=50$ & $d=10$ \\[0.1cm]
\hline
\multicolumn{7}{@{}l}{\underline{$n=50$}} \\[0.05cm]
\textsc{e-MSE}    & 2.485 & 2.485 & 2.485 & 2.484 & 2.483 & 2.472 \\
\textsc{e-REGRET} & 0.100 & 0.100 & 0.100 & 0.100 & 0.099 & 0.088 \\
CPU time & 0.003 & 0.002 & 0.002 & 0.002 & 0.001 & 0.001 \\[0.2cm]

\multicolumn{7}{@{}l}{\underline{$n=100$}} \\[0.05cm]
\textsc{e-MSE}    & 1.989 & 1.989 & 1.989 & 1.988 & 1.986 & 1.971 \\
\textsc{e-REGRET} & 0.193 & 0.193 & 0.193 & 0.192 & 0.190 & 0.175 \\
CPU time & 0.003 & 0.002 & 0.002 & 0.002 & 0.001 & 0.001 \\[0.2cm]

\multicolumn{7}{@{}l}{\underline{$n=200$}} \\[0.05cm]
\textsc{e-MSE}    & 2.024 & 2.024 & 2.024 & 2.023 & 2.023 & 2.063 \\
\textsc{e-REGRET} & -0.043 & -0.043 & -0.043 & -0.043 & -0.043 & -0.003 \\
CPU time & 0.003 & 0.002 & 0.002 & 0.002 & 0.000 & 0.000 \\[0.2cm]

\multicolumn{7}{@{}l}{\underline{$n=400$}} \\[0.05cm]
\textsc{e-MSE}    & 1.947 & 1.947 & 1.947 & 1.947 & 1.946 & 1.963 \\
\textsc{e-REGRET} & 0.362 & 0.362 & 0.362 & 0.361 & 0.361 & 0.378 \\
CPU time & 0.003 & 0.002 & 0.002 & 0.002 & 0.000 & 0.000 \\[0.1cm]
\hline
\hline
\end{tabular}
}
\label{weib_tab_sens}
\end{table}

Table \ref{weib_compare_tab} compares the \textsc{e-MSE} and \textsc{e-REGRET} for the Oracle, Robbins, ML, MHD and QB estimates. For the QB estimate we consider: i) a fixed uniform grid of $d=1,000$ quadrature points over $\Theta=(0,U_{\Theta})$; ii) an initial guess $G_{0}$ that is Uniform over $\Theta$; iii) a learning rate $\alpha_{n}=(1+n)^{-0.99}$. This setting corresponds to the second column of Table \ref{weib_tab_sens}. Figure \ref{weib_fig} displays the Oracle, Robbins, ML, MHD and QB estimates against the true values.

\begin{table}[ht]
\centering
\caption{Weibull prior: \textsc{e-MSE} and \textsc{e-REGRET} of Oracle, Robbins, ML, MHD and QB.}
{
\setlength{\tabcolsep}{0pt}
\begin{tabular}{@{}l@{\hspace{1cm}}*{5}{>{\centering\arraybackslash}p{1.75cm}}@{}}
\hline
\hline
 & Oracle & Robbins & ML & MHD & QB \\[0.1cm]
\hline
\multicolumn{6}{@{}l}{\underline{$n=50$}} \\[0.05cm]
\textsc{e-MSE}    & 2.385 & 3.921 & 2.204 & 2.228 & 2.485 \\
\textsc{e-REGRET} & 0.000 & 1.537 & -0.180 & -0.157 & 0.100 \\[0.3cm]

\multicolumn{6}{@{}l}{\underline{$n=100$}} \\[0.05cm]
\textsc{e-MSE}    & 1.796 & 11.319 & 1.848 & 1.977 & 1.989 \\
\textsc{e-REGRET} & 0.000 & 9.523 & 0.052 & 0.181 & 0.193 \\[0.3cm]

\multicolumn{6}{@{}l}{\underline{$n=200$}} \\[0.05cm]
\textsc{e-MSE}    & 2.066 & 7.600 & 2.051 & 2.084 & 2.024 \\
\textsc{e-REGRET} & 0.000 & 5.533 & -0.015 & 0.017 & -0.043 \\[0.3cm]

\multicolumn{6}{@{}l}{\underline{$n=400$}} \\[0.05cm]
\textsc{e-MSE}    & 1.585 & 6.072 & 2.910 & 1.678 & 1.947 \\
\textsc{e-REGRET} & 0.000 & 4.487 & 1.324 & 0.093 & 0.362 \\[0.1cm]
\hline
\hline
\end{tabular}
}
\label{weib_compare_tab}
\end{table}

\begin{figure}[t]
\centering
\includegraphics[width=.85\textwidth]{weib.png}
\caption{Data points $Y$ versus true parameters $\theta$ (grey) under Weibull prior, and estimates: Oracle (black), Robbins (green), ML (cyan), MHD (magenta) and QB (blue).}
\label{weib_fig}
\end{figure}

To conclude, we provide QB credibles interval at level $1-\alpha=0.95$. Intervals are constructed by relying on Newton's algorithm initialized as in Table \ref{weib_tab_sens},  with the optimal choice of the tuning parameters $\beta_{1}$ and $\beta_{2}$. For such QB credible intervals, Table \ref{weib_tab_sens_interval} provides \textsc{e-MARLD} as the sample size $n$ and the grid resolution $d$ vary. 

\begin{table}[ht]
\centering
\caption{Weibull prior: \textsc{e-MARLD} of QB as $n$ and $d$ vary.}
{
\setlength{\tabcolsep}{0pt}
\begin{tabular}{@{}l@{\hspace{1cm}}*{6}{>{\centering\arraybackslash}p{1.75cm}}@{}}
\hline
\hline
 & $d=5{,}000$ & $d=1{,}000$ & $d=500$ & $d=100$ & $d=50$ & $d=10$ \\[0.1cm]
\hline
\multicolumn{7}{@{}l}{\underline{$n=50$}} \\[0.05cm]
\textsc{e-MARLD} & 0.228 & 0.228 & 0.232 & 0.273 & 0.316 & 0.792 \\[0.2cm]

\multicolumn{7}{@{}l}{\underline{$n=100$}} \\[0.05cm]
\textsc{e-MARLD} & 0.325 & 0.326 & 0.333 & 0.369 & 0.421 & 1.078 \\[0.2cm]

\multicolumn{7}{@{}l}{\underline{$n=200$}} \\[0.05cm]
\textsc{e-MARLD} & 0.294 & 0.294 & 0.300 & 0.342 & 0.403 & 0.956 \\[0.2cm]

\multicolumn{7}{@{}l}{\underline{$n=400$}} \\[0.05cm]
\textsc{e-MARLD} & 0.315 & 0.315 & 0.316 & 0.360 & 0.410 & 0.992 \\[0.1cm]
\hline
\hline
\end{tabular}
}
\label{weib_tab_sens_interval}
\end{table}

Under the setting of Table \ref{weib_tab_sens_interval}, and assuming $d=1,000$ quadrature points over $\Theta=(0,U_{\Theta})$, Figure \ref{weib_fig_interval} display the QB and Oracle credible intervals. Figure \ref{weib_fig_interval1} display the Oracle and QB credible intervals for datasets with larger sample sizes, namely $n\in\{1,000,\,2,000,\,4,000,\,8,000\}$, generated under the same Poisson mixture model.

\begin{figure}[t]
\centering
\includegraphics[width=.85\textwidth]{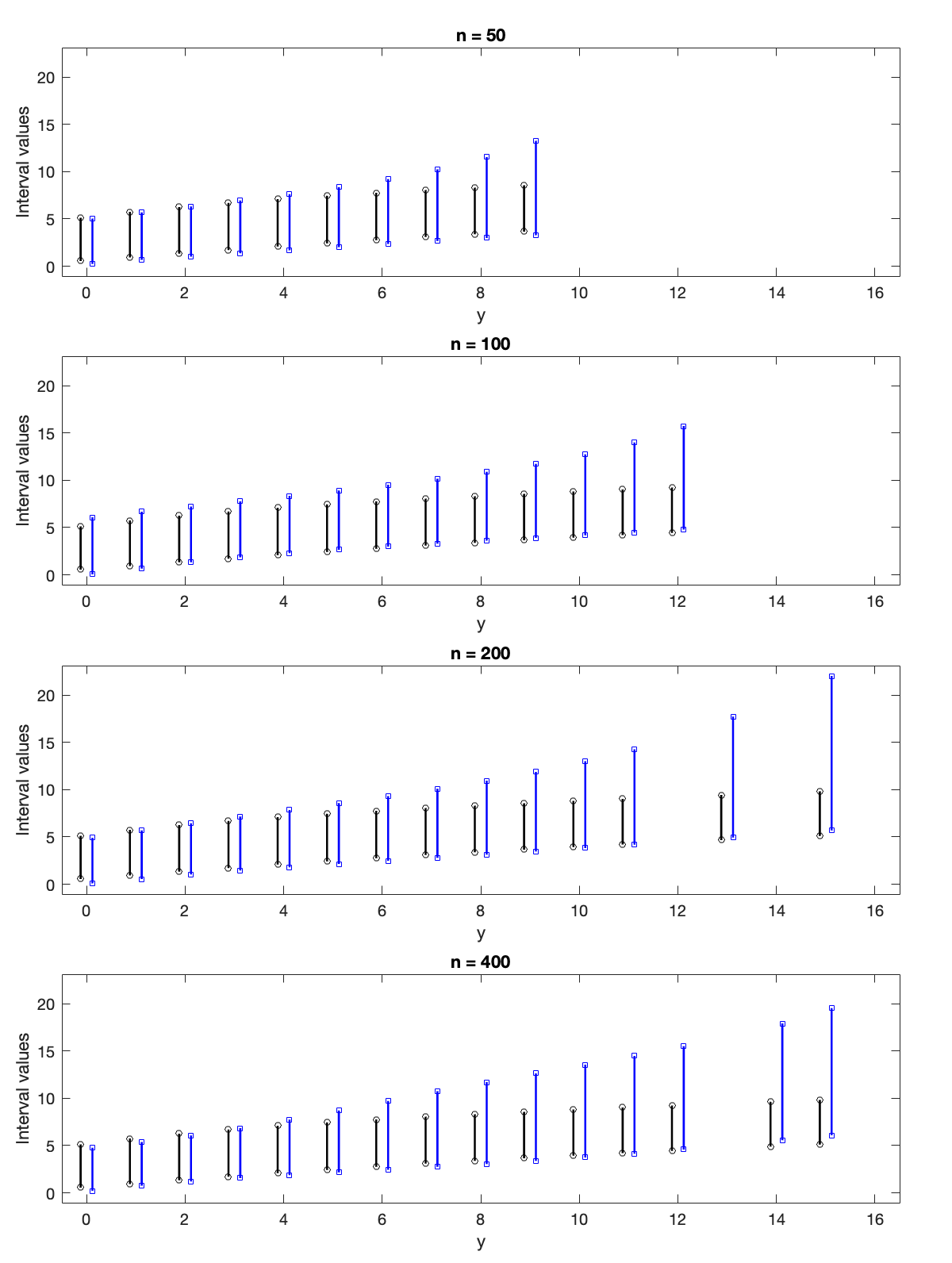}
\caption{Weibull prior: Oracle credible interval $I_{G^{\ast}}(y)$ (black), QB credible interval $I_{G_{n}}(y)$ (blue) for $Y\in\{Y_1,\dots,Y_n\}$.}
\label{weib_fig_interval}
\end{figure}

\begin{figure}[t]
\centering
\includegraphics[width=.85\textwidth]{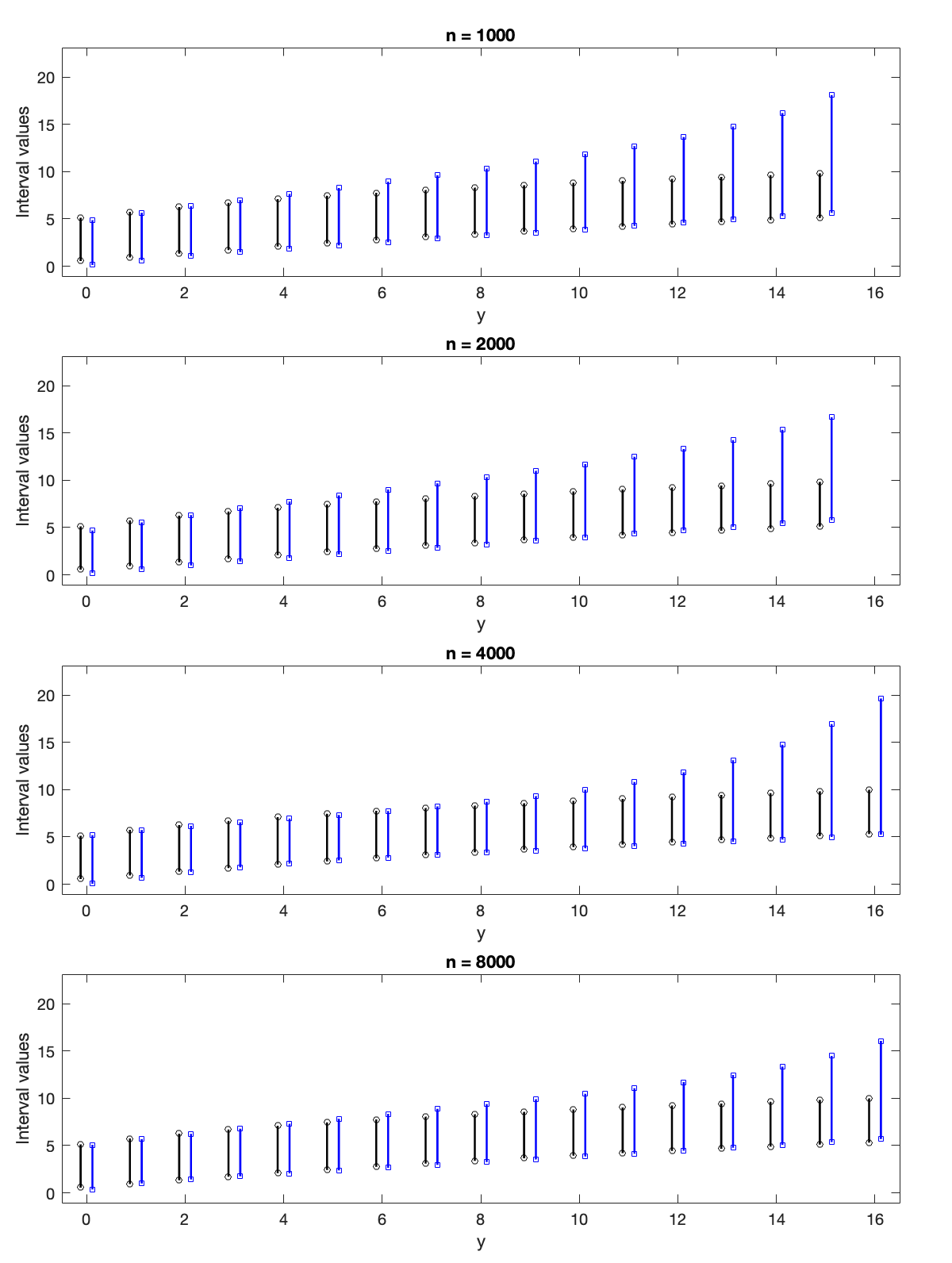}
\caption{Weibull prior: Oracle credible interval $I_{G^{\ast}}(y)$ (black), QB credible interval $I_{G_{n}}(y)$ (blue) for $Y\in\{Y_1,\dots,Y_n\}$.}
\label{weib_fig_interval1}
\end{figure}

\subsection{Half-Gaussian prior}\label{app5_num14}

For $n\in\{50,\,100,\,200,\,400\}$, we generate data $\mathbf{Y}_{n}=(Y_{1},\ldots,Y_{n})$ from the Poisson mixture model \eqref{eq:mixture_model_poisson} with a half-Gaussian prior $G$. For the QB estimate, the marginal likelihood in Newton's algorithm \eqref{eq:newton} is evaluated numerically via the trapezoidal rule. To perform this evaluation, the density function of the mixing distribution $G_{n}$ is represented through its values on a fixed uniform grid of $d\in\{5,000;\,1,000;\,500;\,100;\,50;\,10\}$ quadrature points over $\Theta=(0,U_{\Theta})$, where $U_\Theta=\max\{\max\{\mathbf{Y}_n\},\lceil Q_{n,0.99}+4(\max\{Q_{n,0.99},1\})^{1/2}\rceil\}$, with $Q_{n,0.99}=\textsc{Quantile}(\mathbf{Y}_n;0.99)$. Further, we set $G_{0}$ to be Uniform over $\Theta$, and take the learning rate to be $\alpha_{n}=(1+n)^{-0.99}$. For this setting of Newton's algorithm, Table \ref{gauss_tab_sens} provides \textsc{e-MSE}, \textsc{e-REGRET} and CPU time as the sample size $n$ and the grid resolution $d$ vary. The CPU time refers to the time (in seconds) for processing a new observation on a laptop MacBook Pro (M1 processor).

\begin{table}[ht]
\centering
\caption{Half-Gaussian prior: \textsc{e-MSE}, \textsc{e-REGRET} and CPU time (in seconds) of QB as $n$ and $d$ vary.}
{
\setlength{\tabcolsep}{0pt}
\begin{tabular}{@{}l@{\hspace{1cm}}*{6}{>{\centering\arraybackslash}p{1.75cm}}@{}}
\hline
\hline
 & $d=5{,}000$ & $d=1{,}000$ & $d=500$ & $d=100$ & $d=50$ & $d=10$ \\[0.1cm]
\hline
\multicolumn{7}{@{}l}{\underline{$n=50$}} \\[0.05cm]
\textsc{e-MSE}    & 0.280 & 0.281 & 0.281 & 0.286 & 0.294 & 0.651 \\
\textsc{e-REGRET} & 0.108 & 0.109 & 0.109 & 0.114 & 0.122 & 0.478 \\
CPU time & 0.003 & 0.002 & 0.002 & 0.001 & 0.001 & 0.001 \\[0.2cm]

\multicolumn{7}{@{}l}{\underline{$n=100$}} \\[0.05cm]
\textsc{e-MSE}    & 0.361 & 0.362 & 0.363 & 0.376 & 0.398 & 1.030 \\
\textsc{e-REGRET} & 0.117 & 0.118 & 0.120 & 0.132 & 0.154 & 0.786 \\
CPU time & 0.003 & 0.002 & 0.002 & 0.001 & 0.001 & 0.001 \\[0.2cm]

\multicolumn{7}{@{}l}{\underline{$n=200$}} \\[0.05cm]
\textsc{e-MSE}    & 0.296 & 0.295 & 0.295 & 0.293 & 0.291 & 0.492 \\
\textsc{e-REGRET} & 0.004 & 0.004 & 0.003 & 0.001 & -0.001 & 0.200 \\
CPU time & 0.003 & 0.002 & 0.002 & 0.001 & 0.001 & 0.001 \\[0.2cm]

\multicolumn{7}{@{}l}{\underline{$n=400$}} \\[0.05cm]
\textsc{e-MSE}    & 0.280 & 0.281 & 0.281 & 0.285 & 0.291 & 0.597 \\
\textsc{e-REGRET} & 0.010 & 0.010 & 0.011 & 0.014 & 0.020 & 0.327 \\
CPU time & 0.003 & 0.002 & 0.002 & 0.001 & 0.001 & 0.001 \\[0.1cm]
\hline
\hline
\end{tabular}
}
\label{gauss_tab_sens}
\end{table}

Table \ref{gauss_compare_tab} compares the \textsc{e-MSE} and \textsc{e-REGRET} for the Oracle, Robbins, ML, MHD and QB estimates. For the QB estimate we consider: i) a fixed uniform grid of $d=1,000$ quadrature points over $\Theta=(0,U_{\Theta})$; ii) an initial guess $G_{0}$ that is Uniform over $\Theta$; iii) a learning rate $\alpha_{n}=(1+n)^{-0.99}$. This setting corresponds to the second column of Table \ref{gauss_tab_sens}. Figure \ref{gauss_fig} displays the Oracle, Robbins, ML, MHD and QB estimates against the true values.

\begin{table}[ht]
\centering
\caption{Half-Gaussian prior: \textsc{e-MSE} and \textsc{e-REGRET} of Oracle, Robbins, ML, MHD and QB.}
{
\setlength{\tabcolsep}{0pt}
\begin{tabular}{@{}l@{\hspace{1cm}}*{5}{>{\centering\arraybackslash}p{1.75cm}}@{}}
\hline
\hline
 & Oracle & Robbins & ML & MHD & QB \\[0.1cm]
\hline
\multicolumn{6}{@{}l}{\underline{$n=50$}} \\[0.05cm]
\textsc{e-MSE}    & 0.172 & 1.133 & 0.456 & 0.176 & 0.281 \\
\textsc{e-REGRET} & 0.000 & 0.961 & 0.284 & 0.004 & 0.109 \\[0.3cm]

\multicolumn{6}{@{}l}{\underline{$n=100$}} \\[0.05cm]
\textsc{e-MSE}    & 0.244 & 0.980 & 0.311 & 1.287 & 0.362 \\
\textsc{e-REGRET} & 0.000 & 0.736 & 0.067 & 1.043 & 0.118 \\[0.3cm]

\multicolumn{6}{@{}l}{\underline{$n=200$}} \\[0.05cm]
\textsc{e-MSE}    & 0.292 & 0.315 & 0.295 & 0.309 & 0.295 \\
\textsc{e-REGRET} & 0.000 & 0.023 & 0.003 & 0.017 & 0.004 \\[0.3cm]

\multicolumn{6}{@{}l}{\underline{$n=400$}} \\[0.05cm]
\textsc{e-MSE}    & 0.270 & 0.284 & 0.278 & 0.282 & 0.281 \\
\textsc{e-REGRET} & 0.000 & 0.013 & 0.008 & 0.012 & 0.010 \\[0.1cm]
\hline
\hline
\end{tabular}
}
\label{gauss_compare_tab}
\end{table}

\begin{figure}[t]
\centering
\includegraphics[width=.85\textwidth]{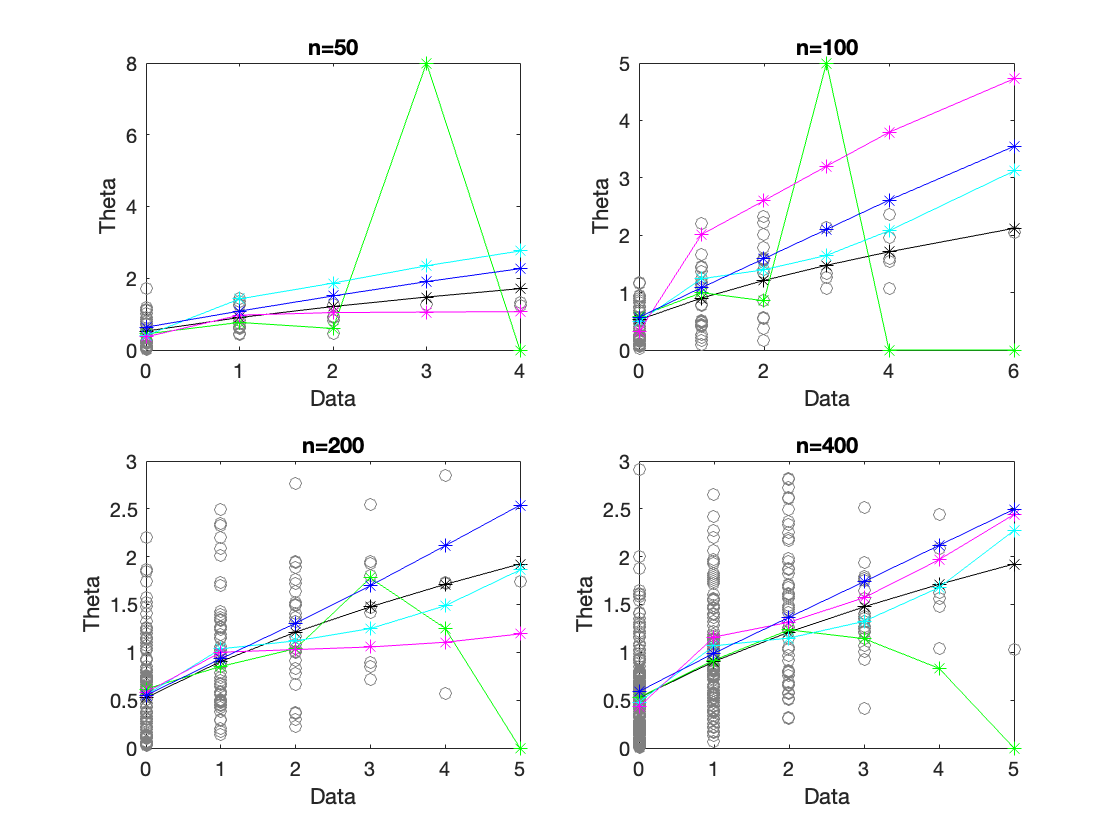}
\caption{Data points $Y$ versus true parameters $\theta$ (grey) under half-Gaussian prior, and estimates: Oracle (black), Robbins (green), ML (cyan), MHD (magenta) and QB (blue).}
\label{gauss_fig}
\end{figure}

To conclude, we provide QB credible intervals at level $1-\alpha=0.95$. Intervals are constructed by relying on Newton's algorithm initialized as in Table \ref{gauss_tab_sens},  with the optimal choice of the tuning parameters $\beta_{1}$ and $\beta_{2}$. For such QB credible intervals, Table \ref{gauss_tab_sens_interval} provides \textsc{e-MARLD} as the sample size $n$ and the grid resolution $d$ vary. 

\begin{table}[ht]
\centering
\caption{Half-Gaussian prior: \textsc{e-MARLD} of QB as $n$ and $d$ vary.}
{
\setlength{\tabcolsep}{0pt}
\begin{tabular}{@{}l@{\hspace{1cm}}*{6}{>{\centering\arraybackslash}p{1.75cm}}@{}}
\hline
\hline
 & $d=5{,}000$ & $d=1{,}000$ & $d=500$ & $d=100$ & $d=50$ & $d=10$ \\[0.1cm]
\hline
\multicolumn{7}{@{}l}{\underline{$n=50$}} \\[0.05cm]
\textsc{e-MARLD} & 0.490 & 0.491 & 0.497 & 0.547 & 0.614 & 1.161 \\[0.2cm]

\multicolumn{7}{@{}l}{\underline{$n=100$}} \\[0.05cm]
\textsc{e-MARLD} & 0.560 & 0.560 & 0.568 & 0.617 & 0.643 & 0.976 \\[0.2cm]

\multicolumn{7}{@{}l}{\underline{$n=200$}} \\[0.05cm]
\textsc{e-MARLD} & 0.203 & 0.203 & 0.211 & 0.238 & 0.325 & 0.658 \\[0.2cm]

\multicolumn{7}{@{}l}{\underline{$n=400$}} \\[0.05cm]
\textsc{e-MARLD} & 0.218 & 0.218 & 0.227 & 0.262 & 0.287 & 0.638 \\[0.1cm]
\hline
\hline
\end{tabular}
}
\label{gauss_tab_sens_interval}
\end{table}

Under the setting of Table \ref{gauss_tab_sens_interval}, and assuming $d=1,000$ quadrature points over $\Theta=(0,U_{\Theta})$, Figure \ref{gauss_fig_interval} display the QB and Oracle credible intervals. Figure \ref{gauss_fig_interval1} display the Oracle and QB credible intervals for datasets with larger sample sizes, namely $n\in\{1,000,\,2,000,\,4,000,\,8,000\}$, generated under the same Poisson mixture model.

\begin{figure}[t]
\centering
\includegraphics[width=.85\textwidth]{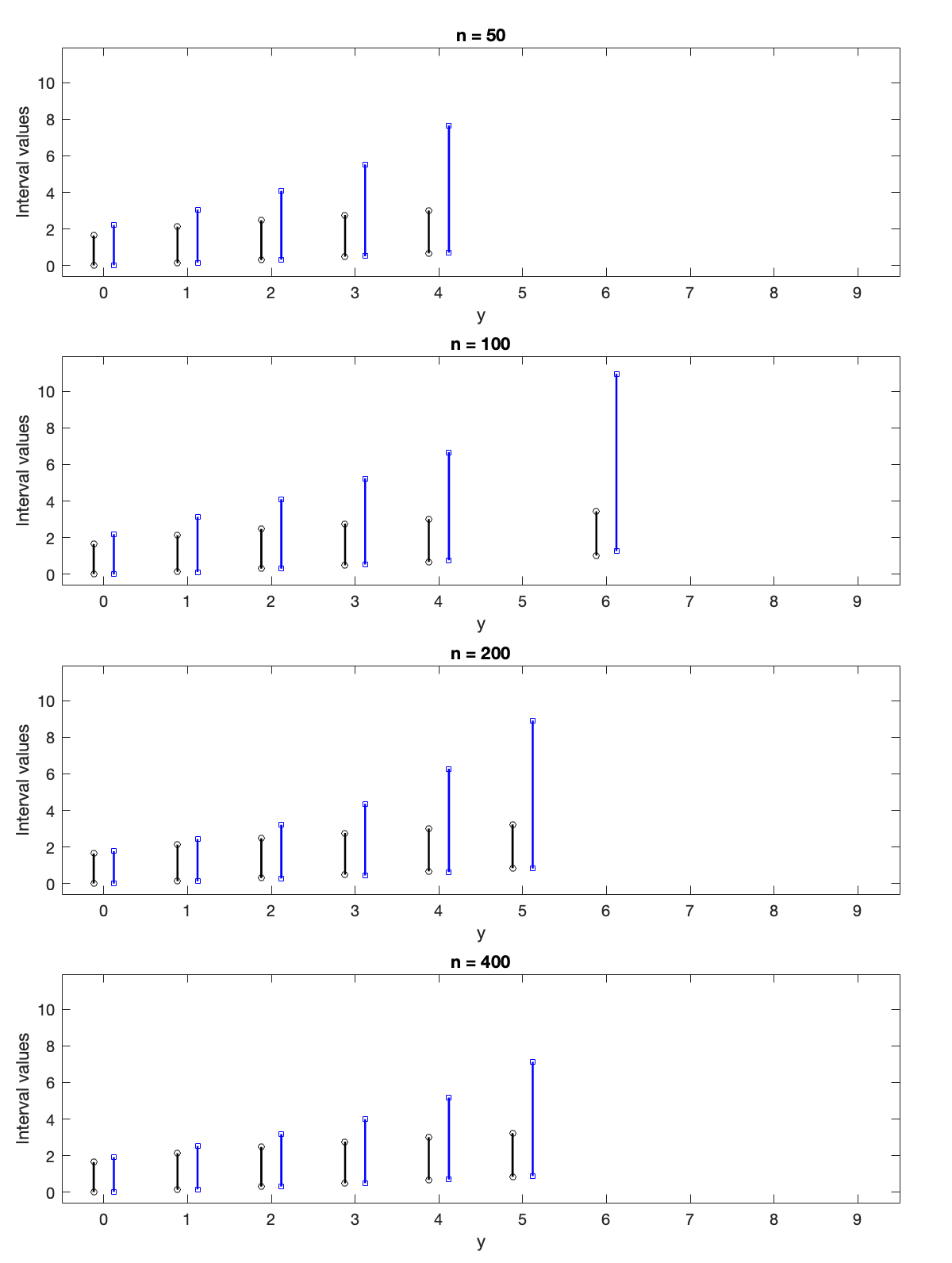}
\caption{Half-Gaussian prior: Oracle credible interval $I_{G^{\ast}}(y)$ (black), QB credible interval $I_{G_{n}}(y)$ (blue) for $Y\in\{Y_1,\dots,Y_n\}$.}
\label{gauss_fig_interval}
\end{figure}

\begin{figure}[t]
\centering
\includegraphics[width=.85\textwidth]{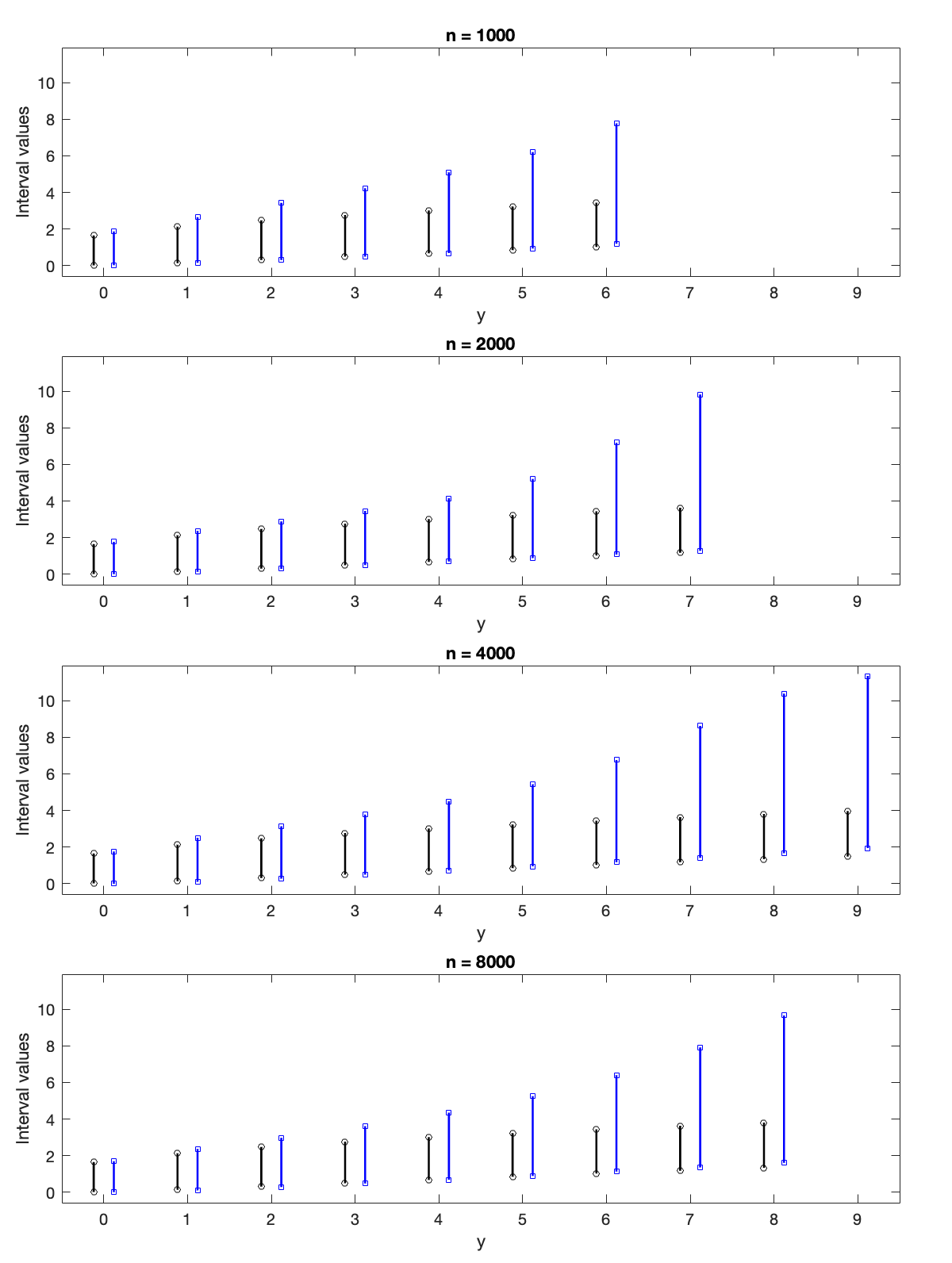}
\caption{Half-Gaussian prior: Oracle credible interval $I_{G^{\ast}}(y)$ (black), QB credible interval $I_{G_{n}}(y)$ (blue) for $Y\in\{Y_1,\dots,Y_n\}$.}
\label{gauss_fig_interval1}
\end{figure}

\subsection{Square-root of half-Cauchy prior}\label{app5_num15}

For $n\in\{50,\,100,\,200,\,400\}$, we generate data $\mathbf{Y}_{n}=(Y_{1},\ldots,Y_{n})$ from the Poisson mixture model \eqref{eq:mixture_model_poisson} with a square-root of half-Cauchy prior $G$. For the QB estimate, the marginal likelihood in Newton's algorithm \eqref{eq:newton} is evaluated numerically via the trapezoidal rule. To perform this evaluation, the density function of the mixing distribution $G_{n}$ is represented through its values on a fixed uniform grid of $d\in\{5,000;\,1,000;\,500;\,100;\,50;\,10\}$ quadrature points over $\Theta=(0,U_{\Theta})$, where $U_\Theta=\max\{\max\{\mathbf{Y}_n\},\lceil Q_{n,0.99}+4(\max\{Q_{n,0.99},1\})^{1/2}\rceil\}$, with $Q_{n,0.99}=\textsc{Quantile}(\mathbf{Y}_n;0.99)$. Further, we set $G_{0}$ to be Uniform over $\Theta$, and take the learning rate to be $\alpha_{n}=(1+n)^{-0.99}$. For this setting of Newton's algorithm, Table \ref{cau_tab_sens} provides \textsc{e-MSE}, \textsc{e-REGRET} and CPU time as the sample size $n$ and the grid resolution $d$ vary. The CPU time refers to the time (in seconds) for processing a new observation on a laptop MacBook Pro (M1 processor).

\begin{table}[ht]
\centering
\caption{Square-root of half-Cauchy prior: \textsc{e-MSE}, \textsc{e-REGRET} and CPU time (in seconds) of QB as $n$ and $d$ vary.}
{
\setlength{\tabcolsep}{0pt}
\begin{tabular}{@{}l@{\hspace{1cm}}*{6}{>{\centering\arraybackslash}p{1.75cm}}@{}}
\hline
\hline
 & $d=5{,}000$ & $d=1{,}000$ & $d=500$ & $d=100$ & $d=50$ & $d=10$ \\[0.1cm]
\hline
\multicolumn{7}{@{}l}{\underline{$n=50$}} \\[0.05cm]
\textsc{e-MSE}    & 0.773 & 0.773 & 0.773 & 0.774 & 0.775 & 1.136 \\
\textsc{e-REGRET} & 0.108 & 0.108 & 0.108 & 0.108 & 0.109 & 0.470 \\
CPU time & 0.003 & 0.002 & 0.002 & 0.001 & 0.001 & 0.001 \\[0.2cm]

\multicolumn{7}{@{}l}{\underline{$n=100$}} \\[0.05cm]
\textsc{e-MSE}    & 0.515 & 0.514 & 0.513 & 0.509 & 0.545 & 10.613 \\
\textsc{e-REGRET} & 0.105 & 0.104 & 0.103 & 0.099 & 0.135 & 10.203 \\
CPU time & 0.003 & 0.002 & 0.002 & 0.001 & 0.001 & 0.001 \\[0.2cm]

\multicolumn{7}{@{}l}{\underline{$n=200$}} \\[0.05cm]
\textsc{e-MSE}    & 0.585 & 0.585 & 0.585 & 0.586 & 0.588 & 1.062 \\
\textsc{e-REGRET} & -0.014 & -0.014 & -0.014 & -0.013 & -0.011 & 0.463 \\
CPU time & 0.003 & 0.002 & 0.002 & 0.001 & 0.001 & 0.001 \\[0.2cm]

\multicolumn{7}{@{}l}{\underline{$n=400$}} \\[0.05cm]
\textsc{e-MSE}    & 0.594 & 0.594 & 0.594 & 0.593 & 0.592 & 1.798 \\
\textsc{e-REGRET} & 0.015 & 0.015 & 0.015 & 0.014 & 0.013 & 1.218 \\
CPU time & 0.003 & 0.002 & 0.002 & 0.001 & 0.001 & 0.001 \\[0.1cm]
\hline
\hline
\end{tabular}
}
\label{cau_tab_sens}
\end{table}

Table \ref{cau_compare_tab} compares the \textsc{e-MSE} and \textsc{e-REGRET} for the Oracle, Robbins, ML, MHD and QB estimates. For the QB estimate we consider: i) a fixed uniform grid of $d=1,000$ quadrature points over $\Theta=(0,U_{\Theta})$; ii) an initial guess $G_{0}$ that is Uniform over $\Theta$; iii) a learning rate $\alpha_{n}=(1+n)^{-0.99}$. This setting corresponds to the second column of Table \ref{cau_tab_sens}. Figure \ref{cau_fig} displays the Oracle, Robbins, ML, MHD and QB estimates against the true values.

\begin{table}[ht]
\centering
\caption{Square-root of half-Cauchy prior: \textsc{e-MSE} and \textsc{e-REGRET} of Oracle, Robbins, ML, MHD and QB.}
{
\setlength{\tabcolsep}{0pt}
\begin{tabular}{@{}l@{\hspace{1cm}}*{5}{>{\centering\arraybackslash}p{1.75cm}}@{}}
\hline
\hline
 & Oracle & Robbins & ML & MHD & QB \\[0.1cm]
\hline
\multicolumn{6}{@{}l}{\underline{$n=50$}} \\[0.05cm]
\textsc{e-MSE}    & 0.666 & 2.861 & 1.614 & 0.939 & 0.773 \\
\textsc{e-REGRET} & 0.000 & 2.196 & 0.948 & 0.273 & 0.108 \\[0.3cm]

\multicolumn{6}{@{}l}{\underline{$n=100$}} \\[0.05cm]
\textsc{e-MSE}    & 0.410 & 15.520 & 1.283 & 2.577 & 0.514 \\
\textsc{e-REGRET} & 0.000 & 15.110 & 0.873 & 2.166 & 0.104 \\[0.3cm]

\multicolumn{6}{@{}l}{\underline{$n=200$}} \\[0.05cm]
\textsc{e-MSE}    & 0.599 & 1.106 & 1.409 & 0.621 & 0.585 \\
\textsc{e-REGRET} & 0.000 & 0.507 & 0.810 & 0.022 & -0.014 \\[0.3cm]

\multicolumn{6}{@{}l}{\underline{$n=400$}} \\[0.05cm]
\textsc{e-MSE}    & 0.579 & 1.971 & 0.668 & 0.649 & 0.594 \\
\textsc{e-REGRET} & 0.000 & 1.391 & 0.088 & 0.070 & 0.015 \\[0.1cm]
\hline
\hline
\end{tabular}
}
\label{cau_compare_tab}
\end{table}

\begin{figure}[t]
\centering
\includegraphics[width=.85\textwidth]{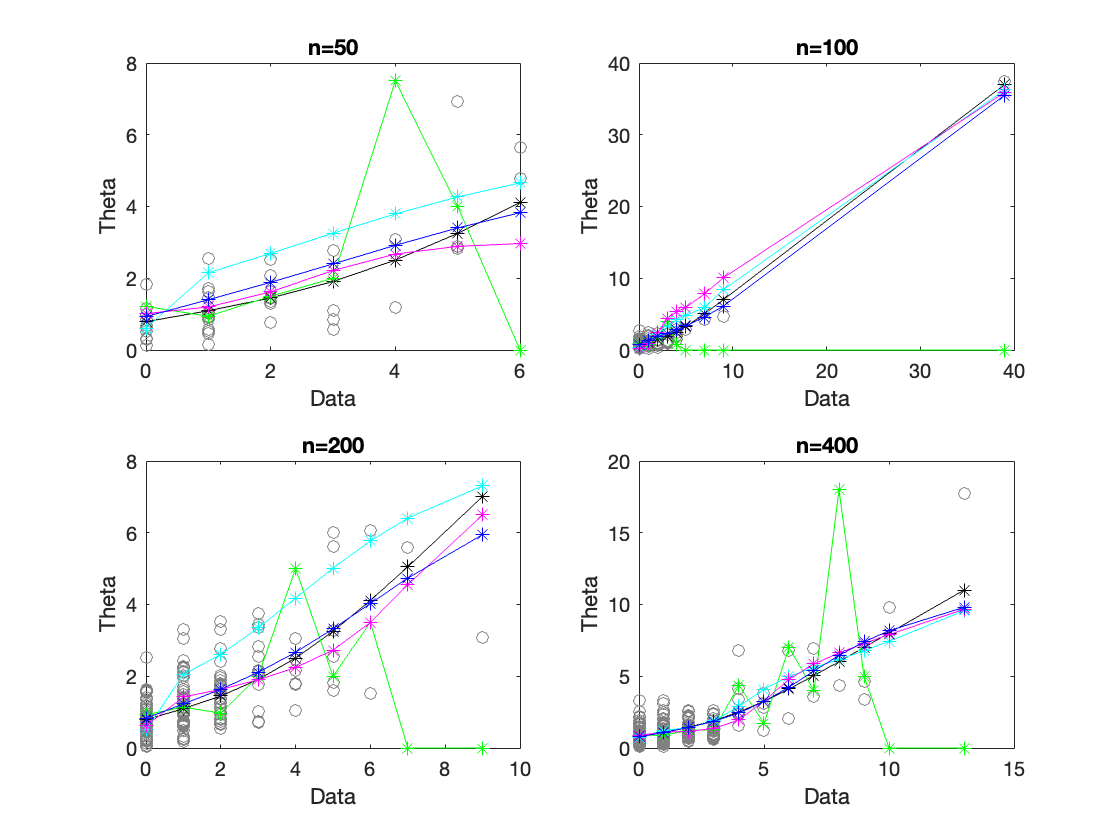}
\caption{Data points $Y$ versus true parameters $\theta$ (grey) under square-root of half-Cauchy prior, and estimates: Oracle (black), Robbins (green), ML (cyan), MHD (magenta) and QB (blue).}
\label{cau_fig}
\end{figure}

To conclude, we provide QB credible intervals at level $1-\alpha=0.95$. Intervals are constructed by relying on Newton's algorithm initialized as in Table \ref{cau_tab_sens},  with the optimal choice of the tuning parameters $\beta_{1}$ and $\beta_{2}$. For such QB credible intervals, Table \ref{cau_tab_sens_interval} provides \textsc{e-MARLD} as the sample size $n$ and the grid resolution $d$ vary. 

\begin{table}[ht]
\centering
\caption{Square-root of half-Cauchy prior: \textsc{e-MARLD} of QB as $n$ and $d$ vary.}
{
\setlength{\tabcolsep}{0pt}
\begin{tabular}{@{}l@{\hspace{1cm}}*{6}{>{\centering\arraybackslash}p{1.75cm}}@{}}
\hline
\hline
 & $d=5{,}000$ & $d=1{,}000$ & $d=500$ & $d=100$ & $d=50$ & $d=10$ \\[0.1cm]
\hline
\multicolumn{7}{@{}l}{\underline{$n=50$}} \\[0.05cm]
\textsc{e-MARLD} & 0.382 & 0.382 & 0.388 & 0.435 & 0.495 & 0.742 \\[0.2cm]

\multicolumn{7}{@{}l}{\underline{$n=100$}} \\[0.05cm]
\textsc{e-MARLD} & 0.374 & 0.374 & 0.378 & 0.555 & 0.753 & 1.737 \\[0.2cm]

\multicolumn{7}{@{}l}{\underline{$n=200$}} \\[0.05cm]
\textsc{e-MARLD} & 0.208 & 0.208 & 0.216 & 0.284 & 0.379 & 0.810 \\[0.2cm]

\multicolumn{7}{@{}l}{\underline{$n=400$}} \\[0.05cm]
\textsc{e-MARLD} & 0.065 & 0.065 & 0.075 & 0.141 & 0.273 & 0.811 \\[0.1cm]
\hline
\hline
\end{tabular}
}
\label{cau_tab_sens_interval}
\end{table}

Under the setting of Table \ref{cau_tab_sens_interval}, and assuming $d=1,000$ quadrature points over $\Theta=(0,U_{\Theta})$, Figure \ref{cau_fig_interval} display the QB and Oracle credible intervals. Figure \ref{cau_fig_interval1} display the Oracle and QB credible intervals for datasets with larger sample sizes, namely $n\in\{1,000,\,2,000,\,4,000,\,8,000\}$, generated under the same Poisson mixture model.

\begin{figure}[t]
\centering
\includegraphics[width=.85\textwidth]{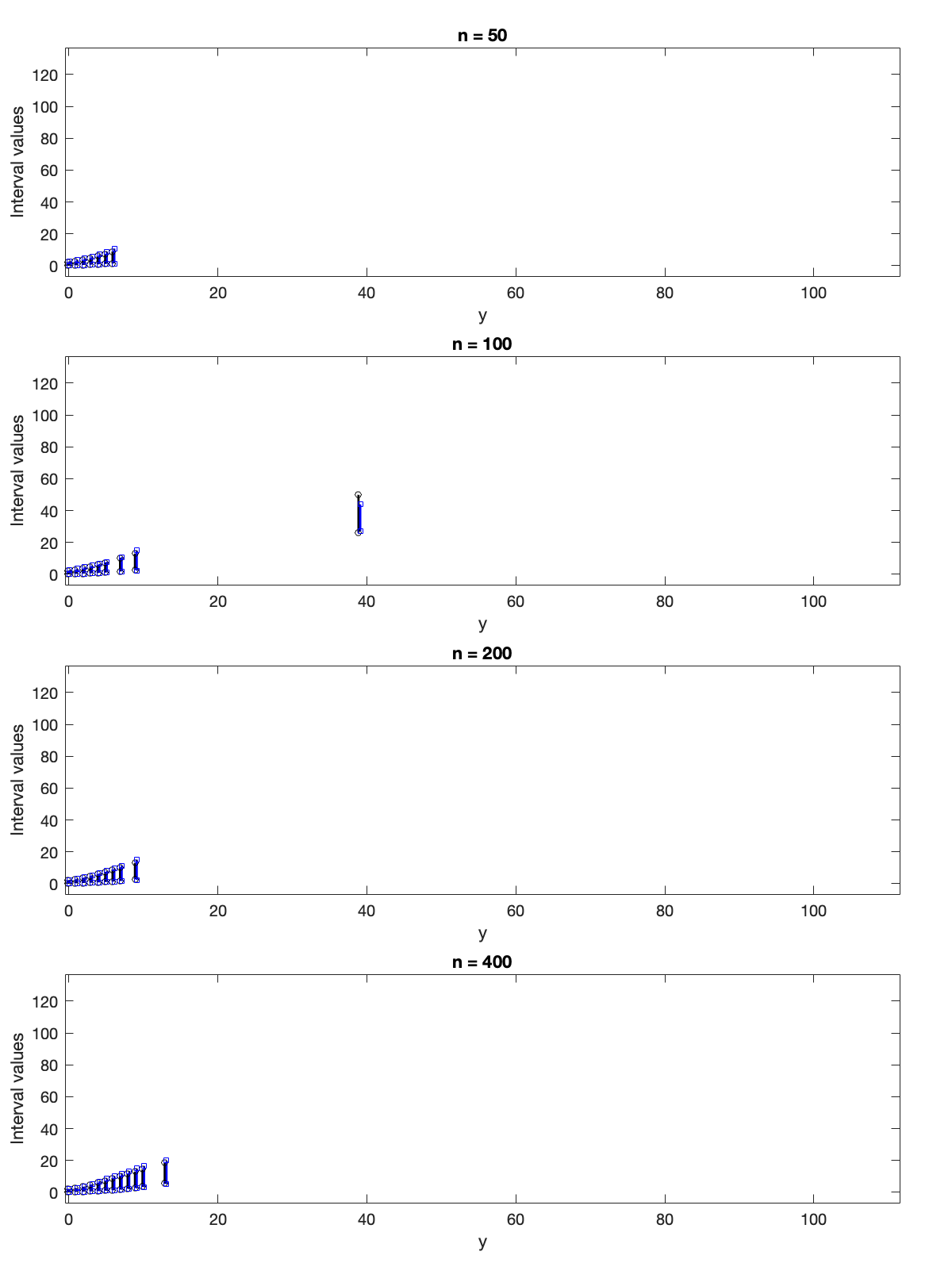}
\caption{Square-root of half-Cauchy prior: Oracle credible interval $I_{G^{\ast}}(y)$ (black), QB credible interval $I_{G_{n}}(y)$ (blue) for $Y\in\{Y_1,\dots,Y_n\}$.}
\label{cau_fig_interval}
\end{figure}

\begin{figure}[t]
\centering
\includegraphics[width=.85\textwidth]{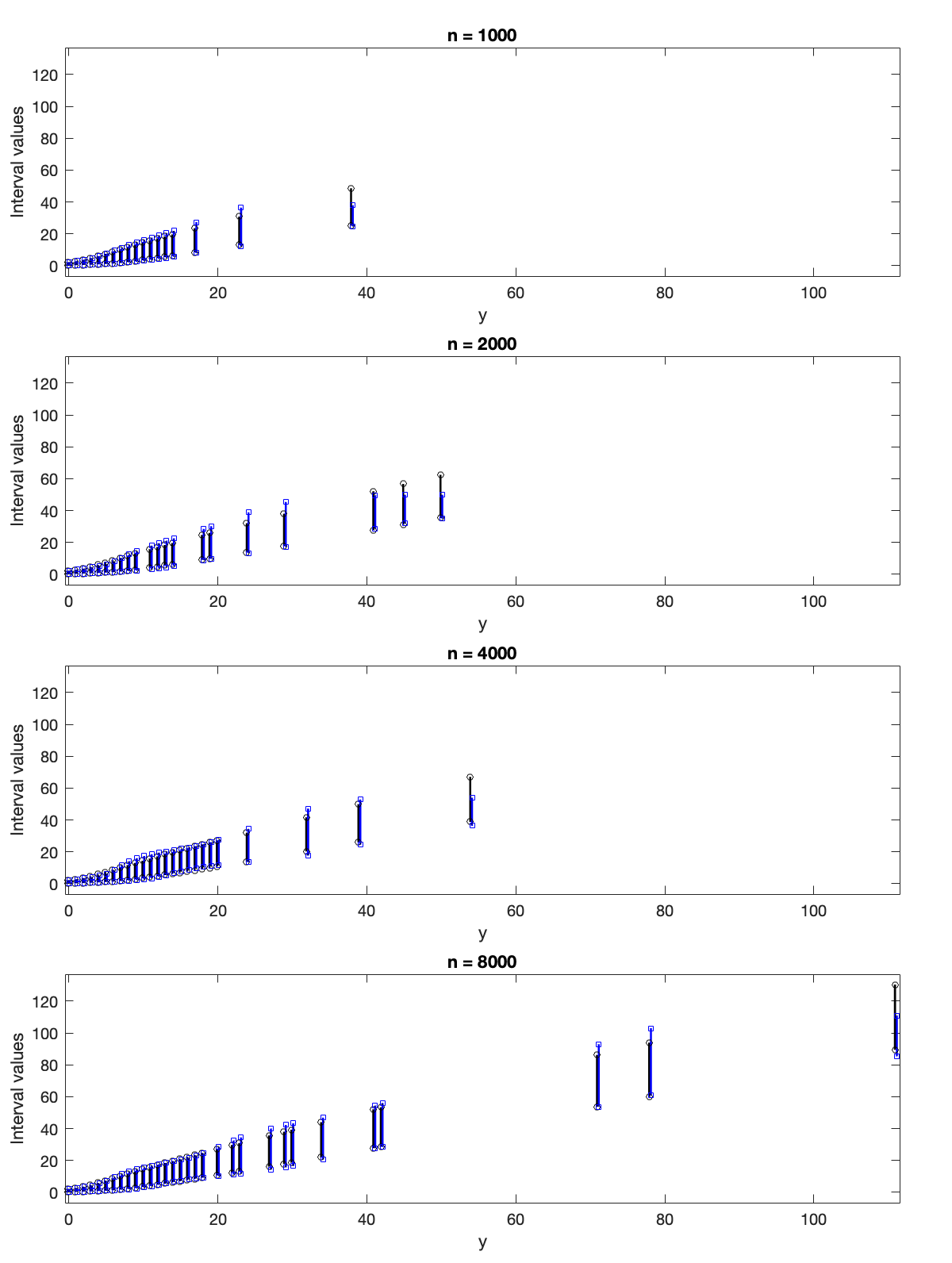}
\caption{Square-root of half-Cauchy prior: Oracle credible interval $I_{G^{\ast}}(y)$ (black), QB credible interval $I_{G_{n}}(y)$ (blue) for $Y\in\{Y_1,\dots,Y_n\}$.}
\label{cau_fig_interval1}
\end{figure}

\section{Numerical illustrations: Twitter dataset}\label{app5_num2}

We analyze a dataset of tweets that received at least $50$ retweets on Twitter between October 7 and November 6, 2011, available at \url{https://snap.stanford.edu/seismic/}. It contains $166,783$ tweets generating $34,617,704$ retweets, along with metadata such as tweet ID, posting time, retweet time, and follower counts. For each tweet, we focus on the number of retweets received within the first $30$ seconds after posting.

For each tweet, we consider the number of retweets generated within $30$ seconds from the tweet post time. From the first two rows of Table \ref{tweet_tab1}, $40,259$ tweets have $0$ retweets in $30$ seconds, $28,339$ tweets have $1$ retweet in $30$ seconds, etc. Table \ref{tweet_tab1} and Figure \ref{tweet_fig1} display the estimates of the number of retweets expected for a tweet.

For each tweet, we focus on the number of followers generated by retweets within $30$ seconds from the tweet post time, namely the total number of followers of fast retweets of a tweet, with the proviso that $0$ retweets corresponds to $0$ followers generated. From the first two rows of Table \ref{tweet_tab_fol1}, $40,287$ tweets have generated $0$ followers in $30$ seconds, $67$ tweets have generated $1$ follower in $30$ seconds, etc. Table \ref{tweet_tab_fol1} and Figure \ref{tweet_fig_fol1} display the estimates of the number of followers expected to be generated by the retweets of a tweet.

\begin{table}[ht]
\centering
\caption{Counts $n_y$ of the number of tweets that have $y$ retweets within $30$ seconds from the tweet post time, and corresponding estimates of the expected number of retweets.}
\label{tweet_tab1}
{
\setlength{\tabcolsep}{0pt}

\begin{tabular}{@{}l@{\hspace{0.8cm}}*{8}{>{\centering\arraybackslash}p{1.55cm}}@{}}
\hline
\hline
$y$    & 0 & 1 & 2 & 3 & 4 & 5 & 6 & 7 \\[0.1cm]
$n_y$  & 40,259 & 28,339 & 21,581 & 16,479 & 12,130 & 9,238 & 7,193 & 5,464 \\[0.1cm]
\hline
Robbins  & 0.703 & 1.521 & 2.294 & 2.940 & 3.812 & 4.673 & 5.324 & 6.322 \\
ML  & 0.391 & 2.134 & 2.932 & 3.703 & 4.460 & 5.333 & 6.331 & 7.422 \\
MHD  & 0.752 & 2.201 & 3.144 & 4.152 & 4.993 & 5.691 & 6.392 & 7.164 \\[0.2cm]
QB  & 0.614 & 0.821 & 1.633 & 3.084 & 4.012 & 4.523 & 5.024 & 5.711 \\[0.05cm]
\hline
\end{tabular}

\par\medskip

\begin{tabular}{@{}l@{\hspace{0.8cm}}*{8}{>{\centering\arraybackslash}p{1.55cm}}@{}}
\hline
$y$    & 8 & 11 & 14 & 17 & 20 & 23 & 26 & 29 \\[0.1cm]
$n_y$  & 4,319 & 2,291 & 1,352 & 817 & 475 & 293 & 188 & 129 \\[0.1cm]
\hline
Robbins  & 7.180 & 10.102 & 11.774 & 15.331 & 18.663 & 21.131 & 22.402 & 28.374 \\
ML  & 8.542 & 11.790 & 15.014 & 18.302 & 20.894 & 22.963 & 25.871 & 30.473 \\
MHD  & 8.044 & 10.873 & 13.020 & 14.832 & 16.681 & 18.823 & 22.501 & 29.214 \\[0.2cm]
QB  & 6.674 & 10.372 & 12.691 & 14.643 & 17.554 & 21.381 & 24.782 & 27.531 \\[0.05cm]
\hline
\end{tabular}

\par\medskip

\begin{tabular}{@{}l@{\hspace{0.8cm}}*{8}{>{\centering\arraybackslash}p{1.55cm}}@{}}
\hline
$y$    & 32 & 35 & 38 & 41 & 44 & 47 & 50 & 53 \\[0.1cm]
$n_y$  & 89 & 69 & 60 & 61 & 36 & 36 & 32 & 21 \\[0.1cm]
\hline
Robbins  & 30.784 & 32.350 & 30.552 & 26.853 & 48.751 & 45.334 & 43.032 & 41.143 \\
ML  & 34.880 & 37.752 & 39.813 & 41.734 & 43.791 & 46.183 & 49.131 & 52.722 \\
MHD  & 35.482 & 39.081 & 41.634 & 44.051 & 46.583 & 49.212 & 51.934 & 54.701 \\[0.2cm]
QB  & 30.263 & 33.322 & 36.594 & 39.712 & 42.504 & 45.051 & 47.543 & 50.092 \\[0.05cm]
\hline
\end{tabular}

\par\medskip

\begin{tabular}{@{}l@{\hspace{0.8cm}}*{8}{>{\centering\arraybackslash}p{1.55cm}}@{}}
\hline
$y$    & 56 & 59 & 62 & 65 & 68 & 71 & 74 & 77 \\[0.1cm]
$n_y$  & 24 & 16 & 14 & 13 & 14 & 11 & 2 & 5 \\[0.1cm]
\hline
Robbins  & 42.752 & 48.751 & 45.004 & 91.393 & 14.790 & 19.642 & 225.003 & 62.401 \\
ML  & 56.584 & 60.080 & 62.881 & 65.062 & 66.843 & 68.381 & 69.772 & 71.010 \\
MHD  & 57.471 & 60.213 & 62.842 & 65.313 & 67.544 & 69.491 & 71.152 & 72.531 \\[0.2cm]
QB  & 52.744 & 55.401 & 57.943 & 60.262 & 62.332 & 64.184 & 65.871 & 67.452 \\[0.1cm]
\hline
\hline
\end{tabular}
}
\end{table}

\begin{figure}[h!]
\begin{center}
\includegraphics[width = \textwidth]{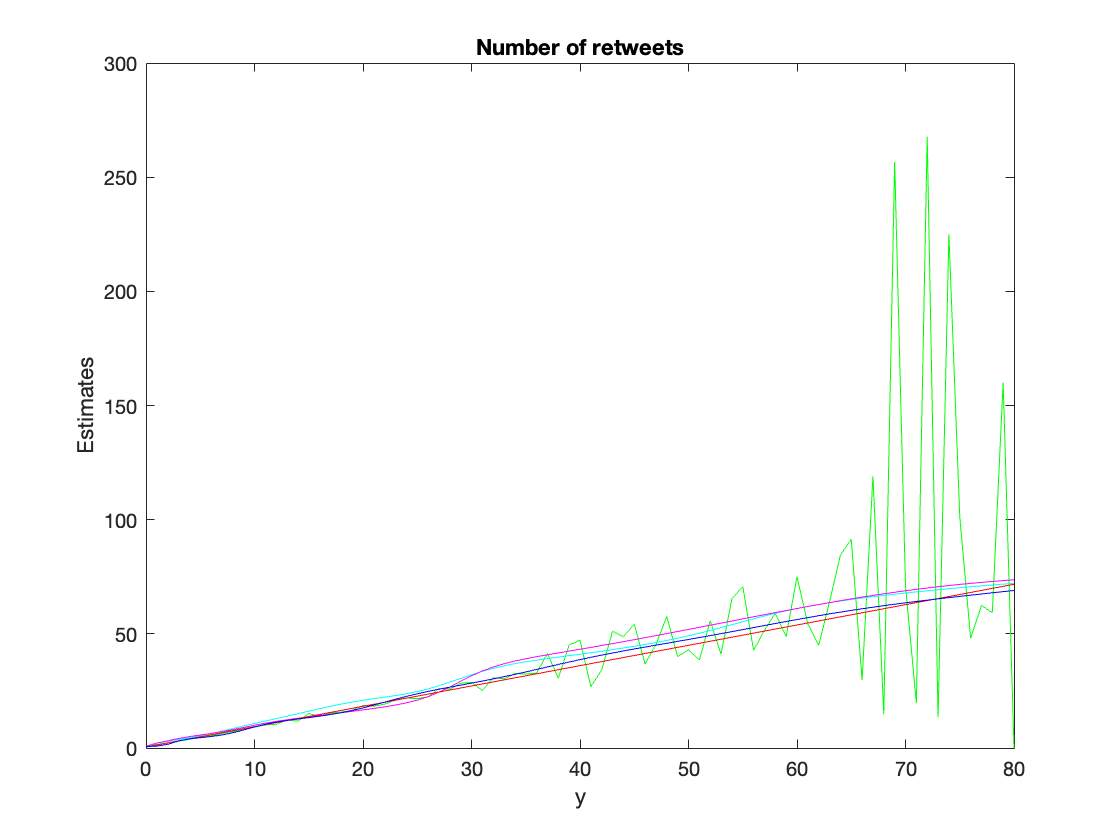}
\end{center}
\caption{\scriptsize{Estimates of the number of retweets expected for a tweet with $y$ retweets (in 30 seconds): Robbins (green line), ML (cyan line), MHD (magenta line), P-EB (red line) and QB (blue line).}}%The figures are produced computing the intervals for 50 values of $m$ evenly distributed over the interval $[0, 5n]$.}}
\label{tweet_fig1}
\end{figure}

\begin{table}[ht]
\centering
\caption{Counts $n_y$ of the number of tweets that have generated $y$ followers within $30$ seconds from the tweet post time, for $y=0,1,\ldots,450$, and corresponding estimates of the expected number of followers.}
\label{tweet_tab_fol1}
{
\setlength{\tabcolsep}{0pt}

\begin{tabular}{@{}l@{\hspace{0.8cm}}*{8}{>{\centering\arraybackslash}p{1.55cm}}@{}}
\hline
\hline
$y$    & 0 & 30 & 60 & 90 & 120 & 150 & 180 & 210 \\[0.1cm]
$n_y$  & 40,287 & 67 & 87 & 83 & 98 & 82 & 84 & 63 \\[0.1cm]
\hline
Robbins  & 0.002 & 31.462 & 55.391 & 104.163 & 101.252 & 147.324 & 178.851 & 247.843 \\
ML  & 0.001 & 31.671 & 61.103 & 91.012 & 120.991 & 149.462 & 180.194 & 210.792 \\
MHD  & 0.782 & 29.933 & 60.621 & 93.261 & 121.192 & 148.303 & 181.521 & 213.794 \\[0.2cm]
QB  & 0.541 & 31.414 & 60.902 & 91.193 & 120.681 & 150.894 & 180.312 & 210.853 \\[0.05cm]
\hline
\end{tabular}

\par\medskip

\begin{tabular}{@{}l@{\hspace{0.8cm}}*{8}{>{\centering\arraybackslash}p{1.55cm}}@{}}
\hline
$y$    & 240 & 270 & 300 & 330 & 360 & 390 & 420 & 450 \\[0.1cm]
$n_y$  & 85 & 76 & 65 & 81 & 81 & 63 & 75 & 71 \\[0.1cm]
\hline
Robbins  & 275.023 & 274.571 & 342.684 & 314.652 & 352.094 & 378.592 & 370.481 & 330.314 \\
ML  & 240.812 & 270.901 & 300.963 & 330.974 & 361.002 & 391.053 & 421.044 & 450.462 \\
MHD  & 244.113 & 270.944 & 300.951 & 330.982 & 360.984 & 390.963 & 420.814 & 449.933 \\[0.2cm]
QB  & 240.402 & 271.034 & 300.572 & 330.713 & 360.281 & 390.421 & 418.993 & 442.982 \\[0.05cm]
\hline
\hline
\end{tabular}
}
\end{table}

\begin{figure}[h!]
\begin{center}
\includegraphics[width = \textwidth]{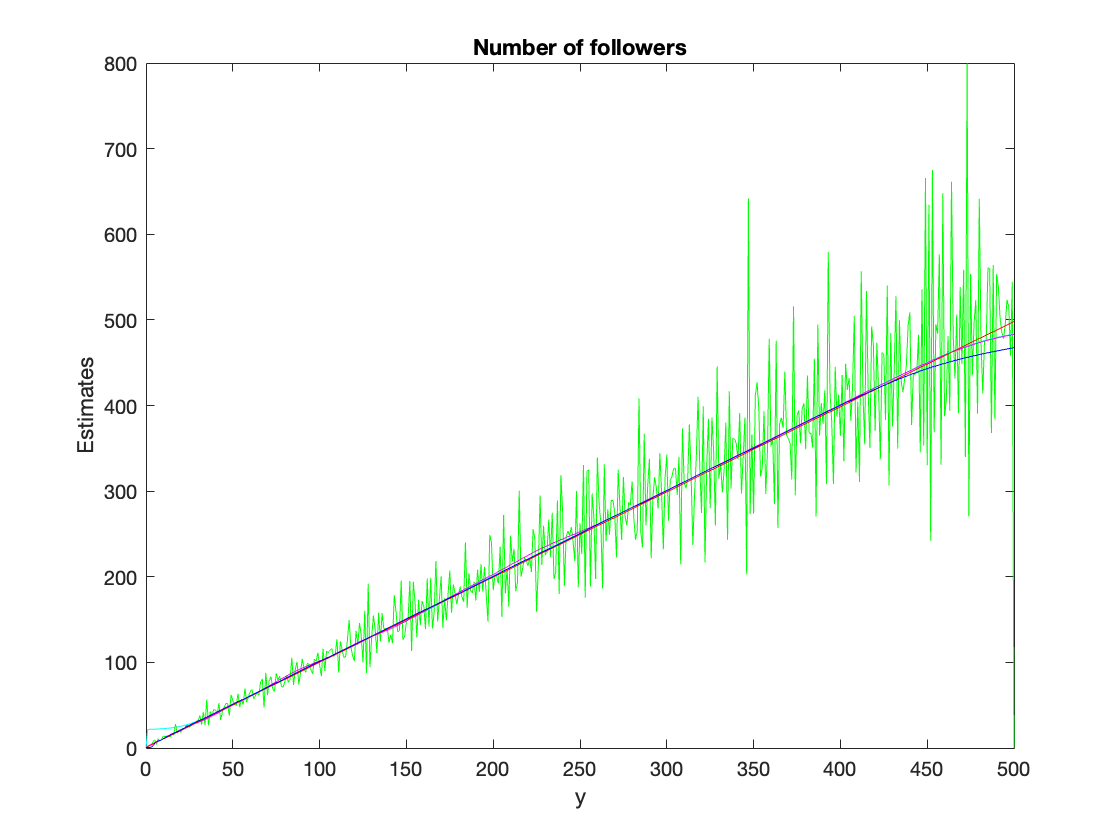}
\end{center}
\caption{\scriptsize{Estimates of the number of followers expected to be generated by a tweet whose retweets have $y$ followers (in 30 seconds): Robbins (green line), ML (cyan line), MHD (magenta line) and QB (blue line).}}%The figures are produced computing the intervals for 50 values of $m$ evenly distributed over the interval $[0, 5n]$.}}
\label{tweet_fig_fol1}
\end{figure}

\section{Numerical illustrations: auto accident dataset}\label{app5_num3}

We consider the auto accident data of \citet[Table 6.1]{Efr(21)}. This is a benchmark dataset in empirical Bayes. These are one year of claims data for a European automobile insurance company on $9461$ automobile insurance policy holders. From the first two rows of Table \ref{efron_tab}, $7840$ of the $9461$ policy holders made no claims during the year, $1317$ made a single claim, $239$ made two claims each, etc., continuing to the one person who made seven claims. Interest is in estimating the number of claims expected in a succeeding year by each automobile insurance policy holder. Table \ref{efron_tab} displays the estimates of the number of claims expected. 

\begin{table}[ht]
\centering
\caption{Counts $n_y$ of number of claims $y$ made in a single year, for $y=0,1,\ldots,7$, and corresponding estimates of the expected number of claims.}
\label{efron_tab}
{
\setlength{\tabcolsep}{0pt}
\begin{tabular}{@{}l@{\hspace{0.8cm}}*{8}{>{\centering\arraybackslash}p{1.55cm}}@{}}
\hline
\hline
Claims $y$ & 0 & 1 & 2 & 3 & 4 & 5 & 6 & 7 \\[0.1cm]
Counts $n_y$ & 7,840 & 1,317 & 239 & 42 & 14 & 4 & 4 & 1 \\[0.1cm]
\hline
Robbins & 0.173 & 0.361 & 0.534 & 1.333 & 1.432 & 6.004 & 1.752 & 0.001 \\
ML & 0.334 & 1.391 & 2.313 & 3.392 & 4.301 & 4.934 & 5.362 & 5.691 \\
MHD & 0.154 & 0.413 & 0.822 & 1.003 & 1.171 & 1.794 & 3.241 & 4.462 \\[0.2cm]
QB & 0.121 & 0.254 & 0.473 & 0.912 & 1.853 & 3.174 & 4.331 & 5.222 \\[0.1cm]
\hline
\hline
\end{tabular}
}
\end{table}
%%%%%%%%%%%%%%%%%%%%%%%%%%%%%%%%
%%%%%%%%%%%%%%%%%%%%%%%%%%%%%%%%
%%%%%%%%%%%%%%%%%%%%%%%%%%%%%%%%
%%%%%%%%%%%%%%%%%%%%%%%%%%%%%%%%

\section{Multidimensional extension} \label{sec:multiple}

Throughout, we denote by $k(\cdot\,|\,\theta)$ the Poisson kernel with mean $\theta>0$, i.e. $k(y\,|\,\theta)=\theta^{y}\text{e}^{-\theta}/y!$ for $y\in\mathbb{N}_{0}$. The results in Section  \ref{sec:quasi}  extend, with minor modifications, to the k-dimension coordinatewise
independent Poisson model.
  We assume that $\Yb=(\Yb_{1},\dots,\Yb_{k})$ is a $k$-dimensional random vector with independent coordinates $\Yb_{i}\sim \text{Poisson}(\thetab_{i})$, and denote
\begin{equation}
    \label{eq:kmultiple}
k(\mathbf y\mid \thetab)=\prod_{j=1}^ke^{-\thetab_j}\frac{\thetab_j^{\yb_j}}{\yb_j!}, \quad \yb=(\yb_1,\dots,\yb_k)\in\mathbb N_0^k, \thetab_j\in\Theta\; (j=1,\dots,k).
\end{equation}
Denote $\thetab=(\thetab_1,\dots,\thetab_k)\in\Theta^k$. 
If $\thetab$ has probability distribution $G$ on $\Theta^k$, then the marginal probability mass function of $\Yb=(\Yb_{1},\dots,\Yb_k)$ is
\begin{equation}\label{eq:marginalmultiple}
p_G(\yb)=\int_{\Theta^k}k(\yb\mid\thetab)G(\ddr\thetab).
\end{equation}
 Given a sequence $(\Yb_n)_{n\geq 1}$ of random vectors with independent coordinates $\Yb_{n,j}\sim\text{Poisson}(\thetab_{n,j})$, we consider the problem of estimating the unknown parameters $\thetab_{n,j}$. 
 
If we knew that the random vectors $(\thetab_{n})_{n\geq 1}$ were i.i.d. with a known probability distribution $G$ on $\Theta^k$, we could estimate $\thetab_{n,j}$, given that $\Yb_n=\yb$, with
\begin{equation}
    \label{eq:thetahatmultiple}
    \hat\thetab_{G,j}(\yb)
=\EE_g[\thetab_{n,j}|\Yb_n=\yb]
=(\yb_j+1)\frac{p_G(\yb+\eb_j)}{p_G(\yb)},
\end{equation}
where $\eb_j$ is a $k$-dimensional vector with $\eb_{j,i}=0$ for $i\neq j$ and $\eb_{j,j}=1$. 

Being $G$ unknown, we can estimate it recursively as 
\begin{equation}
    \label{eq:newtonmultiple}
G_{n+1}(\ddr\thetab)=(1-\alpha_{n+1})G_{n}(\ddr\thetab)+\alpha_{n+1}\frac{k(\Yb_{n+1}\mid\thetab)G_{n}(\ddr\thetab)}{\int_{\Theta^k}k(\Yb_{n+1}\mid\thetab)G_{n}(\ddr\thetab)},
\end{equation}
and adopt for $\thetab_{G,j}(\yb)$ the plug-in estimate
\begin{equation}\label{seq_estim2}
\hat{\thetab}_{G_{n},j}(\yb)=(\yb_j+1)\frac{p_{G_{n}}(\yb+\eb_j)}{p_{G_{n}}(\yb)}\qquad \yb\in\mathbb{N}_{0}^k,
\end{equation}
where
\begin{equation}
    \label{eq:pgnmultiple}
p_{G_n}(\yb)=\int_{\Theta^k}k(\yb\mid\thetab)G_n(\ddr\thetab).
\end{equation}
We now model the sequence of observations $(\Yb_n)$, assuming that, given the information available at time $n$, $\Yb_{n+1}$ follows the probability mass function $p_{G_n}$. In other words, denoting by $(\mathcal G_n)_{n\geq 0}$ the natural filtration of $(\Yb_{n})_{n\geq1}$, we assume that $\Yb_{n+1}\mid\mathcal G_n\sim p_{G_n}$.
Under this assumption, the sequence $(G_{n})_{n\geq0}$ is a martingale with respect to $(\mathcal G_n)_{n\geq0}$, converging $\PP$\,-\,a.s. as $n\rightarrow\infty$ to a random probability distribution $\tilde{G}$ on $\Theta^k$. In particular, for every $\yb\in\mathbb{N}_{0}^k$, as $n\rightarrow+\infty$
\begin{equation}\label{conv_est2}
\hat{\thetab}_{G_{n},j}(\yb)\rightarrow \hat{\thetab}_{\tilde{G},j}(\yb)=(\yb_j+1)\frac{p_{\tilde G}(\yb+\eb_j)}{p_{\tilde G}(\yb)}\qquad \PP\,\text{-}\,a.s.
\end{equation}

\begin{thm}\label{th:cltmultiple}
Let $(\alpha_n)_{n\geq1}$ be non increasing and satisfy $\sum_{n\geq 1}(\alpha_n^2/\sum_{k\geq n}\alpha_n^2)^2<+\infty$, and let $b_n=(\sum_{k\geq n}\alpha_k^2)^{-1}$.
For every $\yb\in\mathbb N_0^k$,  $b_n^{1/2}(  \hat{\thetab}_{G_{n}}(\yb)-\hat{\thetab}_{\tilde{G}}(\yb))$ converges, in the sense of almost-sure conditional convergence, as $n\rightarrow+\infty$, to a centered (i.e., zero-mean) Gaussian kernel with the random covariance matrix $\Wb_{\tilde G}(\yb)$, whose $(j,j')$ element is
\begin{align*}\label{eq:W2}
&\Wb_{\tilde G,j,j'}(\yb)\\&=\hat\thetab_{\tilde G,j}(\yb)\hat\thetab_{\tilde G,j'}(\yb)\sum_{z\in\mathbb N_0}p_{\tilde G}(z)
\left[   \left(\frac{p_{\tilde G}(\yb+\eb_j\mid z)}{p_{\tilde G}(\yb+\eb_j)}-\frac{p_{\tilde G}(\yb\mid z)}{p_{\tilde G}(\yb)}
\right)\left(\frac{p_{\tilde G}(\yb+\eb_{j'}\mid z)}{p_{\tilde G}(\yb+\eb_{j'})}-\frac{p_{\tilde G}(\yb\mid z)}{p_{\tilde G}(\yb)}
\right)\right].
\end{align*}
\end{thm}
The proof follows the same steps as the proof of Theorem \ref{th:clt2}. For $\alpha_n=(\alpha+n)^{-\gamma}$ with $\gamma\in (1/2,1]$, the theorem holds with $b_n=(2\gamma-1)/n^{2\gamma-1}$. In particular, with $\alpha_n=(1+n)^{-1}$, $b_n=n$.  %$\alpha_n=(1+n)^{-1}$, i.e. $b_n=\alpha_{n}^{-1}$. 
Theorem \ref{th:clt2} provides a tool to obtain asymptotic credible regions for $\hat{\thetab}_{\tilde G}(\yb)$, for $\yb\in\mathbb{N}_{0}^k$.

We now consider the Bayesian regret under a ``true"  Poisson compound model for $(\Yb_{n})_{n\geq1}$, under which, for any $n\geq1$: i) the distribution of $\Yb_{n}$ given $\thetab^{\ast}_{n}$ is a product of Poisson distributions with parameters $\theta^{\ast}_{n,j}$; ii) the $\thetab^{\ast}_{n}$'s are i.i.d. from the ``true" prior distribution, say the oracle prior $G^{\ast}$. By the same techniques used in the univariate case, the frequentist guarantees can be studied in this more general setting. The regret becomes
\begin{equation}\label{reg2}
\textsc{Regret}(G_n,G^*)=\sum_{\yb\in\mathbb{N}_{0}^k}||\hat\thetab_{G_n}(\yb)-\hat\thetab_{G^*}(\yb)||^2p_{G^*}(\yb).
\end{equation}
 The following result extends Theorem \ref{th:regret} to the multivariate setting.

  \begin{thm}\label{th:regretmultiple}
Let  $G^{\ast}$ be an oracle prior on $(\mathbb R^+)^k$, and, for each $n\geq1$, let $G_{n}$  be as in \eqref{eq:newtonmultiple} with $\Theta$ compact. Let $\overline{\mathcal S}$ be the weak closure of the class $\mathcal S$ of probability measure on $(\mathbb R^+)^k$ that are absolutely continuous with respect to $G_0$ and let
$G^\dag=\text{argmin}_{G\in\overline{\mathcal S}}\textsc{KL}(p_{G^*} \,\|\, p_{G})$, where $\textsc{KL}$ denotes the Kullback--Leibler  divergence and, for every $G$,  $p_G(\yb)=\int k(\yb\mid\thetab)G(\ddr\thetab)$.  Then,
$
\hat\thetab_{G_n}(\yb)\rightarrow \hat\thetab_{G^{\dag}}(\yb)$  for every $\yb\in\mathbb N_0^k$ and $\textsc{Regret}(G_n,G^\dag)\!=\!\sum_{\yb\in\mathbb N_0^k}||\hat\thetab_{G_n}(\yb)-\hat\thetab_{G^\dag}(\yb)||^2p_{G^*}(\yb)\rightarrow 0$,  $\PP_{G^*}$-a.s. In particular, if $G^*$ is absolutely continuous with respect to $G_0$, then $\textsc{Regret}(G_n,G^*)\rightarrow 0$,  $\PP_{G^*}$-a.s.
 \end{thm}
The proof is analogous to that of Theorem \ref{th:regret}.

%%%%%%%%%%%%%%%%%%%%%%%%%%%%%%%%
%%%%%%%%%%%%%%%%%%%%%%%%%%%%%%%%
%%%%%%%%%%%%%%%%%%%%%%%%%%%%%%%%
%%%%%%%%%%%%%%%%%%%%%%%%%%%%%%%%

\end{document}